\newenvironment{breakablealgorithm}
  {% \begin{breakablealgorithm}
   \begin{center}
     \refstepcounter{algorithm}% New algorithm
     \hrule height.8pt depth0pt \kern2pt% \@fs@pre for \@fs@ruled
     \renewcommand{\caption}[2][\relax]{% Make a new \caption
       {\raggedright\textbf{\ALG@name~\thealgorithm} ##2\par}%
       \ifx\relax##1\relax % #1 is \relax
         \addcontentsline{loa}{algorithm}{\protect\numberline{\thealgorithm}##2}%
       \else % #1 is not \relax
         \addcontentsline{loa}{algorithm}{\protect\numberline{\thealgorithm}##1}%
       \fi
       \kern2pt\hrule\kern2pt
     }
  }{% \end{breakablealgorithm}
     \kern2pt\hrule\relax% \@fs@post for \@fs@ruled
   \end{center}
  }
\setlist[enumerate]{leftmargin=.5in}
\setlist[itemize]{leftmargin=.5in}
\DeclareMathOperator{\im}{im}
\newcommand{\sett}[1]{\left\{#1\right\}}
\newcommand{\RR}{\mathbb{R}}
\newcommand{\CC}{\mathbb{C}}
\newcommand{\snake}[1]{\text{\url{#1}}}
\newcommand{\norm}[1]{||#1||}
\newcommand{\rank}{\text{rank}}
\newcommand{\tr}{\mathrm{tr}}
\renewcommand{\epsilon}{\varepsilon}
\crefname{hypothesis}{Hypothesis}{Hypotheses}
\title{$O(k)$-Equivariant Dimensionality Reduction on Stiefel Manifolds}
\author{
Andrew Lee\thanks{St. Thomas Aquinas College, Sparkill, NY (\email{alee@stac.edu}).}
\and Harlin Lee\thanks{University of North Carolina at Chapel Hill, Chapel Hill, NC 
(\email{harlin@unc.edu}).}
\and Jose A. Perea \thanks{Northeastern University, Boston, MA (\email{j.pereabenitez@northeastern.edu}).}
\and Nikolas Schonsheck\thanks{The Rockefeller University, New York, NY (\email{nschonsheck@rockefeller.edu}). Corresponding author.}
\and Madeleine Weinstein\thanks{University of Puget Sound, Tacoma, WA (\email{mweinstein@pugetsound.edu}).}
}
\definecolor{codegreen}{rgb}{0,0.6,0}
\definecolor{codegray}{rgb}{0.5,0.5,0.5}
\definecolor{codepurple}{rgb}{0.58,0,0.82}
\definecolor{backcolour}{rgb}{0.95,0.95,0.92}
\lstdefinestyle{mystyle}{
    backgroundcolor=\color{backcolour},   
    commentstyle=\color{codegreen},
    keywordstyle=\color{magenta},
    numberstyle=\tiny\color{codegray},
    stringstyle=\color{codepurple},
    basicstyle=\ttfamily\footnotesize,
    breakatwhitespace=false,         
    breaklines=true,                 
    captionpos=b,                    
    keepspaces=true,                 
    numbersep=5pt,                  
    showspaces=false,                
    showstringspaces=false,
    showtabs=false,                  
    tabsize=2
}
\begin{document}

\maketitle

\begin{abstract}
Many real-world datasets live on high-dimensional Stiefel and Grassmannian manifolds, $V_k(\mathbb{R}^N)$ and $Gr(k, \mathbb{R}^N)$ respectively, and benefit from projection onto lower-dimensional Stiefel and Grassmannian manifolds. In this work, we propose an algorithm called \textit{Principal Stiefel Coordinates (PSC)} to reduce data dimensionality from $ V_k(\mathbb{R}^N)$ to $V_k(\mathbb{R}^n)$ in an \textit{$O(k)$-equivariant} manner ($k \leq n \ll N$). We begin by observing that each element $\alpha \in V_n(\mathbb{R}^N)$ defines an isometric embedding of $V_k(\RR^n)$ into $V_k(\RR^N)$. Next, we describe two ways of finding a suitable embedding map $\alpha$: one via an extension of principal component analysis ($\alpha_{PCA}$), and one that further minimizes data fit error using gradient descent ($\alpha_{GD}$). Then, we define a continuous and $O(k)$-equivariant map $\pi_\alpha$ that acts as a ``closest point operator'' to project the data onto the image of $V_k(\RR^n)$ in $V_k(\mathbb{R}^N)$ under the embedding determined by $\alpha$, while minimizing distortion. Because this dimensionality reduction is $O(k)$-equivariant, these results extend to Grassmannian manifolds as well. Lastly, we show that  $\pi_{\alpha_{PCA}}$ globally minimizes projection error in a noiseless setting, while $\pi_{\alpha_{GD}}$ achieves a meaningfully different and improved outcome when the data does not lie exactly on the image of a linearly embedded lower-dimensional Stiefel manifold as above. Multiple numerical experiments using synthetic and real-world data are performed.
\end{abstract}

\begin{keywords}
Dimensionality reduction, Equivariance, Stiefel manifold, Grassmannian manifold, Manifold learning, Principal component analysis
\end{keywords}

\begin{MSCcodes}
55-08, 68W99, 53Z50
\end{MSCcodes}

\section{Introduction}
Stiefel manifolds are a natural setting for data sets appearing in a wide variety of disciplines, including image processing~\cite{turaga2011statistical}, sensor array processing~\cite{ramirez2022subspace}, bioinformatics~\cite{tian2021clustering}, and astronomy~\cite{jupp_mardia_directional_statistics}. In recent years, there has been broad interest in dimensionality reduction methods for manifold-valued data \cite{chami2021horoPCA,sungkyudryden2012nestedspheres,melba:2022:002:yang}. The work of \cite{perea2018multiscale} introduced Principal Projective Component Analysis, an adaptation of Principal Component Analysis (PCA) to the settings where $O(1)$ acts on $V_1(\mathbb{R}^n)$ by scalar multiplication (with quotient equal to the real projective space $\mathbb{R}\textbf{P}^{n-1}$) and where the unitary group $U(1)$ acts on $V_1(\mathbb{C}^n)$ via scalar multiplication (with quotient equal to the complex projective space $\mathbb{C}\textbf{P}^{n-1}$). In a similar vein, \cite{Perea_Polanco}  introduced Lens PCA, a modification of PCA suitable for lens spaces, which are the quotient spaces corresponding to the action of $\mathbb{Z}_q$ on $V_1(\mathbb{C}^n)$ given by scalar multiplication by powers of a primitive $q$-th root of unity. Generalizing methods such as PCA or multidimensional scaling to non-Euclidean spaces brings familiar tools to bear on these problems, while also incorporating information contained in the underlying geometry and topology of the data. In this article, we build on previous works by introducing a dimensionality reduction method suitable for all real Stiefel manifolds. 

A key feature of Stiefel manifolds is their symmetry with respect to the action of the orthogonal group, the quotient by which is a Grassmannian manifold. In order to produce a representation of the data that respects the symmetry of the underlying space, it is essential to perform dimensionality reduction in an \textit{equivariant} manner. That is, let $\mathbb{F}= \mathbb{R}$ or $\mathbb{C}$, and let $V_k(\mathbb{F}^N)$ be the Stiefel manifold of $k$-planes in $\mathbb{F}^N$. Let $G = \mathcal{O}(k)$ be the orthogonal group that acts via right multiplication on $V_k(\mathbb{F}^N)$. In this work, we describe a technique (see Algorithm \ref{algorithm_stiefel}) that produces an optimal projection $\pi: \mathcal{Y} \mapsto \tilde{\mathcal{Y}} \subset V_k(\mathbb{R}^n)$, for $n<N$ such that $\pi(y \cdot g)= \pi(y) \cdot g$ for $y \in \mathcal{Y}$ and $g \in G$. Because our methodology respects the orthogonal group action on Stiefel manifolds, i.e., is $O(k)$-equivariant, Algorithm \ref{algorithm_stiefel} can also be adapted to handle data living on Grassmannian manifolds, as detailed in Algorithm \ref{algorithm_grassmannian}.

This equivariance property of our projection map is a key feature not generally present in similar dimensionality reduction methods. While \cite{fan2022hyperbolic} produces embeddings of hyperbolic spaces equivariant with respect to the Lorentz group action, this seems to be a relatively recent development. The method of principal nested shape spaces in \cite{dryden2019shapespaces} works with Kendall's shape spaces, which are quotients of spheres by the action of orthogonal groups. However, their approach is not to construct an equivariant projection, but rather to consider only the subset of the so-called pre-shape sphere on which the projection map is a bijection. 

\subsection{Contributions and paper outline}

Here, we summarize our contributions.
\begin{enumerate}
    \item Algorithmically, we establish a dimensionality reduction algorithm, \textit{Principal Stiefel Coordinates (PSC)}, specifically for data living on Stiefel manifolds. Since our algorithm is $O(k)$-equivariant and data on Grassmanian manifolds can be easily lifted to and processed in Stiefel manifolds, our algorithm can be applied to a broader set of data compared to prior work on Grassmanian manifolds.
    \item Theoretically, we define a projection map $\pi_\alpha$ that maps data on a high-dimensional Stiefel manifold onto the image of an embedded low-dimensional Stiefel manifold, and prove its continuity and $O(k)$-equivariance. Furthermore, we show that when data lies \textit{exactly} on a lower-dimensional space, applying an extension of PCA leads to a mapping that \textit{globally} minimizes the projection error despite it being a non-convex optimization problem. However, this may not the best choice when the data does not follow these strict assumptions. In these cases, we demonstrate how an additional gradient descent step results in an improved outcome. 
    \item Experimentally, we demonstrate PSC on both low-dimensional and high-dimensional simulation data, as well as  brain connectivity matrices, videos, simulated neuronal stimulus space model data, and point clouds arising from vector bundles over various manifolds. We compare our results to existing methods such as multidimensional scaling (MDS), persistent cohomology, and principal geodesic analysis (PGA). Our Python code is available publicly on Github.
\end{enumerate}

In Section \ref{sec:prelim}, we specify our notation and provide background information on Stiefel manifolds, singular value decomposition and polar decomposition. In Section \ref{sec:projection}, we define the projection map $\pi_\alpha$ and prove that it is continuous (Theorem \ref{thm:continuous}), equivariant (Proposition \ref{prop:projection_map_equivariant}), and minimizes distortion of the data (Propositions \ref{prop:distanceminimizing} and \ref{prop:distanceminimizing2}). We also give a theoretical guarantee on the broad applicability of our pipeline (Section \ref{sec_alpha_transpose_y_full_rank}). In Section \ref{sec:optimization}, we describe two processes for finding a suitable $\alpha$: one via a PCA-type algorithm adapted to Stiefel manifolds ($\alpha_{PCA}$), and one that further minimizes data fit error using gradient descent ($\alpha_{GD}$). We then prove that $\pi_{\alpha_{PCA}}$ leads to a desirable mapping when the data lies exactly on a lower-dimensional space (Theorem \ref{thm:pca_global_max}). In Section \ref{sec:experiments}, we test our algorithm on various data sets and observe that $\pi_{\alpha_{GD}}$ achieves significantly improved outcomes when the data is nonlinearly generated, as in Sections \ref{sec:stimulus-space-model} and \ref{sec:gdnonlinear}.

\subsection{Main algorithms}\label{subsec_main_pipeline}

To outline the main mathematical contents of the paper, we summarize below the essential steps in our dimensionality reduction pipeline. While we freely use notation that will be defined later in the paper, we hope the following will be a useful reference for the reader.

\bigskip
\bigskip 

\begin{breakablealgorithm}\label{algorithm_stiefel}
\caption{Principal Stiefel Coordinates (PSC): Dimensionality reduction on Stiefel manifold}
\begin{algorithmic}[1]
    \STATE Suppose as given a data set $\mathcal{Y} \subset V_k(\RR^N)$ assumed to live near the image of a linearly-embedded lower-dimensional Stiefel manifold.
    \STATE Fix a user-chosen target dimension $n \ll N$ for the dimensionality reduction.
    \STATE For any $\alpha \in V_n(\RR^N)$, note that $\alpha$ defines an isometric embedding $V_k(\RR^n) \xrightarrow{\alpha} V_k(\RR^N)$; see Figure \ref{fig:pipeline_alpha} and Proposition \ref{prop:isometricembedding}. Let $\pi_\alpha \colon T_\alpha \to \im(\alpha)$ be the continuous, distance-minimizing, equivariant projection that contains a neighborhood of radius $\sqrt{2}$ about $\im(\alpha)$ constructed in Section \ref{sec:projection}.
    \STATE \label{step_warm_start} Construct $\alpha_{PCA} \in V_n(\RR^N)$ as in Definition \ref{def:pca}.

 \STATE Determine if any data points lie outside of $T_{\alpha_{PCA}}$ and remove them from consideration.

    \STATE \label{step_gradient_descent} Using $\alpha_{PCA}$ as the initialization, perform gradient descent on $V_n(\RR^N)$ to find
    \begin{equation}
        \alpha_{GD} \colon = \displaystyle\arg\min_{\alpha \in V_n(\RR^N)}\frac{1}{|\mathcal{Y}|}\sum_{y \in \mathcal{Y}} \|y - \pi_\alpha(y)\|_\mathrm{F}^2.\label{alphaGD}
    \end{equation}
    
 \STATE Determine if any data points lie outside of $T_{\alpha_{GD}}$ and remove them from consideration.

    \STATE \label{step_y_hat} For each $y \in \mathcal{Y}$, consider the point $\pi_{\alpha_{GD}}(y) = \alpha_{GD}(\hat{y}_{\alpha_{GD}})$ as detailed in Definition \ref{defn_projection_map}. Let $\pi_{\alpha_{GD}}(\mathcal{Y})$ be the set $\sett{\pi_{\alpha_{GD}}(y) \mid y \in \mathcal{Y}} \subset V_k(\RR^N)$.
    \STATE \label{step_final_projection} Obtain the representation $\pi_{\alpha_{GD}}(\mathcal{Y}) \subset \alpha_{GD}(V_k(\RR^n))$ of the data $\mathcal{Y}$ as a subset of the isometric image of $V_k(\RR^n)$ under $\alpha_{GD}$.
\end{algorithmic}
\end{breakablealgorithm}

\bigskip

\begin{figure}[t]
    \begin{subfigure}[b]{0.47\linewidth}
       \centering
 \includegraphics[width=\linewidth]{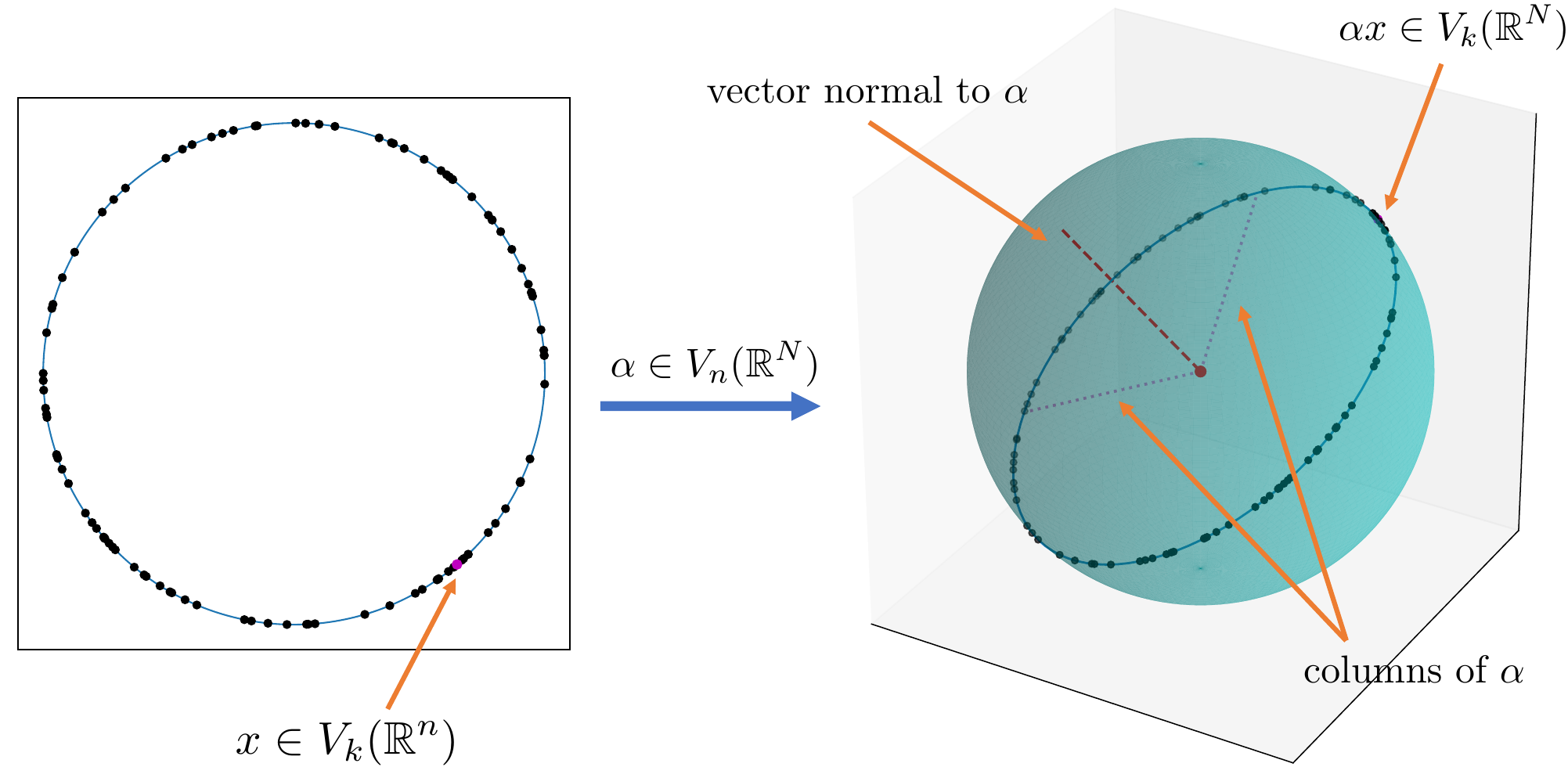}   
    \caption{$\alpha \in V_n(\mathbb{R}^N)$ defines an isometric embedding from $V_k(\mathbb{R}^n)$ to $V_k(\mathbb{R}^N)$. }
    \label{fig:pipeline_alpha}
    \end{subfigure}
\hfill
\begin{subfigure}[b]{0.47\linewidth}
    \includegraphics[width=\linewidth]{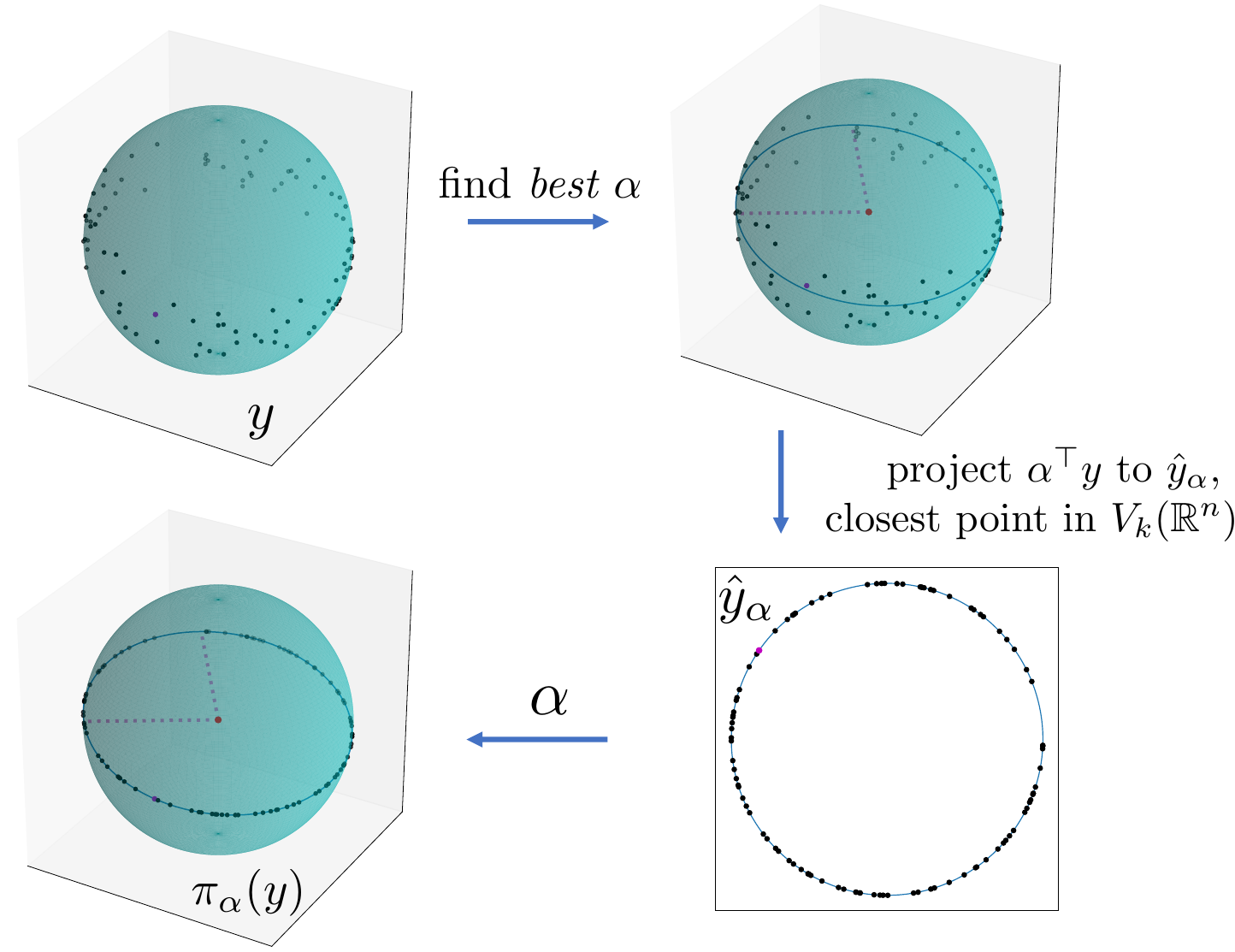}
    \caption{Outline of PSC Algorithm \ref{algorithm_stiefel}. }
    \label{fig:pipeline_overview}
\end{subfigure}
    \caption{Principal Stiefel Coordinates (PSC) illustrated with $k=1, n=2, N=3$. 
    $y, \pi_\alpha(y) \in V_k(\mathbb{R}^N), \alpha \in V_n(\mathbb{R}^N), \hat{y}_\alpha \in V_k(\mathbb{R}^n)$.}
\label{fig:alpha_embedding_and_pipeline_overview}
\end{figure}

To make the assumption in Step 1, above, more precise: we assume that $\mathcal{Y}$ is near $\alpha(V_k(\RR^n))$ for an unknown $\alpha$ and $n$. We offer suggestions on choosing $n$ in Section \ref{appendix_choosing_n} and the PSC algorithm finds an optimal $\alpha$.

\begin{remark}
    Steps 5 and 7 in Algorithm \ref{algorithm_stiefel} may remove data points from consideration. Because we prove in Proposition \ref{prop_tubular_neighborhood} that the domain $T_\alpha$ contains a neighborhood of radius $\sqrt{2}$ about $\im(\alpha)$, if a significant number of data points are removed by Steps 5 and 7, this suggests that the data under consideration does not lie near $\im(\alpha)$ and the PSC algorithm may not be appropriate for the analysis of such data. We also highlight that $T_\alpha$ may contain points outside of the neighborhood of radius $\sqrt{2}$ about $\im(\alpha)$ and we leave it to the user's discretion whether or not to include such points in their analysis.
\end{remark}

It is worth noting that from an implementation perspective, the image of a data point $y \in \mathcal{Y}$ under our dimensionality reduction pipeline is the point $\hat{y}_{\alpha_{GD}} \in V_k(\RR^n)$ as in Step \ref{step_y_hat}. That is, to recover data living on $V_k(\RR^n)$ from Algorithm \ref{algorithm_stiefel}, one would apply the left inverse $\alpha_{GD}^\top$ of $\alpha_{GD}$ to $\pi_{\alpha_{GD}}(\mathcal{Y})$, which simply recovers $\hat{y}_{\alpha_{GD}}$. Indeed, for any $y \in \mathcal{Y}$, we have
\begin{equation}
    \alpha_{GD}^\top \pi_{\alpha_{GD}}(y) = \alpha_{GD}^\top \alpha_{GD}(\hat{y}_{\alpha_{GD}}) = \hat{y}_{\alpha_{GD}}.
\end{equation}
However, from a conceptual standpoint, we find it useful to think of our pipeline as 1) embedding $V_k(\RR^n)$ in $V_k(\RR^N)$ via $\alpha_{GD}$, then 2) performing a projection from $V_k(\RR^N)$ to $\alpha_{GD}(V_k(\RR^n))$, and finally 3) ``pulling back'' to data in $V_k(\RR^n)$. Moreover, the gradient descent in Step \ref{step_gradient_descent} uses the projected points $\pi_\alpha(y) \in V_k(\RR^N)$ in an essential way. The right-hand side of Figure \ref{fig:alpha_embedding_and_pipeline_overview} gives an overview of our pipeline.

Because Algorithm \ref{algorithm_stiefel} uses an equivariant projection $\pi_{\alpha_{GD}}$, our technique can also be applied to data living on Grassmannian manifolds $Gr(k, \RR^N)$. In further detail, for any $y \in V_k(\RR^N)$ and $g \in O(k)$, it follows from Proposition \ref{prop:projection_map_equivariant} that $\pi_{\alpha_{GD}}(y)g = \pi_{\alpha_{GD}}(yg)$. Hence, we have that
\begin{equation}
    \alpha_{GD}^\top\pi_{\alpha_{GD}}(y)g = \alpha_{GD}^\top\pi_{\alpha_{GD}}(yg).
\end{equation}

This shows that the algorithm below is well-defined.\\

\begin{breakablealgorithm}\label{algorithm_grassmannian}
\caption{Dimensionality reduction for Grassmannian manifolds}
    \begin{algorithmic}[1]
    \STATE Suppose as given a data set $\bar{\mathcal{Y}} \subset Gr(k, \RR^N)$ and assume that $\bar{\mathcal{Y}}$ lives near the image of a linearly-embedded lower-dimensional Grassmannian manifold.
    \STATE For each $\bar{y} \in \bar{\mathcal{Y}}$, choose an orthonormal basis of $\bar{y}$ to obtain an associated point $y \in V_k(\RR^n)$, and let $\mathcal{Y} = \sett{y \mid \bar{y} \in \bar{\mathcal{Y}}}$. Note that any two orthonormal bases $y$ and $y'$ are related by an element of $O(k)$. That is, $y' = yg$ for some $g \in O(k)$.
    \STATE Perform Algorithm \ref{algorithm_stiefel} on the set $\mathcal{Y}$ to obtain $\mathcal{X} \subset V_k(\RR^n)$.
    \STATE Project $\mathcal{X} \subset V_k(\RR^n)$ to $Gr(k, \RR^n)$ via the canonical quotient map $V_k(\RR^n) \to Gr(k, \RR^n)$ to obtain the set $\bar{\mathcal{X}} \subset Gr(k, \RR^n)$.
    \STATE Recover a representation $\bar{\mathcal{X}}$ of the original data as a subset of $Gr(k, \RR^n)$. 
    \end{algorithmic}
\end{breakablealgorithm}

\medskip

\section{Preliminaries}\label{sec:prelim}

This section is devoted to fixing notation and summarizing standard results that will be used throughout the rest of the paper. For a handy reference, Table \ref{tab:notations} details our most commonly used notation. 

\subsection{Notations}

The notation $\RR^{s \times t}$ denotes an $s \times t$ matrix with entries in $\mathbb{R}$, $I_s$ an $s \times s$ identity matrix, and $0_{s \times t}$ an all-zero matrix of size $s \times t$. We use $A^\top$ and $A^*$ to denote the transpose and hermitian of the matrix $A$, respectively. The trace of $A$ is $\tr(A)$, $\|A\|_\mathrm{F} $ is the Frobenius norm, $\|A\|_{*} $ is the nuclear norm, and $\|A\|_{op}$ is the operator norm, i.e. the largest singular value. $|S|$ is the cardinality of the set $S$, and $\im(A)$ is the image of $A$. 

\begin{table}[ht]
\footnotesize
    \centering
    \begin{tabular}{l p{11cm}}
    \toprule
        \textbf{Notation} & \textbf{Meaning} \\ \midrule
       $k,~n,~N$  & Data dimensions $0< k \leq n \ll N$. \\
      $V_{t}(\RR^s)$& Stiefel manifold of orthonormal $t$-frames in $\RR^s$ (c.f. \ref{defn:stiefel}). \\
      $Gr(t, \RR^s)$& Grassmanian manifold of $t$-dimensional subspaces in $\RR^s$. \\      
      $O(t)$ & Orthogonal group in $\RR^t $. \\
      $x,~ \mathcal{X}$ & Data points on $V_{k}(\RR^n)$, $x \in \mathcal{X}$.\\
      $y,~ \mathcal{Y}$ & Data points on $V_{k}(\RR^N)$, $y \in \mathcal{Y}$.\\
      $\alpha$ & A point in $V_{n}(\RR^N)$ (c.f. \ref{prop:isometricembedding}).\\
    $\alpha_{PCA}$ & The warm-start embedding used in Step \ref{step_warm_start} of Algorithm \ref{algorithm_stiefel} (c.f. \ref{def:pca}).\\
    $\alpha_{GD}$ & The embedding produced through gradient descent in Step \ref{step_gradient_descent} of Algorithm \ref{algorithm_stiefel}. \\      
      $\pi_\alpha$ & A projection map from a suitable subset of $V_{k}(\RR^N)$ to $\alpha(V_{k}(\RR^n))$ (c.f. \ref{defn_projection_map}).\\
      $\hat{y}_\alpha$ & A point on $V_{k}(\RR^n)$ that is closest to $\alpha^\top y$ (c.f. \ref{defn_projection_map}).\\
      \bottomrule
    \end{tabular}
    \caption{Frequently referenced notations.}
    \label{tab:notations}
\end{table}

\subsection{Stiefel manifolds}

For these definitions, we use arbitrary dimensions $0 < t < s$. 
\begin{definition}[Stiefel manifold] \label{defn:stiefel}

An $s\times t$ matrix $A$ is an element of the \textbf{Stiefel manifold} $V_t(\RR^s)$ if and only if $A^\top A=I_t$.
\end{definition}

The $V_t(\RR^s)$ Stiefel manifolds are by definition embedded in the Euclidean space of matrices $\RR^{s\times t}$ and inherit the following distance. 

\begin{definition}\label{def:distance}
For $P,Q\in \RR^{s \times t}$, define the \textbf{Frobenius inner product} to be
\begin{equation}
    \langle P,Q\rangle_{F} = \tr(P^\top Q). \label{IP1}
\end{equation}
and let $D(P,Q) = \|P-Q\|_\mathrm{F}$ be the induced distance.

Also let the \textbf{nuclear norm} of a matrix $A$ be the sum of all singular values of $A$, i.e. $\|A\|_*=\tr(\Sigma_r)$,

where $\Sigma_r$ is defined in \ref{defn:svd}.

\end{definition}

Any Stiefel manifold $V_t(\RR^s)$ has a left (resp. right) action of $O(s)$ (resp. $O(t)$) given by matrix multiplication, and this action is an isometry with respect to $D$.

\subsection{Matrix decompositions}
Our method makes frequent use of the singular value and polar decomposition of matrices, recorded below for the sake of completeness.

\begin{definition}[Singular value decomposition (SVD)] \label{defn:svd}
For any matrix $A \in \RR^{s\times t}$, there exists a set of matrices such that
\begin{equation}
    A = P \begin{bmatrix}
\Sigma_r & 0\\
0 & 0_{s-r, t-r}
\end{bmatrix} Q^\top
\end{equation}
where $r \le \min\{s, t\}$ is the rank of $A$, $P \in \RR^{s \times s}, Q \in \RR^{t \times t}, P^\top P = I_s, Q^\top Q = I_t$, and $\Sigma_r \in \RR^{r \times r}$ is a diagonal matrix with nonnegative real numbers on the diagonal. 

\end{definition}
We note that while one matrix may have many forms of SVD, the set of singular values for a matrix $A$ is unique. Related to the singular value decomposition of a matrix is its polar decomposition \cite[Theorem 8.1, p. 193]{Higham_functions_of_matrices}.

\begin{theorem}\label{thm_form_of_polar} (Polar decomposition)
Let $A \in \RR^{s\times t}$ with $s \geq t$ be full rank. There exists a matrix $U \in \RR^{s \times t}$ with orthonormal columns and a unique self-adjoint positive semidefinite matrix $H \in \RR^{t \times t}$ such that $A = UH$. The matrix $H$ is given by $H = (A^\top A)^{1/2}$. 

\end{theorem}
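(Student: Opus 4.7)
The plan is to deduce polar decomposition directly from SVD, which is already stated as Definition \ref{defn:svd}. Since $A \in \RR^{s \times t}$ is full rank with $s \geq t$, its rank is exactly $r = t$, so SVD yields
\begin{equation}
A = P \begin{bmatrix} \Sigma_t \\ 0_{s-t,\,t} \end{bmatrix} Q^\top,
\end{equation}
where $P \in O(s)$, $Q \in O(t)$, and $\Sigma_t \in \RR^{t \times t}$ is diagonal with strictly positive diagonal entries. Partitioning $P = [P_1 \mid P_2]$ with $P_1 \in \RR^{s \times t}$, this simplifies to $A = P_1 \Sigma_t Q^\top$.

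Next I would insert $Q Q^\top = I_t$ between the factors to regroup as
\begin{equation}
A = (P_1 Q^\top)(Q \Sigma_t Q^\top) \;=\; U H,
\end{equation}
with $U := P_1 Q^\top$ and $H := Q \Sigma_t Q^\top$. A short check gives $U^\top U = Q P_1^\top P_1 Q^\top = Q Q^\top = I_t$, so $U \in V_t(\RR^s)$; and $H$ is manifestly symmetric with strictly positive eigenvalues (the diagonal entries of $\Sigma_t$), hence positive semidefinite (in fact positive definite). To identify $H$ with $(A^\top A)^{1/2}$, I would compute
\begin{equation}
A^\top A = H^\top U^\top U H = H^2,
\end{equation}
and then invoke uniqueness of the positive semidefinite square root of a positive semidefinite matrix, which forces $H = (A^\top A)^{1/2}$.

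For the uniqueness claim, suppose $A = U_1 H_1 = U_2 H_2$ are two such decompositions. Then $A^\top A = H_i^\top U_i^\top U_i H_i = H_i^2$ for $i = 1,2$, so $H_1^2 = H_2^2$; again by the uniqueness of the PSD square root, $H_1 = H_2$. (I would not claim uniqueness of $U$, since the theorem statement only asserts uniqueness of $H$; indeed, when $A$ is full rank, $H$ is invertible and so $U = A H^{-1}$ is also determined, but this is not required.)

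The only nontrivial ingredient is the existence and uniqueness of the PSD square root of a PSD matrix, which is standard and may be cited. Beyond that, the proof is essentially a rebracketing of the SVD, so I do not anticipate a real obstacle.
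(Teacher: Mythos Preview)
Your argument is correct and is the standard derivation of polar decomposition from SVD. Note, however, that the paper does not actually prove this theorem: it simply cites \cite[Theorem 8.1, p.~193]{Higham_functions_of_matrices} and states the result without proof. So there is no ``paper's own proof'' to compare against; your proposal supplies a self-contained proof where the paper opted for a reference.
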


\section{Projection \texorpdfstring{$\pi_{\alpha}$}{pi\_a}}\label{sec:projection}

This section defines and proves several key properties of the projection $\pi_\alpha$ referenced in Step 3 of Algorithm \ref{algorithm_stiefel}. Its main result is Theorem \ref{thm_main_thm_equivariant_projection}, which uses the following Definition.

\begin{definition}\label{def:isometric_embedding}
    Let $\alpha \in V_n(\RR^N)$. By slight abuse of notation, we let $\alpha \colon V_k(\RR^n) \to V_k(\RR^N)$ be the map given by left matrix multiplication by $\alpha$.
\end{definition}

Proposition \ref{prop:isometricembedding} shows that this map is, in fact, an isometric embedding. We now define the projection associated to this map.

\begin{definition}\label{defn_projection_map} (Projection.)
Fix $\alpha \in V_n(\RR^N)$. Let $L_\alpha= \{y \in V_k(\RR^N) ~~|~~ \rank(\alpha^\top y)<k \}$. Define \[ \pi_\alpha \colon V_k(\RR^N) \setminus L_\alpha \to \alpha(V_k(\RR^n)) \] as follows for an element $y$ in the domain. Let $\hat{y}_\alpha$ be the polar factor $U$ in the polar decomposition of $\alpha^\top y$. Then $\hat{y}_\alpha$ has orthonormal columns so $\hat{y}_\alpha \in V_k(\RR^n)$. Define $\pi_\alpha$ by $\pi_\alpha(y) := \alpha\hat{y}_\alpha.$
\end{definition}

\begin{remark}\label{rem_equiv_def_of_projection}
Note that on the set $V_k(\RR^N) \setminus L_\alpha$, we can define $\pi_\alpha$ equivalently as follows. For $y \in V_k(\RR^N) \setminus L_\alpha$, let $\alpha^\top y = P \begin{bmatrix}
\Sigma_k \\
0 
\end{bmatrix}Q^\top $ be any singular value decomposition of $\alpha^\top y$. Let $\hat{y}_\alpha$ be given by
\begin{equation}
    \hat{y}_\alpha = P \begin{bmatrix}
    I_k\\
    0
    \end{bmatrix}Q^\top 
\end{equation}  
and let $\pi_\alpha(y) = \alpha(\hat{y}_\alpha).$ By the uniqueness of the polar decomposition in the full rank case, $\hat{y}_\alpha$ is independent of the particular choice of singular value decomposition of $\alpha^\top y$.
\end{remark}

We now state the main result of this section, which is that the projection map has the desired properties.

\begin{theorem}\label{thm_main_thm_equivariant_projection}
    Fix an integer $n$ and element $\alpha \in V_n(\RR^N)$ and consider the associated isometric embedding $\alpha \colon V_k(\RR^n) \to V_k(\RR^N)$. Let $T_\alpha \subset V_k(\mathbb{R}^N)$ be the set of points $y \in V_k(\RR^N)$ such that $\rank(\alpha^\top y) = k$. Then $T_\alpha$ contains all points within $\sqrt{2}$ of $\im(\alpha)$ and there is a continuous, $O(k)$-equivariant projection $\pi_\alpha \colon  T_\alpha \to \alpha(V_k(\RR^n)) = \im(\alpha)$ such that for each fixed $y \in T_\alpha$, the following holds: 
\begin{equation}\label{eqn_distance_minimizing}
    D(y, \pi_\alpha(y)) \leq D(y,x) \text{ for all } x \in \alpha(V_k(\RR^n)).
\end{equation}
\end{theorem}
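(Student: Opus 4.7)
The plan is to decompose Theorem \ref{thm_main_thm_equivariant_projection} into four assertions --- (i) the domain $T_\alpha$ contains the $\sqrt{2}$-neighborhood of $\im(\alpha)$, (ii) $\pi_\alpha$ is continuous on $T_\alpha$, (iii) $\pi_\alpha$ is $O(k)$-equivariant, and (iv) $\pi_\alpha(y)$ realizes the closest point of $\im(\alpha)$ to $y$ --- and to derive each from a corresponding auxiliary result stated in Section \ref{sec:projection}.

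For (ii), I would invoke Theorem \ref{thm:continuous}, which gives continuity of $\pi_\alpha$ from continuity of the polar decomposition on the open set of full-rank $n\times k$ matrices. For (iii), the key observation, appealing to Proposition \ref{prop:projection_map_equivariant}, is that if $\alpha^\top y = UH$ is the polar decomposition then $\alpha^\top(yg) = (Ug)(g^\top H g)$ is also a polar decomposition (since $g^\top H g$ is self-adjoint positive semidefinite), so by uniqueness of the polar factor one has $\hat{(yg)}_\alpha = \hat{y}_\alpha g$. For (iv), I would expand $\|y - \alpha x\|_\mathrm{F}^2 = 2k - 2\tr(x^\top \alpha^\top y)$ using $\alpha^\top\alpha = I_n$ and appeal to the fact that $\max_{x \in V_k(\RR^n)} \tr(x^\top M) = \|M\|_*$ is attained at the orthogonal polar factor of $M = \alpha^\top y$, which is precisely $\hat{y}_\alpha$; this is the content of Propositions \ref{prop:distanceminimizing} and \ref{prop:distanceminimizing2}.

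The real work sits in (i), the tubular-neighborhood claim (Proposition \ref{prop_tubular_neighborhood}), which I would argue as follows. Suppose $x \in V_k(\RR^n)$ with $D(y, \alpha x) < \sqrt{2}$, and write $M = \alpha^\top y$; the identity above then gives $\tr(x^\top M) > k - 1$. Applying von Neumann's trace inequality to the pair of $n \times k$ matrices $x$ and $M$, and using $\sigma_i(x) = 1$ for all $i$, yields $\tr(x^\top M) \le \sum_{i=1}^k \sigma_i(M)$. On the other hand, since $\alpha^\top$ is a contraction and $\|y\|_{op} = 1$, every $\sigma_i(M) \le 1$. If $\sigma_k(M) = 0$ then $\sum_{i=1}^k \sigma_i(M) \le k - 1$, contradicting the previous strict inequality; hence $\rank(M) = k$ and $y \in T_\alpha$.

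The main obstacle is that the $\sqrt{2}$ radius is tight: the chain of inequalities only just excludes $\sigma_k(M) = 0$, so strict inequality must be propagated carefully throughout, and the rectangular form of von Neumann's inequality must be applied together with the operator-norm bound $\|\alpha^\top y\|_{op} \le 1$. Once (i) is in hand, assertions (ii)--(iv) assemble directly into the theorem statement.
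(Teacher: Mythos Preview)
Your decomposition into (i)--(iv) and the citations to Theorem \ref{thm:continuous}, Proposition \ref{prop:projection_map_equivariant}, and Propositions \ref{prop:distanceminimizing}--\ref{prop:distanceminimizing2} match the paper's own proof exactly, and your arguments for (ii)--(iv) are essentially the same as the paper's (the paper phrases equivariance via SVD rather than polar decomposition, but the two are interchangeable here).

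The interesting difference is in (i). The paper's Proposition \ref{prop_tubular_neighborhood}(b) argues the contrapositive geometrically: if $\alpha^\top y$ is rank-deficient, then $\mathrm{span}(y)$ meets $\mathrm{span}(\alpha)^\perp$ nontrivially, so after rotating $y$ by an element of $O(k)$ one may take the first $l$ columns of $y$ to lie in $\mathrm{span}(\alpha)^\perp$; each such column is then a unit vector orthogonal to the corresponding column of any $a \in \im(\alpha)$, forcing $\|y-a\|_\mathrm{F} \ge \sqrt{2l}$. Your argument instead combines von Neumann's trace inequality with the contraction bound $\sigma_i(\alpha^\top y) \le 1$ to get $\tr(x^\top \alpha^\top y) \le \sum_i \sigma_i(\alpha^\top y) \le k-1$ whenever $\sigma_k = 0$, contradicting $\tr(x^\top \alpha^\top y) > k-1$. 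Both are correct and yield the same quantitative refinement (rank deficit $l$ forces distance $\ge \sqrt{2l}$). The paper's route is more hands-on and explains geometrically where the $\sqrt{2}$ comes from; yours is cleaner, avoids the $O(k)$ change of coordinates, and stays entirely within the singular-value calculus already used in (iv).
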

\begin{proof}
The projection $\pi_\alpha$ is defined in Definition \ref{defn_projection_map}. Proposition \ref{prop_tubular_neighborhood} shows that the domain of $\pi_\alpha$ (as in Definition \ref{defn_projection_map}) includes the radius $\sqrt{2}$-neighborhood described above. The equivariance of $\pi_\alpha$ is Proposition \ref{prop:projection_map_equivariant}. The claim in \eqref{eqn_distance_minimizing} follows from Propositions \ref{prop:distanceminimizing} and \ref{prop:distanceminimizing2}. Continuity of the projection is Theorem \ref{thm:continuous}.
\end{proof}

Using Definition \ref{defn:stiefel} and invariance of trace under cyclic permutations, a straightforward calculation establishes the following.

\begin{proposition}\label{prop:isometricembedding} 
For each $\alpha \in V_n(\RR^N)$, the associated map from Definition \ref{def:isometric_embedding} is an isometric embedding (with respect to $D$) of $V_k(\RR^n)$ into $ V_k(\RR^N)$.
\end{proposition}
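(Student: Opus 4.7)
The plan is to verify the two defining properties of an isometric embedding directly from the matrix definitions, following the hint that it reduces to a short trace computation.

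First, I would confirm that the map is well-defined, i.e.\ that $\alpha x \in V_k(\RR^N)$ whenever $x \in V_k(\RR^n)$ and $\alpha \in V_n(\RR^N)$. By Definition \ref{defn:stiefel}, $\alpha^\top \alpha = I_n$ and $x^\top x = I_k$, so
\begin{equation}
(\alpha x)^\top (\alpha x) \;=\; x^\top \alpha^\top \alpha\, x \;=\; x^\top x \;=\; I_k,
\end{equation}
placing $\alpha x$ in $V_k(\RR^N)$.

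Next, for the isometry property, I would fix arbitrary $x_1, x_2 \in V_k(\RR^n)$ and expand $D(\alpha x_1, \alpha x_2)^2 = \|\alpha x_1 - \alpha x_2\|_\mathrm{F}^2$ using the Frobenius inner product \eqref{IP1}. Factoring out $\alpha$ and using the cyclic invariance of the trace together with $\alpha^\top \alpha = I_n$ gives
\begin{equation}
\|\alpha(x_1 - x_2)\|_\mathrm{F}^2 \;=\; \tr\!\bigl((x_1 - x_2)^\top \alpha^\top \alpha (x_1 - x_2)\bigr) \;=\; \tr\!\bigl((x_1 - x_2)^\top (x_1 - x_2)\bigr) \;=\; D(x_1, x_2)^2,
\end{equation}
so the map preserves distances.

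Finally, since distance preservation already implies injectivity ($\alpha x_1 = \alpha x_2$ forces $D(x_1, x_2) = 0$ and hence $x_1 = x_2$), and since $\alpha$ is continuous and $V_k(\RR^n)$ is compact, the map is automatically a homeomorphism onto its image and hence a topological embedding. I do not expect any genuine obstacle: the whole argument is a one-line trace calculation plus a compactness remark, which matches the excerpt's description of the proof as ``a straightforward calculation.''
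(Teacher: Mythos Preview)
Your argument is correct and matches the paper's own approach: the paper simply remarks that the result follows from Definition~\ref{defn:stiefel} and cyclic invariance of the trace, which is exactly the computation you wrote out. Your additional observations about well-definedness and the compactness argument for the embedding part are sound and just make explicit what the paper leaves implicit.
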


Figure \ref{fig:pipeline_alpha} shows a low-dimensional example of Proposition \ref{prop:isometricembedding}.
Based on this observation, we propose the following approach. We seek an ``optimal" embedding $\alpha$ that describes our data, then use that $\alpha$ to define an $O(k)$-equivariant projection $\pi_\alpha: \mathcal{Y} \mapsto \alpha(\mathcal{Y}) \subset \alpha(V_k(\mathbb{R}^n))$. Figure \ref{fig:pipeline_overview} outlines our algorithm.
We now define such a projection $\pi_\alpha$ for any given $\alpha$. In Section \ref{sec:optimization}, we discuss the problem of optimizing $\alpha$ over $V_n(\RR^N)$ given the dataset. 

The remainder of this section is devoted to proving that $\pi_\alpha$ is well-defined on a suitable neighborhood in $V_k(\RR^N)$ (Proposition \ref{prop_tubular_neighborhood}), that $\pi_\alpha$ is continuous on this domain (Proposition \ref{thm:continuous}), that $\pi_\alpha$ is equivariant (Proposition \ref{prop:projection_map_equivariant}) and that \eqref{eqn_distance_minimizing} holds (Propositions \ref{prop:distanceminimizing} and \ref{prop:distanceminimizing2}).

\begin{proposition}\label{prop_tubular_neighborhood}
Given $\alpha\in V_n(\RR^N)$, $x\in V_k(\RR^n)$, $y\in V_k(\RR^N)$, let $\im(\alpha)^\perp := \{y\in V_k(\RR^N)\,|\, \langle y_i, \alpha_j\rangle = 0\,\forall i,j\}$ so the columns $y_i$ of $y$ are pairwise orthogonal to the columns of $\alpha$ as vectors in $\RR^N$. We have that:
\begin{enumerate}[label=(\alph*)]
\item If $y\in \im(\alpha)^\perp$ then $\|y-\alpha x\|_\mathrm{F}=\sqrt{2k}$. 
\item If $\|y-\alpha x\|_\mathrm{F}<\sqrt{2}$ then $\alpha^\top y$ is full rank, and so the map $\pi_\alpha$ is well-defined. 
\end{enumerate}
Thus the map $\pi_\alpha$ is well-defined on all points at distance less than $\sqrt{2}$ from some element of $\im(\alpha)$.
\end{proposition}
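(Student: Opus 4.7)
The plan is to attack both parts through a single explicit expansion of the squared Frobenius norm using the defining identities $\alpha^\top \alpha = I_n$, $x^\top x = I_k$, and $y^\top y = I_k$. For any such $\alpha, x, y$,
\[
\|y-\alpha x\|_\mathrm{F}^2 \;=\; \tr(y^\top y) - 2\tr(x^\top\alpha^\top y) + \tr(x^\top\alpha^\top\alpha x) \;=\; 2k - 2\tr(x^\top\alpha^\top y).
\]
This one identity drives everything.

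For part (a), note that $y \in \im(\alpha)^\perp$ is equivalent to $\alpha^\top y = 0$, so the middle term vanishes and the squared norm equals $2k$, yielding $\|y-\alpha x\|_\mathrm{F} = \sqrt{2k}$.

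For part (b), the hypothesis $\|y-\alpha x\|_\mathrm{F}^2 < 2$ combined with the identity gives $\tr(x^\top\alpha^\top y) > k-1$. The key step, which I expect to be the main (mild) obstacle, is to upper-bound this trace by the nuclear norm $\|\alpha^\top y\|_*$. To do so, take an SVD $\alpha^\top y = P\Sigma Q^\top$, use cyclicity to rewrite $\tr(x^\top\alpha^\top y) = \tr(Z^\top \Sigma)$ where $Z := P^\top x Q$, and observe that $Z^\top Z = I_k$ since $x^\top x = I_k$, so $Z \in V_k(\RR^n)$. Since $\Sigma$ has singular values $\sigma_1,\dots,\sigma_k$ on its diagonal, $\tr(Z^\top\Sigma) = \sum_{j=1}^k Z_{jj}\,\sigma_j$, and the fact that the columns of $Z$ are orthonormal forces $|Z_{jj}| \le 1$, giving $\tr(x^\top\alpha^\top y) \le \sum_j \sigma_j = \|\alpha^\top y\|_*$.

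Next, since $\alpha$ and $y$ each have operator norm $1$, so does $\alpha^\top y$, so every singular value of $\alpha^\top y$ lies in $[0,1]$. If $\alpha^\top y$ were rank-deficient we would have $\sigma_k = 0$ and therefore $\|\alpha^\top y\|_* = \sum_{i=1}^{k-1} \sigma_i \le k-1$, contradicting $\|\alpha^\top y\|_* > k-1$. Hence $\alpha^\top y$ has full rank, so by Definition~\ref{defn_projection_map} the map $\pi_\alpha$ is defined at $y$. The concluding sentence of the proposition is then immediate: any $y$ at Frobenius distance less than $\sqrt{2}$ from some $\alpha x \in \im(\alpha)$ falls into the hypothesis of (b).
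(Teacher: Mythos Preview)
Your proof is correct. Part (a) matches the paper's argument exactly.

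For part (b), however, your route is genuinely different from the paper's. The paper argues the contrapositive geometrically: if $\alpha^\top y$ is rank-deficient, then $\mathrm{span}(\{y_i\})$ meets $\mathrm{span}(\{\alpha_j\})^\perp$ nontrivially, so after a right $O(k)$-rotation one may assume at least one column of $y$ is orthogonal to the column space of $\alpha$; that column alone contributes $\sqrt{2}$ to $\|y-\alpha x\|_\mathrm{F}$ for any $x$. Your argument is instead analytic: from the identity $\|y-\alpha x\|_\mathrm{F}^2 = 2k - 2\tr(x^\top\alpha^\top y)$ you extract the bound $\tr(x^\top\alpha^\top y) > k-1$, dominate this trace by $\|\alpha^\top y\|_*$ via an SVD, and then use $\|\alpha^\top y\|_{op}\le 1$ to force all $k$ singular values to be positive. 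Your approach is slightly slicker and, nicely, foreshadows the nuclear-norm reformulation of the cost function in Proposition~\ref{prop:rewriteopt}; the paper's approach is more overtly geometric and makes transparent exactly which columns of $y$ are responsible for the $\sqrt{2}$ obstruction. Both are short and self-contained.
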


\begin{proof}
First, note that for any $x \in V_t(\RR^s)$, we have $\|x\|_\mathrm{F}^2 = \|xx^\top\|_\mathrm{F}^2 = t$.

For (a), suppose that    $y\in \im(\alpha)^\perp$. Then, for $ x \in V_k(\RR^n)$,  
\begin{equation}
     D(y,\alpha x)=\|y - \alpha x \|_\mathrm{F}=\sqrt{\|y\|_\mathrm{F}^2 - 2 \langle y, \alpha x \rangle +\|\alpha x \|_\mathrm{F}^2} = \sqrt{k-2\cdot 0+k}.
\end{equation}

For (b), we prove the contrapositive: assuming that $\alpha^\top y$ is not full rank, we show that the distance is bounded below by $\sqrt{2}$. Let $\alpha_i$ denote the $i$th column of $\alpha$, and let $y_i$ denote the $i$th column of $y$. If $\alpha^\top y$ does not have full rank, then there must be some dependence relation among the column vectors $\sett{\alpha^\top y_1, \alpha^\top y_2, \ldots, \alpha^\top y_k}$, meaning that $\text{span}(\sett{y_i})$ intersects $\ker(\alpha^\top) = \text{span}(\sett{\alpha_i})^\perp$ non-trivially, i.e.,  $\text{span}(\{y_i\})\cap \text{span}(\{\alpha_i\})^\perp$ has nonzero dimension $l>0$. Up to right multiplication by an element of $O(k)$, we may assume the first $l$ columns of $y$ lie entirely in the orthogonal complement of \text{span}(\{$\alpha_i\})$.

In these coordinates, we may compute the distance between $y$ and any element $a\in \alpha(V_k(\mathbb{R}^n))$ as the sum of the squared norms of the columns of $y-a$. However, for $1\leq i \leq l$, we know that $y_i$ and $a_i$ are unit vectors that are orthogonal to one  another, and hence $||y_i-a_i|| = \sqrt{2}$. Thus, $D(y, a) = ||y - a||_\mathrm{F} \geq \sqrt{2}l \geq \sqrt{2}$.
\end{proof}

\begin{proposition}\label{thm:continuous}
    Fix $\alpha \in V_n(\mathbb{R}^N)$ and let 
    \begin{equation}
        \pi_\alpha \colon V_k(\mathbb{R}^N)\setminus L_\alpha \to \alpha(V_k(\mathbb{R}^n))
    \end{equation}
    be as in Definition \ref{defn_projection_map}. On this (restricted) domain, $\pi_\alpha$ is continuous.
\end{proposition}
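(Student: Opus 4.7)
The plan is to express $\pi_\alpha$ as a composition of three maps and argue continuity of each factor; the only nontrivial piece is the continuity of the polar factor on the open subset of full-rank matrices.

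More explicitly, I would decompose
\begin{equation}
    \pi_\alpha \colon y \;\longmapsto\; \alpha^\top y \;\longmapsto\; \hat{y}_\alpha \;\longmapsto\; \alpha \hat{y}_\alpha,
\end{equation}
where the first map $y \mapsto \alpha^\top y$ is a restriction of a fixed linear map $\RR^{N\times k}\to\RR^{n\times k}$ and is therefore continuous, and the third map $\hat{y}_\alpha\mapsto\alpha\hat{y}_\alpha$ is similarly continuous as left multiplication by $\alpha$. Because $y\in V_k(\RR^N)\setminus L_\alpha$ is precisely the condition that $\alpha^\top y$ is full rank (so that the polar decomposition factor is uniquely defined by Theorem \ref{thm_form_of_polar}), the middle map $A\mapsto U$ is defined on the open subset of $\RR^{n\times k}$ consisting of full-rank matrices.

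The key step is thus establishing that the polar factor $A\mapsto U$ is continuous on full-rank matrices. I would do this using the explicit formula from Theorem \ref{thm_form_of_polar}: $H = (A^\top A)^{1/2}$ and $U = AH^{-1}$. The map $A\mapsto A^\top A$ is continuous (entries are polynomials), and on full-rank $A$ the product $A^\top A$ is symmetric positive definite. The principal square root on the open cone of symmetric positive definite $k\times k$ matrices is continuous, since its eigenvalues are continuous functions of the matrix (bounded away from zero on this cone) and the square root can be expressed as a polynomial of the matrix whose coefficients depend continuously on those eigenvalues; equivalently, one can invoke the holomorphic functional calculus, choosing a contour around the spectrum that varies continuously with $A$. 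Hence $A\mapsto H$ is continuous, and since $H$ is invertible matrix inversion is continuous at $H$, so $U = AH^{-1}$ depends continuously on $A$.

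The main obstacle I anticipate is simply citing or justifying the continuity of the matrix square root cleanly; once that is in hand, the rest is a routine composition argument. No separate handling of the equivalent SVD formulation from Remark \ref{rem_equiv_def_of_projection} is needed, since uniqueness of the polar factor on full-rank matrices ensures that both definitions produce the same (continuous) assignment $y\mapsto \hat{y}_\alpha$, yielding continuity of $\pi_\alpha$ on all of $V_k(\RR^N)\setminus L_\alpha$.
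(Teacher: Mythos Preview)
Your proposal is correct and follows essentially the same approach as the paper: both express $\pi_\alpha(y)=\alpha\,(\alpha^\top y)\,H^{-1}$ with $H=(\alpha^\top y\,^\top\alpha^\top y)^{1/2}$ and conclude by composition of continuous maps. The only difference is cosmetic: the paper cites \cite[Theorem~8.9]{Higham_functions_of_matrices} for continuity of $A\mapsto H$, whereas you sketch this directly via continuity of the principal square root on positive definite matrices.
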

\begin{proof}
 
    Denote by $f$ the assignment of a matrix $A$ to the positive semidefinite matrix $H$ in the polar decomposition of $A$. It follows from, for instance, \cite[Theorem 8.9]{Higham_functions_of_matrices} that $f$ is continuous. Therefore, $\pi_\alpha$ is given by $y \mapsto \alpha \alpha^\top y ( f(\alpha^\top y)^{-1})$
    which is a composition of continuous functions. 
\end{proof}

\begin{proposition}\label{prop:projection_map_equivariant}
The map $\pi_\alpha$ of Definition \ref{defn_projection_map} is $O(k)$-equivariant. That is, for any $g \in O(k)$, we have $\pi_\alpha(yg) = \pi_\alpha(y)g$. Note that the group action is right matrix multiplication in both cases.
\end{proposition}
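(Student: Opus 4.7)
The plan is to use the definition of $\pi_\alpha$ via the polar decomposition, together with the uniqueness of the polar factor in the full-rank case, to transfer the right $O(k)$-action on $y$ through the decomposition in a controlled way.

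First I would observe that for $y \in V_k(\RR^N) \setminus L_\alpha$ and $g \in O(k)$, we have $\alpha^\top(yg) = (\alpha^\top y)g$, and since $g$ is invertible, $\operatorname{rank}(\alpha^\top yg) = \operatorname{rank}(\alpha^\top y) = k$. Thus $yg \in V_k(\RR^N) \setminus L_\alpha$, so $\pi_\alpha(yg)$ is well defined.

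Next, by Theorem \ref{thm_form_of_polar}, write the polar decomposition $\alpha^\top y = \hat{y}_\alpha H$, where $\hat{y}_\alpha$ has orthonormal columns and $H = (y^\top \alpha \alpha^\top y)^{1/2}$ is symmetric positive semidefinite (indeed positive definite on $V_k(\RR^N)\setminus L_\alpha$). Then
\begin{equation}
    \alpha^\top (yg) \;=\; \hat{y}_\alpha H g \;=\; (\hat{y}_\alpha g)(g^\top H g).
\end{equation}
The factor $\hat{y}_\alpha g$ has orthonormal columns, being the product of a matrix with orthonormal columns and an element of $O(k)$, and the factor $g^\top H g$ is symmetric positive definite because $H$ is and $g$ is orthogonal. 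By uniqueness of the polar decomposition in the full-rank case, this forces $\widehat{yg}_\alpha = \hat{y}_\alpha g$.

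Finally, I would conclude by applying the definition $\pi_\alpha(z) = \alpha \hat{z}_\alpha$: we have
\begin{equation}
    \pi_\alpha(yg) \;=\; \alpha\, \widehat{yg}_\alpha \;=\; \alpha\, \hat{y}_\alpha g \;=\; \pi_\alpha(y)\, g,
\end{equation}
as desired. There is no real obstacle here; the only point requiring care is invoking uniqueness of the polar factor, which relies on the assumption $y \notin L_\alpha$ to guarantee that $\alpha^\top y$ is full rank (hence $H$ is positive definite and the decomposition is unique).
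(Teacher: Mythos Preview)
Your proof is correct and follows essentially the same approach as the paper. The only cosmetic difference is that the paper uses the SVD characterization of $\hat{y}_\alpha$ from Remark \ref{rem_equiv_def_of_projection} (noting that an SVD of $\alpha^\top yg$ is obtained from one of $\alpha^\top y$ by replacing $Q$ with $g^\top Q$), whereas you work directly with the polar decomposition from Definition \ref{defn_projection_map}; both routes amount to the same uniqueness argument.
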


\begin{proof}

The proof follows from the singular value decomposition of $\alpha^\top y$ and $(\alpha^\top y)g$:
\begin{equation}
\alpha^\top y = P \begin{bmatrix}
\Sigma_k \\
0 
\end{bmatrix}Q^\top, \quad \quad \alpha^\top yg = P \begin{bmatrix}
\Sigma_k \\
0 
\end{bmatrix}(g^\top Q)^\top.  
\end{equation}
\end{proof}

The following two propositions together show that our projection minimizes the distance each $y$ travels to $\pi_\alpha(y)$ for a fixed $\alpha$, and thus minimizes ``distortion'' of the data.
\begin{proposition}\label{prop:distanceminimizing}
For a fixed $\alpha \in V_n(\RR^N)$ and distance D, let $y \in V_k(\RR^N)$ $\setminus L_\alpha$. As defined above, $\hat{y}_\alpha \in V_k(\RR^n)$ achieves the minimum distance between $\alpha^\top y \in \RR^{n \times k}$ and points in $V_k(\RR^n)$. 
\end{proposition}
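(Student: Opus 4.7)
The plan is to recognize this as the classical orthogonal Procrustes problem over $V_k(\RR^n)$ and reduce the minimization to an eigenvalue-type trace maximization solved by the polar factor. Setting $A = \alpha^\top y \in \RR^{n\times k}$, my first step is to expand
\begin{equation}
\|A - x\|_\mathrm{F}^2 = \|A\|_\mathrm{F}^2 - 2\langle A, x\rangle_\mathrm{F} + \|x\|_\mathrm{F}^2 = \|A\|_\mathrm{F}^2 - 2\tr(A^\top x) + k,
\end{equation}
where I used that $\|x\|_\mathrm{F}^2 = \tr(x^\top x) = \tr(I_k) = k$ for every $x \in V_k(\RR^n)$. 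Since $\|A\|_\mathrm{F}^2$ and $k$ do not depend on $x$, minimizing the distance $D(A,x)$ is equivalent to maximizing $\tr(A^\top x)$ over $x \in V_k(\RR^n)$.

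Next, I would use the SVD $A = P \begin{bmatrix} \Sigma_k \\ 0 \end{bmatrix} Q^\top$ guaranteed by Definition \ref{defn:svd}, where the full-rank hypothesis $y \notin L_\alpha$ ensures $\Sigma_k \in \RR^{k\times k}$ is diagonal with strictly positive entries $\sigma_1,\ldots,\sigma_k$, and $P \in O(n)$, $Q \in O(k)$. Substituting and using the cyclic invariance of trace, I would introduce the change of variables $Z := P^\top x Q$. A direct computation shows $Z^\top Z = Q^\top x^\top x Q = I_k$, so $x \mapsto Z$ is a bijection of $V_k(\RR^n)$ onto itself, and
\begin{equation}
\tr(A^\top x) = \tr\!\left( Q \begin{bmatrix} \Sigma_k & 0 \end{bmatrix} P^\top x \right) = \tr\!\left( \begin{bmatrix} \Sigma_k & 0 \end{bmatrix} Z \right) = \sum_{i=1}^k \sigma_i Z_{ii}.
\end{equation}

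The core observation is then that for any $Z \in V_k(\RR^n)$, each column has unit norm, so $|Z_{ii}| \leq 1$. Since every $\sigma_i > 0$, this gives the pointwise bound $\sum_{i=1}^k \sigma_i Z_{ii} \leq \sum_{i=1}^k \sigma_i$, with equality if and only if $Z_{ii} = 1$ for all $i$, which forces $Z = \begin{bmatrix} I_k \\ 0 \end{bmatrix}$. Translating back through $x = PZQ^\top$ yields the unique maximizer
\begin{equation}
x^* = P\begin{bmatrix} I_k \\ 0 \end{bmatrix}Q^\top,
\end{equation}
which is precisely $\hat{y}_\alpha$ by Remark \ref{rem_equiv_def_of_projection}.

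I don't anticipate a real obstacle here; the only subtle point is verifying that $Z \in V_k(\RR^n)$ (so that the diagonal entries are genuinely bounded by $1$) and invoking full-rankness of $\alpha^\top y$ to make the maximization have a unique optimizer rather than a degenerate family. The argument is standard orthogonal Procrustes; the proposition essentially repackages that classical result in the notation of $\hat{y}_\alpha$.
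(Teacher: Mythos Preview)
Your argument is correct and self-contained: the reduction to maximizing $\tr(A^\top x)$, the change of variables $Z = P^\top x Q$, the diagonal-entry bound $|Z_{ii}|\le 1$, and the identification of the unique maximizer with $\hat y_\alpha$ via Remark~\ref{rem_equiv_def_of_projection} all go through without issue. The full-rank hypothesis is used exactly where you say, to guarantee $\sigma_i>0$ and hence uniqueness.

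The paper takes a different route. Rather than computing directly, it invokes the Fan--Hoffman result (as recorded in \cite[8.4]{Higham_functions_of_matrices}) that the polar factor minimizes $\|A-Q\|$ over $Q^*Q=I_k$ for \emph{any} unitarily invariant norm on $\CC^{n\times k}$, and then observes that since the complex minimizer $\hat y_\alpha$ is in fact real, it also solves the real problem. So the paper's proof is a citation plus a real-vs-complex remark, while yours is an elementary Procrustes computation specific to the Frobenius norm. Your approach buys self-containment and transparency; the paper's buys generality (the same $\hat y_\alpha$ would minimize, e.g., the spectral or nuclear distance as well) at the cost of relying on an external reference.
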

\begin{proof}
It is shown in \cite[8.4]{Higham_functions_of_matrices}, where it is attributed to \cite{Fan_Hoffman_matrices}, that
\begin{equation}
    \hat{y}_\alpha = \arg\min_{Q \in \CC^{n\times k}}\sett{\norm{\alpha^\top y - Q} \mid Q^*Q = I_k}
\end{equation}
for any unitarily invariant norm. Moreover, when $\alpha^\top(y)$ has full rank, $\hat{y}_\alpha$ is unique and the argmin is a single point. Since
\begin{equation}
\min_{Q \in \CC^{n\times k}}\sett{\norm{\alpha^\top y - Q} \mid Q^*Q = I_k} \leq \min_{Q \in \RR^{n\times k}}\sett{\norm{\alpha^\top y - Q} \mid Q^*Q = I_k}
\end{equation}
and $\hat{y}_\alpha$ in fact has real entries, the fact that $D$ is unitarily invariant when viewed as a norm on $\CC^{n \times k}$ completes the argument.
A direct proof can be found in \cite{Kahan11}.
\end{proof}

\begin{proposition}\label{prop:distanceminimizing2}
Fix $\alpha \in V_n(\RR^N)$, and $y \in V_k(\RR^N)$. Let $x \in V_k(\RR^n)$ be a (not necessarily unique) point of minimal distance on  $V_k(\RR^n)$ to $\alpha^\top y \in \RR^{n\times k}$. 
Then a (not necessarily unique)  point of minimal distance on $\alpha(V_k(\RR^n))$ to $y$ is $\alpha x$.
\end{proposition}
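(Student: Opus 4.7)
The plan is to reduce the minimization problem on $\alpha(V_k(\RR^n))$ to the one on $V_k(\RR^n)$ already solved in Proposition \ref{prop:distanceminimizing}. The key idea is that the map $x' \mapsto \|y - \alpha x'\|_\mathrm{F}^2$ differs from the map $x' \mapsto \|\alpha^\top y - x'\|_\mathrm{F}^2$ only by a constant that does not depend on $x'$, so they share the same argmin on $V_k(\RR^n)$.

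To carry this out, I would first expand the squared Frobenius distance via the inner product in Definition \ref{def:distance}:
\begin{equation}
    \|y - \alpha x'\|_\mathrm{F}^2 = \|y\|_\mathrm{F}^2 - 2\langle y, \alpha x'\rangle_\mathrm{F} + \|\alpha x'\|_\mathrm{F}^2.
\end{equation}
Using $\alpha^\top \alpha = I_n$ and $(x')^\top x' = I_k$, a trace computation gives $\|\alpha x'\|_\mathrm{F}^2 = \tr((x')^\top \alpha^\top \alpha x') = \tr(I_k) = k$. By the cyclic property of trace, $\langle y, \alpha x'\rangle_\mathrm{F} = \tr(y^\top \alpha x') = \langle \alpha^\top y, x'\rangle_\mathrm{F}$. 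Similarly, expanding $\|\alpha^\top y - x'\|_\mathrm{F}^2$ yields $\|\alpha^\top y\|_\mathrm{F}^2 - 2\langle \alpha^\top y, x'\rangle_\mathrm{F} + k$.

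Subtracting these two expressions shows that
\begin{equation}
    \|y - \alpha x'\|_\mathrm{F}^2 - \|\alpha^\top y - x'\|_\mathrm{F}^2 = \|y\|_\mathrm{F}^2 - \|\alpha^\top y\|_\mathrm{F}^2,
\end{equation}
which is independent of $x' \in V_k(\RR^n)$. Consequently, any $x$ that minimizes $\|\alpha^\top y - x'\|_\mathrm{F}$ over $V_k(\RR^n)$ also minimizes $\|y - \alpha x'\|_\mathrm{F}$ over $V_k(\RR^n)$. Since the isometric embedding of Proposition \ref{prop:isometricembedding} puts $\alpha(V_k(\RR^n))$ in bijection with $V_k(\RR^n)$ via $x' \mapsto \alpha x'$, the point $\alpha x$ is a point of minimal distance on $\alpha(V_k(\RR^n))$ to $y$, which is what we wanted to show.

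There is no serious obstacle here: the proof is essentially a one-line observation once one expands both Frobenius distances and uses $\alpha^\top \alpha = I_n$. The only subtle bookkeeping is making sure the constant term $\|\alpha x'\|_\mathrm{F}^2 = k$ is truly constant, which relies on $\alpha$ being an element of the Stiefel manifold $V_n(\RR^N)$ rather than an arbitrary $N \times n$ matrix.
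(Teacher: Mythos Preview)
Your proof is correct and follows essentially the same approach as the paper: both expand the two squared Frobenius distances, use $\alpha^\top\alpha = I_n$ and $(x')^\top x' = I_k$ to see that the only non-constant term in each is $-2\langle \alpha^\top y, x'\rangle_\mathrm{F}$, and conclude that the two minimization problems share the same argmin. The paper's version is simply a terser presentation of the same computation.
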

\begin{proof}
The proof follows from the following calculation:
\begin{align*}
x = \arg\min_{u \in V_k(\RR^n)}~ \| \alpha^\top y - u\|_\mathrm{F} &= \arg\min_{u \in V_k(\RR^n)}~ \| \alpha^\top y\|_\mathrm{F}^2-2\langle \alpha^\top y, u \rangle + k \\
&= \arg\min_{u \in V_k(\RR^n)}~D(y, \alpha u).
\end{align*}
\end{proof}

\subsection{When \texorpdfstring{$n<<N$}{n<<N}, \texorpdfstring{$\alpha^\top y$}{a'y} is generically full rank}\label{sec_alpha_transpose_y_full_rank}

The purpose of this section is to prove Corollary \ref{codimcount}, which provides a guarantee on when the rank condition $\rank(\alpha^\top y) = k$ will be satisfied. Let $M(n,k)$ be the $\mathbb{R}$-vector space of $n\times k$ matrices. First, we recall some standard facts about tangent spaces to Stiefel manifolds (see, e.g., \cite{boumal2023intromanifolds}).

\begin{proposition}
The tangent space to the Stiefel manifold $V_k(\mathbb{R}^N)$ at a point $y$, denoted $T_y(V_k(\mathbb{R}^N))$, can be identified with matrices of the form
\begin{equation}
yC + y_\perp D \label{stiefeltangent}
\end{equation}
where $C$ is a $k\times k$ skew-symmetric matrix, $y_\perp$ is an $(N-k)\times k$ matrix such that $y_\perp y_\perp^\top$ is projection to the orthogonal complement to span$(y)$, and $D$ is any $(n-k)\times k$ matrix.
\end{proposition}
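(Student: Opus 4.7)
The plan is to first characterize $T_y V_k(\mathbb{R}^N)$ abstractly via the defining constraint, then re-express the resulting linear subspace in the explicit basis provided by $y$ and $y_\perp$.

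First, I would differentiate the defining equation. For a smooth curve $\gamma(t) \in V_k(\mathbb{R}^N)$ with $\gamma(0) = y$, the relation $\gamma(t)^\top \gamma(t) = I_k$ gives, at $t = 0$, that $\xi := \dot{\gamma}(0)$ satisfies $y^\top \xi + \xi^\top y = 0$; equivalently, $y^\top \xi$ is skew-symmetric. Conversely, any such $\xi$ is realized as a velocity of a curve through $y$, either by writing down an explicit retraction along $\xi$ or by applying the submersion theorem to $y \mapsto y^\top y - I_k$, whose differential at $y$ surjects onto symmetric $k \times k$ matrices. This yields
\begin{equation}
T_y V_k(\mathbb{R}^N) = \{\xi \in \mathbb{R}^{N \times k} : y^\top \xi \text{ is skew-symmetric}\}.
\end{equation}

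Next, I would pass to coordinates adapted to $y$ and $y_\perp$. Since $[\,y \mid y_\perp\,]$ is an orthonormal basis of $\mathbb{R}^N$, we have the resolution of identity $I_N = y y^\top + y_\perp y_\perp^\top$, so any $\xi \in \mathbb{R}^{N \times k}$ decomposes as
\begin{equation}
\xi = y(y^\top \xi) + y_\perp(y_\perp^\top \xi).
\end{equation}
Setting $C := y^\top \xi$ and $D := y_\perp^\top \xi$ recovers $\xi = yC + y_\perp D$, and the tangent condition on $\xi$ becomes precisely $C + C^\top = 0$, with $D$ unconstrained. The reverse inclusion is immediate: for any skew-symmetric $C$ and arbitrary $D$, one computes $y^\top(yC + y_\perp D) = C$, which is skew-symmetric, so $yC + y_\perp D$ lies in the tangent space.

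Finally, I would confirm that the parametrization $(C, D) \mapsto yC + y_\perp D$ is injective, which follows from applying $y^\top$ and $y_\perp^\top$ to read off $C$ and $D$ separately. A dimension count then shows the identification is an isomorphism: the skew-symmetric $k \times k$ matrices contribute $\frac{k(k-1)}{2}$ dimensions and the $D$ block contributes $(N-k)k$, summing to $Nk - \frac{k(k+1)}{2}$, which matches the codimension of the symmetric constraint $y^\top \xi + \xi^\top y = 0$. The only substantive step is the existence direction in the first paragraph (realizing each candidate $\xi$ as an honest velocity), and this is standard via the submersion theorem; once that is in hand, the remainder is routine linear algebra.
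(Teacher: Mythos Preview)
Your proof is correct and follows the standard approach (differentiate the constraint $y^\top y = I_k$, invoke the submersion theorem for the reverse inclusion, then decompose against the orthonormal frame $[\,y\mid y_\perp\,]$). The paper does not actually prove this proposition: it is stated as a standard fact with a reference to \cite{boumal2023intromanifolds}, and the argument you give is essentially the one found there. One minor remark: the statement as printed has $y_\perp$ of size $(N-k)\times k$ and $D$ of size $(n-k)\times k$, but for $y_\perp y_\perp^\top$ to be the projection onto $\mathrm{span}(y)^\perp$ in $\mathbb{R}^N$ one needs $y_\perp\in\mathbb{R}^{N\times(N-k)}$ and $D\in\mathbb{R}^{(N-k)\times k}$; your proof tacitly uses the correct dimensions, which is the intended reading.
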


\noindent It will be useful to consider variations in both $\alpha$ and $y$, so we define the map 

\begin{eqnarray}\label{mudefn}
\mu: V_n(\mathbb{R}^N) \times V_k(\mathbb{R}^n) & \to & M(n, k)\\
(\alpha, y) & \mapsto & \alpha^\top y.\nonumber
\end{eqnarray}

\noindent The determinant $\det(\mu(\alpha, y)^\top\cdot \mu(\alpha, y))$ is nonzero if and only if $\alpha^\top y$ satisfies the rank condition, so the set of such $(\alpha, y)$ is an open submanifold of the product $V_n(\mathbb{R}^N)\times V_k(\mathbb{R}^N)$. The same is true when restricting $\mu$ to specific $\alpha, y$. By uniform continuity of the determinant, given $\mathcal{Y}$ any compact set of points such that $\alpha_0^\top y$ is full rank for all $y\in\mathcal{Y}$, there is a neighborhood $U$ about $\alpha_0$ such that $\alpha^\top y$ is full rank for all $y\in \mathcal{Y}$ and $\alpha\in U$. However, we will show a more global result.

The complement of the set of full-rank $(\alpha, y)$ consists of those $y$ such that $\alpha^\top y$ has rank strictly less than $k$. We will directly compute the (positive) codimension of this set. For instance, if $\alpha^\top y$ has rank 0, this only occurs when $\mbox{span}(y)$ is orthogonal to $\mbox{span}(\alpha)$. The set of such $y$ is a Stiefel manifold $V_k(\mathbb(\mbox{Ran}(\alpha)^\perp)$, which certainly has dimension less than that of $V_k(\mathbb{R}^N)$.

For ranks $1,\dots, k-1$, we appeal to the following well-studied space, see e.g. \cite{harris1992algebraic} under the name of determinantal varieties, or alternately as the manifold of fixed-rank matrices
\begin{equation*}
M(r, M, N) := \{ X \in \mathbb{R}^{M\times N}\,|\, \rank(X) = r\}.
\end{equation*}

\noindent The following fact is well-known but appears as formulated below in \cite[Theorem 3.1]{schneideruschmajew}.

\begin{proposition}\label{fixedranktangent}
$M(r, M, N)$ is a smooth manifold of dimension $r(M+N-r)$. For $X\in M(r, M, k)\subset \mathbb{R}^{m\times k}$, let $U = \mbox{Ran}(X)$, and $V = \mbox{Ran}(X^\top)$. Then 
\begin{equation}
T_X(M(r, M, k)) = (U\otimes V) \oplus (U\otimes V^\perp) \oplus (U^\perp\otimes V) \subset \mathbb{R}^{m\times k}.\label{fixedranktangentdecomp}
\end{equation}
Similarly, the normal space can be identified with $U^\perp\otimes V^\perp$.
\end{proposition}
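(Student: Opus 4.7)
The plan is to first establish smooth manifold structure and dimension using a local normal form via the Schur complement, then derive the tangent space decomposition by differentiating a factorization $X = AB$, and finally obtain the normal space by taking the orthogonal complement in $\mathbb{R}^{M \times N}$.

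For the manifold structure, I would work locally around a fixed $X_0 \in M(r, M, N)$. After permuting rows and columns, we may assume that the top-left $r \times r$ block of $X_0$ is invertible. For $X$ near $X_0$ written in block form $X = \begin{pmatrix} A & B \\ C & D \end{pmatrix}$ with $A \in \mathbb{R}^{r \times r}$ invertible, the block factorization through the lower unitriangular matrix with $(2,1)$-entry $CA^{-1}$ yields $\rank(X) = \rank(A) + \rank(D - CA^{-1}B)$, so $\rank(X) = r$ if and only if the Schur complement vanishes, $D = CA^{-1}B$. This exhibits $M(r, M, N)$ locally as the graph of a smooth map in the free coordinates $(A, B, C)$, with dimension $r^2 + r(N-r) + (M-r)r = r(M+N-r)$ as claimed.

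For the tangent space, I would use the factorization map $\phi(A,B) = AB$ with $A \in \mathbb{R}^{M \times r}$ and $B \in \mathbb{R}^{r \times N}$ both of full rank $r$. The fibers of $\phi$ are precisely the orbits of the free and proper $GL(r)$-action $(A,B) \mapsto (AG^{-1}, GB)$, so $\phi$ is a smooth submersion onto $M(r, M, N)$. Fix $X = A_0 B_0$ and set $U = \operatorname{Ran}(A_0) = \operatorname{Ran}(X)$, $V = \operatorname{Ran}(B_0^\top) = \operatorname{Ran}(X^\top)$. Differentiating along a curve $(A_0 + t\dot A, B_0 + t\dot B)$ yields tangent vectors $\dot X = \dot A\, B_0 + A_0\, \dot B$. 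The first summand has row space contained in $V$, and the second has column space contained in $U$, so every tangent vector lies in $(\mathbb{R}^M \otimes V) + (U \otimes \mathbb{R}^N)$. Decomposing with respect to $\mathbb{R}^M = U \oplus U^\perp$ and $\mathbb{R}^N = V \oplus V^\perp$ expresses this sum as $(U \otimes V) \oplus (U^\perp \otimes V) \oplus (U \otimes V^\perp)$, whose dimension $r^2 + (M-r)r + r(N-r) = r(M+N-r)$ matches the manifold dimension, forcing equality.

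For the normal space claim, the orthogonal decomposition
\[
\mathbb{R}^{M \times N} \cong \mathbb{R}^M \otimes \mathbb{R}^N = (U \oplus U^\perp) \otimes (V \oplus V^\perp)
\]
splits into four mutually orthogonal summands, three of which comprise $T_X M(r, M, N)$ by the previous paragraph. The remaining summand $U^\perp \otimes V^\perp$ is therefore the normal space. The main obstacle in this plan is checking that the image of $d\phi_{(A_0, B_0)}$ genuinely fills the claimed three-summand subspace rather than a proper subspace of it; this reduces to the linear-algebra verification that arbitrary elements of $U \otimes V$, $U^\perp \otimes V$, and $U \otimes V^\perp$ can be realized by suitable choices of $(\dot A, \dot B)$, after which the dimension count closes the argument and the orthogonal splitting gives the normal space automatically.
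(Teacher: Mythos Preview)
Your argument is correct and follows the standard route: local graph coordinates via the Schur complement for the manifold structure, then the factorization $X=AB$ with $A\in\mathbb{R}^{M\times r}$, $B\in\mathbb{R}^{r\times N}$ to identify the tangent space, with the dimension count closing the inclusion to an equality and the orthogonal four-block splitting of $\mathbb{R}^{M\times N}$ giving the normal space. The ``obstacle'' you flag is not a real one: since $B_0$ has full row rank, $\dot A\mapsto \dot A B_0$ surjects onto $\mathbb{R}^M\otimes V$ (take $\dot A = Z B_0^\top(B_0 B_0^\top)^{-1}$), and dually for $A_0\dot B$; but as you note, the dimension count already suffices.

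As for comparison with the paper: there is nothing to compare. The paper does not prove this proposition; it is quoted as a well-known fact with a citation to \cite{schneideruschmajew}, Theorem~3.1, and no proof is given. Your write-up is a self-contained proof of a result the paper simply imports.
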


\noindent Using the inner product to identify $\mathbb{R}^N$ and $\mathbb{R}^k$ with their duals, $U^\perp\otimes V^\perp$ can be thought of as $Hom(V^\perp, U^\perp)$.

The utility of the above decomposition of the tangent space arises from applying the following standard theorem of differential topology; the statement here is from \cite[Theorem 6.30]{leesmoothmanifolds}. Recall that a map $F:N\to M$ is \emph{transverse} to a submanifold $S\subset M$ if for every $n\in N$ with $f(n)\in S$, $dF(T_n(N))+T_{f(n)}(S) = T_{f(n)}(M)$.

\begin{theorem}
Suppose $N$ and $M$ are smooth manifolds and $S\subset M$ is an embedded submanifold. If $F: N\to M$ is a smooth map that is transverse to $S$ , then $F^{-1}(S)$ is an embedded submanifold of $N$ whose codimension is equal to the codimension of $S$ in $M$.
\end{theorem}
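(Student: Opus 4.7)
The plan is to reduce the claim to the regular value theorem via slice charts. The conclusion is local on $N$ and embeddedness of a subset can be verified in a neighborhood of each of its points, so it suffices to fix an arbitrary $p\in F^{-1}(S)$ and produce an open neighborhood $U\subset N$ of $p$ on which $F^{-1}(S)\cap U$ is cut out as the zero set of a smooth submersion into $\RR^c$, where $c$ denotes the codimension of $S$ in $M$.

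To set this up, I would invoke the slice-chart characterization of embedded submanifolds: choose a smooth chart $(V,\psi)$ on $M$ centered at $F(p)$ such that $\psi(S\cap V)$ equals the intersection of $\psi(V)$ with the coordinate subspace $\RR^{\dim M - c}\times\{0\}$. Let $\pi\colon\RR^{\dim M}\to\RR^c$ be projection onto the last $c$ coordinates, put $U := F^{-1}(V)$, and define $G := \pi\circ\psi\circ F\colon U\to\RR^c$. By the slice-chart property, $F^{-1}(S)\cap U = G^{-1}(0)$, so the problem reduces to verifying that $0$ is a regular value of $G$.

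The main technical step, and where the transversality hypothesis is essential, is to show that $dG_p$ is surjective at every $p\in G^{-1}(0)$. The composition $\pi\circ\psi$ is a submersion onto $\RR^c$ whose differential at $F(p)$ has kernel exactly $T_{F(p)}S$ (this is the defining property of the slice chart). Given any $v\in\RR^c$, lift it to some $w\in T_{F(p)}M$ with $d(\pi\circ\psi)_{F(p)}(w)=v$. By transversality, we may write $w = dF_p(u)+s$ for some $u\in T_pN$ and $s\in T_{F(p)}S$, and since $s$ is killed by $d(\pi\circ\psi)_{F(p)}$ we conclude $dG_p(u)=v$. Hence $0$ is a regular value of $G$, and the regular value theorem yields that $F^{-1}(S)\cap U = G^{-1}(0)$ is an embedded submanifold of $U$ of codimension $c$. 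Letting $p$ vary over $F^{-1}(S)$ and observing that the resulting local submanifold structures agree on overlaps (they all record the same underlying subset of $N$) patches these together into a global embedded submanifold structure on $F^{-1}(S)$ whose codimension in $N$ equals the codimension of $S$ in $M$.
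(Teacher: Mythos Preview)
Your proof is correct and follows the standard slice-chart reduction to the regular value theorem. However, the paper does not actually prove this statement: it is quoted as a standard result from differential topology, with the specific formulation attributed to \cite[Theorem 6.30]{leesmoothmanifolds}. Your argument is essentially the textbook proof (and is in fact the approach Lee takes), so there is nothing to compare---you have supplied a proof where the paper simply cites one.
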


Thus, if $\mu$ is transverse to $M(r,n, k)\subset \mathbb{R}^{n\times k}$, then the $\mu^{-1}(M(r,n,k))$ are smooth manifolds of positive codimension partially stratifying the complement of the full-rank $(\alpha, y)$, capped off by the $V_k(\mathbb(\mbox{Ran}(\alpha)^\perp)$ mentioned above. As (\ref{fixedranktangentdecomp}) characterizes the normal space to $M(r,n,k)$, we may verify transversality directly. 
\begin{proposition}
For any $y\in V_k(\mathbb{R}^N)$, the map $\mu_y: V_n(\mathbb{R}^N)\to M(n,k)$ is transverse to $M(r,n,k)$.
\end{proposition}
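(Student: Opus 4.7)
The plan is to compute $d\mu_y$ directly using the tangent space formula for $V_n(\mathbb{R}^N)$ and then verify transversality against the explicit normal bundle of $M(r,n,k)$ recorded in Proposition \ref{fixedranktangent}. Fix $\alpha \in V_n(\mathbb{R}^N)$ with $\alpha^\top y = X \in M(r,n,k)$, choose $\alpha_\perp \in \mathbb{R}^{N\times(N-n)}$ with orthonormal columns spanning $\im(\alpha)^\perp$, and write a tangent vector at $\alpha$ as $\dot\alpha = \alpha C + \alpha_\perp D$ with $C \in \mathbb{R}^{n\times n}$ skew-symmetric and $D \in \mathbb{R}^{(N-n)\times n}$ arbitrary. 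Differentiating $\mu_y(\alpha) = \alpha^\top y$ then gives
\begin{equation}
d\mu_y(\dot\alpha) \;=\; \dot\alpha^\top y \;=\; -C X + D^\top W, \qquad W := \alpha_\perp^\top y \in \mathbb{R}^{(N-n)\times k}.
\end{equation}
Setting $U = \im(X) \subset \mathbb{R}^n$ and $V = \im(X^\top) \subset \mathbb{R}^k$, Proposition \ref{fixedranktangent} identifies the normal space to $M(r,n,k)$ at $X$ with $U^\perp \otimes V^\perp \cong \mathrm{Hom}(V^\perp, U^\perp)$. Transversality is therefore equivalent to showing that the composition
\begin{equation}
(C,D) \;\longmapsto\; P_{U^\perp}\bigl(-CX + D^\top W\bigr) P_{V^\perp}
\end{equation}
is surjective onto $\mathrm{Hom}(V^\perp, U^\perp)$, where $P_{U^\perp}$ and $P_{V^\perp}$ denote orthogonal projections.

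The first summand disappears: since $V^\perp = \ker(X)$, we have $X P_{V^\perp} = 0$, so $-CX P_{V^\perp} = 0$ regardless of $C$. The entire problem collapses to showing that
\begin{equation}
D^\top \;\longmapsto\; P_{U^\perp} D^\top W P_{V^\perp}, \qquad D^\top \in \mathbb{R}^{n\times(N-n)},
\end{equation}
hits every element of $\mathrm{Hom}(V^\perp, U^\perp)$. The key observation I would use here is a Pythagorean identity obtained from $y^\top y = I_k$ together with $\alpha \alpha^\top + \alpha_\perp \alpha_\perp^\top = I_N$: for every $v \in \mathbb{R}^k$,
\begin{equation}
\|v\|^2 \;=\; v^\top y^\top y\, v \;=\; \|X v\|^2 + \|W v\|^2.
\end{equation}
Restricted to $v \in V^\perp = \ker(X)$, this says $\|W v\| = \|v\|$, so $W_\perp := W|_{V^\perp}\colon V^\perp \to \mathbb{R}^{N-n}$ is an isometric embedding; in particular its image is a $(k-r)$-dimensional subspace of $\mathbb{R}^{N-n}$.

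With this in hand, surjectivity is immediate. Given any $\phi \in \mathrm{Hom}(V^\perp, U^\perp)$, define a linear map $\mathbb{R}^{N-n}\to\mathbb{R}^n$ by setting it equal to $\phi \circ W_\perp^{-1}$ on the subspace $\im(W_\perp)$ (where $W_\perp^{-1}$ makes sense by injectivity) and equal to zero on its orthogonal complement, and let $D^\top$ be the matrix of this linear map. Then for $v \in V^\perp$ one has $D^\top W v = \phi(v) \in U^\perp$, so $P_{U^\perp} D^\top W P_{V^\perp} = \phi$ as desired. The edge case $V^\perp = 0$ (which occurs exactly when $r = k$, equivalently $W$ annihilates $\mathbb{R}^k$ on the relevant subspace) is trivial because the target $\mathrm{Hom}(V^\perp, U^\perp)$ is itself zero. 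The only step that demands real thought is the Pythagorean identity and its consequence that $W$ is isometric on $\ker(X)$; everything else is bookkeeping against Proposition \ref{fixedranktangent}.
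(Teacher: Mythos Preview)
Your proof is correct and follows the same overall strategy as the paper: compute $d\mu_y$ from the Stiefel tangent space formula, identify the normal bundle of $M(r,n,k)$ as $U^\perp\otimes V^\perp$ via Proposition~\ref{fixedranktangent}, and show that the $D$-term alone surjects onto it. The execution of the surjectivity step, however, is genuinely different. The paper argues by explicit coordinate choices: it uses the $O(k)$-action to rotate $y$ so that its first column lies in $\mathrm{span}(y)\cap\mathrm{Ran}(\alpha)^\perp$, then picks the columns of $\alpha_\perp$ to match, and reads off a suitable $D^\top$ column by column. Your argument replaces this with the Pythagorean identity $I_k=X^\top X+W^\top W$ (from $y^\top y=I_k$ and $\alpha\alpha^\top+\alpha_\perp\alpha_\perp^\top=I_N$), which immediately shows $W|_{V^\perp}$ is an isometric embedding and lets you build $D^\top$ as $\phi\circ W_\perp^{-1}$ extended by zero. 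Your route is cleaner and coordinate-free, and it also makes transparent why the first summand $-CX$ is irrelevant (since $XP_{V^\perp}=0$), a point the paper leaves implicit. The paper's approach has the minor advantage of being more concrete about what $D$ actually looks like, but both buy the same conclusion.
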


\begin{proof}
Since $\mu$ in (\ref{mudefn}) is the restriction of a linear map on $M(N,k)$, any element of the image of $d\mu$ has the following form, where $\alpha_\perp$ is any $(N-n)\times n$ matrix such that $\alpha_\perp \alpha_\perp^\top$ is projection to the orthogonal complement of span$(\alpha)$.
\begin{equation}
    (\alpha\cdot C)^\top y + (\alpha_\perp D)^\top y = C^\top\alpha^\top y + D^\top \alpha_\perp^\top y\label{dmualphamatrices}
\end{equation}

using the characterization of $T_y(V_k(\mathbb{R}^N))$ from (\ref{stiefeltangent}). Note that we may take any $\alpha_\perp$ whose columns span the orthogonal complement to $\alpha$. We also use the notation of (\ref{fixedranktangentdecomp}) for $U=\mbox{Ran}(\alpha^\top y)$ and $V=\mbox{Ran}(\alpha^\top y)^\top = \mbox{Ran}(y^\top\alpha)$.

By Proposition \ref{fixedranktangent}, it suffices to show that the projection of $d\mu(T_\alpha(V_n(\mathbb{R}^N))$ to $U^\perp\otimes V^\perp$ is surjective. Interpreting these matrices as elements of $\mbox{Hom}(\mathbb{R}^k, \mathbb{R}^n)$, this is equivalent to showing that, given any $v'\in V^\perp$, matrices of the form (\ref{dmualphamatrices}) can be chosen such that the image of $v'$ is in $U^\perp$. We focus on the second term, and let $u'\in U^\perp$ be arbitrary.

First, $V^\perp$ consists of the orthogonal complement to $\mbox{Ran}(y^\top\alpha)$. By assumption $V = \mbox{Ran}(\alpha^\top y)$ is rank $r<k$, so $\mbox{Ran}(y^\top \alpha)$ is a rank-$r$ subspace of a $k$-dimensional space, the projection of $\mbox{Ran}(\alpha)$ to $\mbox{Ran}(y)$. Thus $V^\perp$ is rank $k-r$. Note that $V^\perp$ is orthogonal to $\mbox{Ran}(\alpha)$.

On the other hand, $U^\perp$ consists of the orthogonal complement to $\alpha^\top y$ in $\mathbb{R}^n$, which has dimension $n-r$. 

Since right multiplication by $O(k)$ does not affect $\mbox{Ran}(\alpha^\top_\perp y)$, we may assume without loss of generality that the first column of $y$ is $v'\in V^\perp\subset \mbox{span}(y)\cap \mbox{span}(\alpha)^\perp$. To prove the claim, it suffices to send the first column of $y$ to $u'$. Moreover, as we may choose any columns for $\alpha_\perp$ which span the complement of $\mbox{Ran}(\alpha)$, we may take the first $k-r$ columns to be the same as those of $y$. Thus the first column of $\alpha^\top_\perp y$ has 1 only in the first entry, with the rest 0. Then for $D^\top$ any $n$ by $N-n$ matrix whose first column is $u'$, $D^\top \alpha^\top_\perp y$ has the desired property.
\end{proof}

\begin{corollary}\label{codimcount}
The inverse images $\mu_\alpha^{-1}(M(r,n,k))$ are open submanifolds of $V_n(\mathbb{R}^N)$, with codimension $nk-r(n+k-r)$ (i.e. dimension $Nn-\frac{1}{2}n(n+1)-nk+r(n+k-r)$).     
\end{corollary}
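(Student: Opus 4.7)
The plan is to apply the transversality theorem (quoted just before the preceding proposition) directly. The preceding proposition establishes that for any fixed $y \in V_k(\mathbb{R}^N)$, the map $\mu_y \colon V_n(\mathbb{R}^N) \to M(n,k)$ is transverse to the embedded submanifold $M(r, n, k) \subset M(n, k)$. (I read the $\mu_\alpha$ in the statement of the corollary as the same map $\mu_y$ from the preceding proposition, since that is the one whose transversality was verified and whose preimage lies in $V_n(\mathbb{R}^N)$.) By the transversality theorem, $\mu_y^{-1}(M(r,n,k))$ is then an embedded submanifold of $V_n(\mathbb{R}^N)$ whose codimension equals the codimension of $M(r,n,k)$ inside $M(n,k)$.

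The remainder is a dimension count. By Proposition \ref{fixedranktangent}, applied with $(M, N)$ replaced by $(n, k)$, the determinantal stratum $M(r,n,k)$ is a smooth manifold of dimension $r(n + k - r)$. Since the ambient affine space $M(n, k)$ has dimension $nk$, the codimension of $M(r,n,k)$ in $M(n,k)$ is $nk - r(n + k - r)$, which transfers verbatim to $\mu_y^{-1}(M(r,n,k)) \subset V_n(\mathbb{R}^N)$ via transversality.

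To obtain the dimension formula, I would use the standard fact that $\dim V_n(\mathbb{R}^N) = Nn - \tfrac{1}{2}n(n+1)$ (the defining condition $\alpha^\top \alpha = I_n$ imposes $\tfrac{1}{2}n(n+1)$ independent relations on the $Nn$ entries of $\alpha$). Subtracting the codimension gives
$$
\dim \mu_y^{-1}(M(r,n,k)) \;=\; Nn - \tfrac{1}{2}n(n+1) - nk + r(n+k-r),
$$
as asserted.

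There is no real obstacle here: all of the technical content has already been done in verifying the transversality of $\mu_y$, and the corollary is essentially a bookkeeping consequence. The only mildly subtle point worth flagging is the interpretation of ``open submanifold'' in the statement: since $M(r,n,k)$ sits as an open subset of the closed determinantal variety $\{X \in M(n,k) \mid \mathrm{rank}(X) \le r\}$, its preimage under the smooth map $\mu_y$ is open inside the closed locus $\{\alpha \in V_n(\mathbb{R}^N) \mid \mathrm{rank}(\alpha^\top y) \le r\}$, which justifies that terminology.
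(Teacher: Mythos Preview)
Your proposal is correct and matches the paper's intended argument: the corollary is stated in the paper without an explicit proof, being an immediate consequence of the transversality theorem applied to the preceding proposition together with the dimension formula from Proposition~\ref{fixedranktangent}. Your reading of $\mu_\alpha$ as $\mu_y$ and your clarification of the ``open submanifold'' phrasing are both sensible, and the dimension bookkeeping is exactly right.
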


For a fixed $y$, let $\mathcal{M}_y\subset V_n(\mathbb{R}^N)$ be the set of those $\alpha$ for which $\alpha^\top y$ is full rank, and let $\mathcal{X}_y$ be its complement. As a consequence of Corollary \ref{codimcount}, for a fixed data point $y$, the set of points where $\rank(\alpha^\top y)<k$ has strictly positive codimension. Moreover, even if $y$ varies in a full-dimensional neighborhood $U$ of $V_k(\mathbb{R}^N)$, so long as $n<<N$, the union $\cup_{y\in U} \mathcal{X}_y$ swept out by $\mathcal{Y}$ still has positive codimension in $\dim(V_n(\mathbb{R}^N))$. Hence the set of full-rank $(\alpha, y)$ is of full measure, and a generic choice of $(\alpha, y)$ will satisfy the rank condition.

\section{Optimization for \texorpdfstring{$\alpha$}{a}}\label{sec:optimization}

In this section, we discuss our process for optimizing $\alpha$ over $V_n(\mathbb{R}^N)$, detailing Steps 4 and 5 of Algorithm \ref{algorithm_stiefel}.

\begin{definition}[PCA and $\alpha_{PCA}$]\label{def:pca}
Given zero-mean data points $y\in \mathcal{Y}  \subset V_k(\RR^N)$, consider the matrix $\hat{\Sigma}= \frac{1}{|\mathcal{Y}|}\sum_{y \in \mathcal{Y}} y y^\top \in \mathbb{R}^{N \times N}$. Apply eigendecomposition on $\hat{\Sigma}$ to get $N$ eigenvectors, each of which is an $N$-dimensional vector. Note that since $\hat{\Sigma}$ is a real symmetric matrix, its eigenvalues are real and its eigenvectors can be selected to be orthonormal. Select $n$ of the eigenvectors that correspond to the $n$ highest eigenvalues of $\hat{\Sigma}$, and set them to be columns of $\alpha_{PCA} \in V_n(\RR^N) \subset \mathbb{R}^{N \times n}$. We refer to $\alpha_{PCA}$ as the output of PCA on $\mathcal{Y}$, or $\text{PCA}(\mathcal{Y})$.
\end{definition} 

When $n$ is not predetermined, studying the resulting eigenvalues or singular values and identifying a ``natural gap'' can be a good strategy. See Section \ref{appendix_choosing_n} for an example.

\begin{remark}[Relationship with standard PCA]
    We comment on the relationship between $\alpha_{PCA}$ and standard PCA, using the notation of Algorithm \ref{algorithm_stiefel}. First, let us consider the case $k=1$, noting that $V_1(\RR^t) \cong S^{t-1}$ is a sphere for all $t \geq 1$. In this case, the projection $\pi_{\alpha_{PCA}}$ is equivalent to performing standard PCA (e.g. \cite[Section 12.2]{isl}) on the data $\mathcal{Y}$, followed by normalization of the data onto a sphere. In the case $k > 1$, the embedding $\alpha_{PCA}$ represents a novel extension of standard PCA to elements of Stiefel manifolds. In particular, it applies to data sets where the individual data points are matrices with multiple columns rather than merely column vectors.
\end{remark} 

We conclude our introduction of $\alpha_{PCA}$ with the following remark on equivariance versus invariance.

\begin{remark}[PCA is $O(k)$-invariant]
PCA in Definition \ref{def:pca} is $O(k)$-\textit{in}variant, not $O(k)$-\textit{equi}variant. In other words, for some $g \in O(k)$, the output of PCA on data points $y \in \mathcal{Y}$ is identical to that of PCA on $\{yg~|~y \in \mathcal{Y}\}$. This follows from noticing that
\begin{equation}
    \hat{\Sigma}= \frac{1}{|\mathcal{Y}|}\sum_{y \in \mathcal{Y}} (yg) (yg)^\top = \frac{1}{|\mathcal{Y}|}\sum_{y \in \mathcal{Y}}y(gg^\top) y^\top= \frac{1}{|\mathcal{Y}|}\sum_{y \in \mathcal{Y}} y y^\top.
\end{equation}

However, if we use PCA as a way to choose an $\alpha$ to use in $\pi_{\alpha}$ in Definition \ref{defn_projection_map}, then our entire projection algorithm (which includes both finding $\alpha$ from $y$ and applying $\pi_\alpha$ to $y$) is $O(k)$-\textit{equi}variant. That is, 
\begin{equation}
    \pi_{\mathrm{PCA}(yg)}(yg) = \pi_{\mathrm{PCA}(y)}(yg)=\pi_{\mathrm{PCA}(y)}(y)g.
\end{equation}
The last equality follows from Proposition \ref{prop:projection_map_equivariant}, which shows that $\pi_{\alpha}$ is equivariant for any fixed $\alpha$.
\end{remark}
\subsection{Further Optimization via Gradient Descent}
A projection $\alpha \in V_n(\mathbb{R}^N)$ is considered to be optimal if it minimizes the cost function \eqref{eq:nuclear_norm_opt}, defined as the sum over data points $y \in Y$ of the squared distance between $y$ and its image under the map $\pi_\alpha$. We approximate a solution to this minimization problem by initializing our guess with $\alpha_{PCA}$, then using the software package Pymanopt \cite{JMLR:v17:16-177} to perform gradient descent. This output is called $\alpha_{GD}$. While $\alpha_{GD}$ is sometimes quite close to $\alpha_{PCA}$, Section \ref{sec:experiments} details a number of examples where $\alpha_{GD}$ improves upon $\alpha_{PCA}$.
The package Pymanopt implements algorithms for manifold optimization that are intrinsic to the manifold.
We now show how we manipulate our cost function into one that Pymanopt can handle.

\begin{proposition}\label{prop:rewriteopt}
The following two optimization problems are equivalent:
\begin{equation}
    \arg\min_{\alpha \in V_n(\RR^N)} \frac{1}{|\mathcal{Y}|}\sum_{y \in \mathcal{Y}} \|y - \pi_\alpha(y)\|_\mathrm{F}^2  = \arg\max_{\alpha \in V_n(\RR^N)} \frac{1}{|\mathcal{Y}|}\sum_{y \in \mathcal{Y}} \|\alpha^\top y\|_*. \quad \label{eq:nuclear_norm_opt}
\end{equation}
\end{proposition}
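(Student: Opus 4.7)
The plan is to show the two objective functions differ only by an additive constant and a positive multiplicative scaling, so their optimizers on $V_n(\RR^N)$ coincide. Concretely, I will prove that for every $y \in \mathcal{Y}$ with $\alpha^\top y$ of full rank,
\[
\|y - \pi_\alpha(y)\|_\mathrm{F}^2 \;=\; 2k - 2\,\|\alpha^\top y\|_*,
\]
from which the equivalence of the arg min and arg max is immediate.

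First I would expand the squared Frobenius norm via the polarization identity: $\|y-\pi_\alpha(y)\|_\mathrm{F}^2 = \|y\|_\mathrm{F}^2 - 2\langle y, \pi_\alpha(y)\rangle_F + \|\pi_\alpha(y)\|_\mathrm{F}^2$. Since $y \in V_k(\RR^N)$ we have $\|y\|_\mathrm{F}^2 = k$, and by Proposition \ref{prop:isometricembedding} the map $\alpha$ is an isometric embedding, so $\|\pi_\alpha(y)\|_\mathrm{F}^2 = \|\alpha \hat{y}_\alpha\|_\mathrm{F}^2 = \|\hat{y}_\alpha\|_\mathrm{F}^2 = k$ as well. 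Thus everything reduces to analyzing the cross term $\langle y, \alpha \hat{y}_\alpha\rangle_F$.

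Next I would use the cyclic property of trace to rewrite
\[
\langle y, \alpha\hat{y}_\alpha\rangle_F \;=\; \tr(y^\top \alpha \hat{y}_\alpha) \;=\; \tr(\hat{y}_\alpha^\top\,\alpha^\top y) \;=\; \langle \hat{y}_\alpha,\, \alpha^\top y\rangle_F.
\]
Now I invoke the polar decomposition of $\alpha^\top y$ guaranteed by Theorem \ref{thm_form_of_polar}: $\alpha^\top y = \hat{y}_\alpha H$ with $H = ((\alpha^\top y)^\top \alpha^\top y)^{1/2}$ positive semidefinite and $\hat{y}_\alpha^\top \hat{y}_\alpha = I_k$. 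Substituting gives $\langle \hat{y}_\alpha, \alpha^\top y\rangle_F = \tr(\hat{y}_\alpha^\top \hat{y}_\alpha H) = \tr(H)$. Finally, since the eigenvalues of the PSD factor $H$ are exactly the singular values of $\alpha^\top y$, we have $\tr(H) = \|\alpha^\top y\|_*$ by Definition \ref{def:distance}. Combining these identities yields $\|y - \pi_\alpha(y)\|_\mathrm{F}^2 = 2k - 2\|\alpha^\top y\|_*$.

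Averaging over $y \in \mathcal{Y}$, the objective on the left of \eqref{eq:nuclear_norm_opt} equals $2k - \frac{2}{|\mathcal{Y}|}\sum_{y \in \mathcal{Y}} \|\alpha^\top y\|_*$; since $2k$ is a constant independent of $\alpha$ and $-2$ is a negative scalar, minimizing it over $\alpha \in V_n(\RR^N)$ is the same as maximizing $\frac{1}{|\mathcal{Y}|}\sum_{y\in\mathcal{Y}} \|\alpha^\top y\|_*$, which is the claim. There is no serious obstacle here — the only subtlety is ensuring each $\alpha^\top y$ is full rank so that the polar decomposition (and hence $\pi_\alpha(y)$) is defined, but this is precisely the standing hypothesis on the domain of $\pi_\alpha$ from Definition \ref{defn_projection_map}.
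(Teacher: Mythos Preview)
Your proof is correct and follows essentially the same route as the paper's: expand the squared Frobenius norm, observe that $\|y\|_\mathrm{F}^2 = \|\pi_\alpha(y)\|_\mathrm{F}^2 = k$ are constants, and identify the cross term $\langle \alpha^\top y,\hat{y}_\alpha\rangle_F$ with the nuclear norm. The only cosmetic difference is that the paper computes this cross term via the SVD of $\alpha^\top y$ (using $\hat{y}_\alpha = P\bigl[\begin{smallmatrix}I_k\\0\end{smallmatrix}\bigr]Q^\top$ to get $\tr(\Sigma_y)$), whereas you use the polar decomposition $\alpha^\top y = \hat{y}_\alpha H$ to get $\tr(H)$; these are equivalent by Remark~\ref{rem_equiv_def_of_projection}.
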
 
Here, we implicitly take $\arg\min$ and $\arg\max$ over the set of $\alpha \in V_n(\RR^N)$ for which $\pi_\alpha(y)$ is defined for each $y \in \mathcal{Y}$. In practice (see Section \ref{sec:experiments}), we have not encountered a case where the optimization problems above cannot be solved with our gradient descent method.
\begin{proof} 
Proof follows from considering the singular value decompositions of $\alpha^\top y$ and $\hat{y_\alpha}$ in

\begin{eqnarray}
    \hat{\alpha}  &=& \arg\min_{\alpha \in V_n(\RR^N)} \frac{1}{|\mathcal{Y}|}\sum_{y \in \mathcal{Y}} \|y - \pi_\alpha(y)\|_\mathrm{F}^2 = \arg\max_{\alpha \in V_n(\RR^N)} \frac{1}{|\mathcal{Y}|}\sum_{y \in \mathcal{Y}}  \langle \alpha^\top y, \hat{y_\alpha}\rangle\\
       &=& \arg\max_{\alpha \in V_n(\RR^N)}\frac{1}{|\mathcal{Y}|}\sum_{y \in \mathcal{Y}} \tr \left( 
\Sigma_{y} 
 \right) = \arg\max_{\alpha \in V_n(\RR^N)} \frac{1}{|\mathcal{Y}|}\sum_{y \in \mathcal{Y}}\|\alpha^\top y\|_*. 
\end{eqnarray} 
\end{proof} 

The nuclear norm is not differentiable, but it does have a subgradient, which is used as the ``gradient'' for many first-order optimization methods.

\subsection{\texorpdfstring{$\alpha_{PCA}$}{a\_PCA} is a critical point when data comes from a low-dimensional manifold}\label{subsec_PCA_critical_point}

If the considered data $\mathcal{Y} \subset V_k(\RR^N)$ lies precisely in the image $\alpha(V_k(\RR^n))$ of an embedded lower-dimensional Stiefel manifold, then $\alpha_{PCA}$ (see Definition \ref{def:pca}) is a critical point of the optimization problem defined in \eqref{eq:nuclear_norm_opt}. That is, we temporarily work under the following assumption that there is no high-dimensional noise, so the columns of the $y$ span a proper subspace of $\RR^N$ with dimension $n$.

\begin{assumption} \label{assumption:data}
For a fixed $A \in V_n(\RR^N)$, for all $y \in \mathcal{Y} \subset V_k(\RR^N)$, there exists an $x\in \mathcal{X} \subset V_k(\RR^n)$ such that $y = A(x)$.
\end{assumption}

\begin{lemma} \label{lem:pca_critical}
Under Assumption \ref{assumption:data},  $\alpha_{PCA}$ (Definition \ref{def:pca}) is a critical point of the optimization problem in \eqref{eq:nuclear_norm_opt}.
\end{lemma}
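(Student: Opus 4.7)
My approach is to bypass a direct subgradient analysis and instead show that $\alpha_{PCA}$ attains the \emph{global} maximum of the objective in~\eqref{eq:nuclear_norm_opt}; any global maximizer is in particular a critical point, so the lemma follows. The strategy has two parts: (i) a universal upper bound on the objective over all of $V_n(\RR^N)$, and (ii) verification that $\alpha_{PCA}$ realizes this bound under Assumption~\ref{assumption:data}.

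For (i), I observe that for any $\alpha \in V_n(\RR^N)$ and $y \in V_k(\RR^N)$, both $\alpha$ and $y$ have orthonormal columns, so $\|\alpha^\top y\|_{op} \le 1$. Hence each of the at most $k$ singular values of $\alpha^\top y$ lies in $[0,1]$, giving $\|\alpha^\top y\|_* \le k$. Averaging over $\mathcal{Y}$ yields an upper bound of $k$ on the objective.

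For (ii), Assumption~\ref{assumption:data} gives $\hat{\Sigma} = A\,\Sigma_x\,A^\top$ with $\Sigma_x := \frac{1}{|\mathcal{Y}|}\sum_{x \in \mathcal{X}} x x^\top \succeq 0$. Since each $xx^\top$ is a rank-$k$ projection, $\rank(\Sigma_x) \ge k$, and $\im(\hat\Sigma) = A\cdot\im(\Sigma_x) \subseteq \im(A)$. Setting $r := \rank(\hat\Sigma)$, the top $r$ eigenvectors of $\hat\Sigma$ span $A\cdot\im(\Sigma_x)$ and can be written as $AQ_1$ for some $Q_1 \in V_r(\RR^n)$ whose columns span $\im(\Sigma_x)$; the remaining $n-r$ columns of $\alpha_{PCA}$, call them $\alpha_2$, form an orthonormal basis of an $(n-r)$-dimensional subspace of $\ker(\hat\Sigma)$. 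The key identity is that any $v \in \ker(\hat\Sigma)$ satisfies $v^\top y = 0$ for every $y \in \mathcal{Y}$: the computation $v^\top \hat\Sigma v = \tfrac{1}{|\mathcal{Y}|}\sum_{y \in \mathcal{Y}}\|v^\top y\|_2^2 = 0$ forces each term to vanish. Hence $\alpha_2^\top y = 0$ and $\alpha_{PCA}^\top y = \begin{bmatrix} Q_1^\top x \\ 0 \end{bmatrix}$. Because $\im(x) \subseteq \im(\Sigma_x) = \im(Q_1)$, the projection $Q_1 Q_1^\top$ acts as the identity on $\im(x)$, so $x^\top Q_1 Q_1^\top x = I_k$ and $Q_1^\top x$ has all singular values equal to $1$. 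Therefore $\|\alpha_{PCA}^\top y\|_* = k$, matching the upper bound for every $y \in \mathcal{Y}$.

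The main obstacle I anticipate is the non-uniqueness of $\alpha_{PCA}$ arising from zero eigenvalues of $\hat\Sigma$ when $\mathcal{X}$ fails to span $\RR^n$; the argument above is deliberately written so that the conclusion does not depend on which orthonormal completion is chosen for $\alpha_2$, which resolves this issue. Once both ingredients are in hand, the objective value at $\alpha_{PCA}$ equals its global maximum $k$, and the critical-point claim follows immediately.
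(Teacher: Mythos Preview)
Your proof is correct and takes a genuinely different route from the paper's own proof of Lemma~\ref{lem:pca_critical}. The paper argues by direct computation: it writes down the Euclidean subgradient $\tfrac{1}{|\mathcal{Y}|}\sum_y y\hat y_\alpha^\top$, projects it to the tangent space of $V_n(\RR^N)$ to obtain the Riemannian gradient $(I_N-\alpha\alpha^\top)\tfrac{1}{|\mathcal{Y}|}\sum_y y\hat y_\alpha^\top$, and then observes that under Assumption~\ref{assumption:data} each $y$ lies in the column space of $\alpha_{PCA}$, so the projector $I_N-\alpha_{PCA}\alpha_{PCA}^\top$ kills every term. You instead establish the stronger statement of Theorem~\ref{thm:pca_global_max}---that $\alpha_{PCA}$ is a global maximizer---and deduce criticality from that. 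What your approach buys: your treatment of the case $r=\rank(\hat\Sigma)<n$ is more careful than the paper's own proof of Theorem~\ref{thm:pca_global_max}, which asserts that $\alpha_{PCA}$ and $A$ span the same column space (this holds only when $r=n$); your kernel argument $v^\top\hat\Sigma v=0\Rightarrow v^\top y=0$ handles every orthonormal completion of the zero-eigenspace uniformly. What the paper's gradient computation buys: it sidesteps any appeal to ``global maximizer $\Rightarrow$ critical point'' for a nonsmooth objective. Your route is still valid here because you show each $\alpha_{PCA}^\top y$ has all $k$ singular values equal to $1$, hence is full rank, so the nuclear norm is differentiable there and the objective is smooth at $\alpha_{PCA}$; it would tighten the write-up to make this smoothness observation explicit before invoking the implication.
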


\begin{proof}
We evaluate the gradient directly. The Euclidean subgradient of \eqref{eq:nuclear_norm_opt} at any $\alpha \in V_n(\RR^N)$ is
\begin{equation}
   \frac{\partial}{\partial \alpha} f(\alpha) = \frac{\partial}{\partial \alpha}\frac{1}{|\mathcal{Y}|}\sum_{y \in \mathcal{Y}}\|\alpha^\top y\|_*=\frac{1}{|\mathcal{Y}|}\sum_{y \in \mathcal{Y}}\frac{\partial}{\partial \alpha}\|\alpha^\top y\|_*  = \frac{1}{|\mathcal{Y}|}\sum_{y \in \mathcal{Y}}  y\hat{y}_\alpha^\top.
\end{equation}
We project that gradient on to $V_{n}(\RR^N)$, obtaining
\begin{align}
&\frac{1}{2} \alpha \cancelto{0}{ \left(\alpha^\top \frac{\partial}{\partial \alpha} f(\alpha) - \left(\alpha^\top \frac{\partial}{\partial \alpha} f(\alpha)\right)^\top\right)} +  (I_N-\alpha \alpha^\top)\frac{\partial}{\partial \alpha} f(\alpha) \label{eq:gradient_full}\\
= & \frac{1}{|\mathcal{Y}|}\sum_{y \in \mathcal{Y}}(I_N-\alpha \alpha^\top)y\hat{y}_\alpha^\top =(I_N-\alpha \alpha^\top) \frac{1}{|\mathcal{Y}|}\sum_{y \in \mathcal{Y}}y\hat{y}_\alpha^\top ,\label{eq:gradient}
\end{align}
where the first term of \eqref{eq:gradient_full} is 0 since the following matrix is symmetric:
\begin{equation}
    \alpha^\top \frac{\partial}{\partial \alpha} f(\alpha) = (\alpha^\top y)\hat{y}_\alpha^\top = (P \Sigma Q^\top)(PQ^\top)^\top = P\Sigma P^\top.
\end{equation}
The columns of $y\hat{y}_\alpha$ are linear combinations of the columns of $y$. Left multiplication by $I_N-\alpha\alpha^\top$ projects those columns to the subspace that is orthogonal to the column space of $\alpha$. Because $y$ is in the column space of $\alpha_{PCA}$ for all $y \in \mathcal{Y}$ under Assumption \ref{assumption:data}, the Riemannian gradient is zero when evaluated at $\alpha = \alpha_{PCA}$.
\end{proof}

\begin{theorem} \label{thm:pca_global_max}
Under Assumption \ref{assumption:data}, $\alpha_{PCA}$ (Definition \ref{def:pca}) is a global maximizer of \eqref{eq:nuclear_norm_opt}.
\end{theorem}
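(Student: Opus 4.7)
The plan is to sandwich the objective: establish a uniform pointwise upper bound on $\|\alpha^\top y\|_*$ for $\alpha \in V_n(\RR^N)$ and $y \in V_k(\RR^N)$, and then show that $\alpha_{PCA}$ attains this bound for every $y \in \mathcal{Y}$ under Assumption \ref{assumption:data}.

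First I would verify that $\|\alpha^\top y\|_* \leq k$, with equality if and only if $\im(y) \subseteq \im(\alpha)$, where $\im(y)$ denotes the column space of $y$. The key observation is that $\alpha\alpha^\top$ is the orthogonal projector onto $\im(\alpha)$, so $I_N - \alpha\alpha^\top \succeq 0$ and, using $y^\top y = I_k$,
\begin{equation}
(\alpha^\top y)^\top(\alpha^\top y) \;=\; y^\top \alpha\alpha^\top y \;=\; I_k - y^\top(I_N - \alpha\alpha^\top)y \;\preceq\; I_k.
\end{equation}
Hence every singular value of $\alpha^\top y \in \RR^{n\times k}$ lies in $[0,1]$, giving $\|\alpha^\top y\|_* \leq k$. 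Equality forces $y^\top(I_N-\alpha\alpha^\top)y = 0$; since this is the Gram matrix of the columns of $(I_N - \alpha\alpha^\top)y$, it is equivalent to $\im(y) \subseteq \im(\alpha)$. Averaging over $\mathcal{Y}$ yields the uniform upper bound $\tfrac{1}{|\mathcal{Y}|}\sum_{y}\|\alpha^\top y\|_* \leq k$ for every $\alpha \in V_n(\RR^N)$.

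Next I would show that $\alpha_{PCA}$ attains this bound. Let $W$ denote the span in $\RR^N$ of all columns of all $y \in \mathcal{Y}$. Under Assumption \ref{assumption:data} each $y = Ax$ has its columns in $\im(A)$, so $W \subseteq \im(A)$ and in particular $\dim W \leq n$. Since $\hat{\Sigma} = \tfrac{1}{|\mathcal{Y}|}\sum_y yy^\top$ is a sum of PSD matrices whose column spaces are the column spaces of the individual $y$, the standard identity $\im(\sum A_i) = \sum \im(A_i)$ for PSD summands gives $\im(\hat{\Sigma}) = W$. Setting $r = \dim W$, the top $r$ eigenvectors of $\hat{\Sigma}$ have strictly positive eigenvalue and form an orthonormal basis of $W$, while any additional eigenvectors needed to fill $\alpha_{PCA}$ out to $n$ columns correspond to the zero eigenvalue. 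Either way $\im(\alpha_{PCA}) \supseteq W$, so each $y \in \mathcal{Y}$ satisfies $\im(y) \subseteq \im(\alpha_{PCA})$, and the equality criterion above yields $\|\alpha_{PCA}^\top y\|_* = k$. Summing shows that $\alpha_{PCA}$ attains the upper bound, so it is a global maximizer of \eqref{eq:nuclear_norm_opt}.

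The main subtlety is that $\alpha_{PCA}$ is not unique when $\hat{\Sigma}$ has eigenvalue multiplicities or rank less than $n$, but the argument above is designed to be oblivious to tie-breaking: any admissible choice of top-$n$ eigenvectors necessarily contains an orthonormal basis for $\im(\hat{\Sigma}) = W$, so $W \subseteq \im(\alpha_{PCA})$ in every case. Hence the result holds for every valid selection in Definition \ref{def:pca}, and the statement is independent of the non-convexity flagged earlier for the gradient-descent step.
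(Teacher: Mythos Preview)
Your proof is correct and follows essentially the same sandwich strategy as the paper: bound the objective above by $k$ and show $\alpha_{PCA}$ attains it. Your route to the upper bound via $(\alpha^\top y)^\top(\alpha^\top y)\preceq I_k$ is a clean alternative to the paper's operator-norm submultiplicativity, and your attainment argument through $\im(\hat\Sigma)=W\subseteq\im(\alpha_{PCA})$ is in fact more careful than the paper's, which asserts $A=\alpha_{PCA}\,g$ for some $g\in O(n)$ and thereby tacitly assumes the data spans all of $\im(A)$.
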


\begin{proof}

By assumption $\alpha^\top y=\alpha^\top Ax$. 
The nuclear norm of the $n\times k$ matrix $\alpha^\top A x$ is bounded above by $k$ times the operator norm, since the operator norm is the largest singular value. Moreover, the operator norm is submultiplicative, so
\begin{equation}
   \frac{1}{|\mathcal{Y}|}\sum_{y \in \mathcal{Y}} ||\alpha^\top y||_* \leq \frac{1}{|\mathcal{X}|}\sum_{x \in \mathcal{X}}  k ||\alpha^\top Ax||_{op}  \leq \frac{1}{|\mathcal{X}|}\sum_{x \in \mathcal{X}}  k ||\alpha^\top||_{op}||A||_{op}||x||_{op}.
\end{equation}
But each of $\alpha^\top, A, y$ are elements of Stiefel manifolds and so their singular values are all 1. So the cost function is bounded above by $k$, and since the case $\alpha=A$ realizes this upper bound, it is the global maximum.

For any output $\alpha$ of PCA on $\mathcal{Y}$ under these assumptions, we must have $A = \alpha g$ for $g\in O(n)$, since they span the same column space. In which case
\begin{equation*}
||A^\top(y)||_* = ||(\alpha g)^\top y||_* = ||g^\top\alpha^\top y||_* = ||\alpha^\top y||_*,
\end{equation*}
where the last line holds because any SVD decomposition $U\Sigma V$ for $\alpha^\top y$ induces one for $g^\top \alpha^\top y$ by $g^\top U\Sigma V$, so they have the same singular values. So any output of PCA is a maximizer of the cost function. 
\end{proof}

The following is a corollary of Lemma \ref{lem:pca_critical} and the last paragraph of the previous proof.

\begin{corollary}
Let $\beta \in V_n(\mathbb{R}^N) \subset \mathbb{R}^{N \times n}$ be any point on the Stiefel manifold such that its columns span the same subspace as $\alpha_{PCA}$ (Definition \ref{def:pca}). In other words, $\beta$ and $\alpha_{PCA}$ map to the same point on the Grassmanian, and $\beta=\alpha_{PCA} g$ for some $g \in O(k)$. Then under Assumption \ref{assumption:data}, $\beta$ is also a critical point and a global maximizer of the optimization problem \eqref{eq:nuclear_norm_opt}. This follows directly from the proof of Lemma \ref{lem:pca_critical} and Theorem \ref{thm:pca_global_max}.
\end{corollary}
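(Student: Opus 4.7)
The plan is to verify both claims (criticality and global maximality) by exploiting the fact that, for $g \in O(n)$, right multiplication $\alpha \mapsto \alpha g$ preserves both the cost function in \eqref{eq:nuclear_norm_opt} and the projector onto the column space of $\alpha$. Note that the statement requires $g \in O(n)$ rather than $O(k)$ for $\beta = \alpha_{PCA}\, g$ to lie in $V_n(\RR^N)$, and I work under this identification throughout.

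For global maximality, I would simply recycle the computation from the final paragraph of the proof of Theorem \ref{thm:pca_global_max}: if $\alpha_{PCA}^\top y = P \Sigma Q^\top$ is an SVD, then $\beta^\top y = g^\top \alpha_{PCA}^\top y = (g^\top P)\Sigma Q^\top$ is also an SVD (since $g^\top P$ has orthonormal columns), so the two matrices have identical singular values. Summing $\|\beta^\top y\|_* = \|\alpha_{PCA}^\top y\|_*$ over $y \in \mathcal{Y}$ shows the cost takes the same value at $\beta$ as at $\alpha_{PCA}$, which by Theorem \ref{thm:pca_global_max} is the global maximum.

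For the critical point claim, the key observation is that $\beta\beta^\top = \alpha_{PCA}\, g\, g^\top \alpha_{PCA}^\top = \alpha_{PCA}\alpha_{PCA}^\top$, so the orthogonal projector $I_N - \beta\beta^\top$ equals $I_N - \alpha_{PCA}\alpha_{PCA}^\top$. From \eqref{eq:gradient} in the proof of Lemma \ref{lem:pca_critical}, the Riemannian gradient at $\beta$ is
\begin{equation*}
(I_N - \beta\beta^\top)\, \frac{1}{|\mathcal{Y}|}\sum_{y \in \mathcal{Y}} y\, \hat{y}_\beta^\top
= (I_N - \alpha_{PCA}\alpha_{PCA}^\top)\, \frac{1}{|\mathcal{Y}|}\sum_{y \in \mathcal{Y}} y\, \hat{y}_\beta^\top.
\end{equation*}
Under Assumption \ref{assumption:data} every column of $y$ lies in $\im(A) = \im(\alpha_{PCA})$ (these spans agree because, as in the final paragraph of the proof of Theorem \ref{thm:pca_global_max}, $A$ and $\alpha_{PCA}$ differ by an element of $O(n)$), so $(I_N - \alpha_{PCA}\alpha_{PCA}^\top) y = 0$, and hence the columns of $y \hat{y}_\beta^\top$, being linear combinations of the columns of $y$, are annihilated as well.

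The only mild bookkeeping item I anticipate is checking that the skew-symmetric piece of the tangent-space projection, which was shown to vanish at $\alpha_{PCA}$ in \eqref{eq:gradient_full}, also vanishes at $\beta$. This is immediate from the same SVD manipulation used for global maximality: $\beta^\top\!\bigl(\tfrac{\partial}{\partial \beta}f(\beta)\bigr) = g^\top \alpha_{PCA}^\top y\, \hat{y}_\beta^\top$ is symmetric by the exact argument given in the proof of Lemma \ref{lem:pca_critical}. No genuinely new content is required beyond invoking the two cited proofs with the identification $\beta\beta^\top = \alpha_{PCA}\alpha_{PCA}^\top$, which is why the corollary reads as an immediate consequence.
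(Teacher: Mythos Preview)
Your proposal is correct and follows essentially the same route as the paper, which gives no separate argument beyond pointing to the proofs of Lemma \ref{lem:pca_critical} and Theorem \ref{thm:pca_global_max}; you have simply made those references explicit. Your observation that the statement should read $g \in O(n)$ rather than $O(k)$ is also well taken.
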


Lemma \ref{lem:pca_critical} and Theorem \ref{thm:pca_global_max} together show that $\alpha_{PCA}$ is a critical point and a global optimizer of \eqref{eq:nuclear_norm_opt} under Assumption \ref{assumption:data}. This is a nontrivial result given that the optimization problem over $V_n(\mathbb{R}^N)$ is not convex. Therefore, we choose $\alpha_{PCA}$ as the initialization point for the  gradient descent step in PSC, with the understanding that if the data happens to come from exactly $V_k(\mathbb{R}^n)$, the gradient descent output $\alpha_{GD}$ will be identical to $\alpha_{PCA}$.

\section{Numerical Experiments}\label{sec:experiments}
In this section, we conduct several experiments on synthetic (Sections \ref{sec:low-dim-visualization}, \ref{sec:pga-comparison}, \ref{sec:stimulus-space-model}, \ref{sec:gdnonlinear}) and real-world data (Sections \ref{sec:brain-connectivity-matrices}, \ref{sec:video-clustering}) to showcase the efficacy of our approach. First, section \ref{sec:low-dim-visualization} visually explains PSC using a low-dimensional example. Section \ref{sec:pga-comparison} extensively compares PSC with principal geodesic analysis (PGA), which is the closest method to ours to date. Section \ref{sec:stimulus-space-model} creates a simulated neuroscience experiment to compare PSC to multidimensional scaling (MDS) and persistent cohomology, as well as compare the use of $\alpha_{GD}$ versus $\alpha_{PCA}$ and a random $\alpha$. Section \ref{sec:brain-connectivity-matrices} applies PSC to brain connectivity matrices clustering with $k=1$, and discusses how it can be viewed as PSC on Grassmanian manifolds. Section \ref{sec:video-clustering} demonstrates application of PSC on video clustering with $k>1$. Lastly, Section \ref{sec:gdnonlinear} gives examples where gradient descent outperforms PCA when the data is nonlinearly generated.
Our Python code is freely available at \url{https://www.github.com/crispfish/PSC}.

\subsection{Low-dimensional visualization}
\label{sec:low-dim-visualization}We start by illustrating our algorithm using a low-dimensional example where $N=3, n=2, k=1$. First, we collect 100 noisy data samples $\mathcal{Y} \subset V_{k}(\mathbb{R}^N)$ in the following manner. One hundred points $\mathcal{X} \subset V_{k}(\mathbb{R}^n)$ are selected from the uniform distribution on $V_k(\mathbb{R}^n)$ \cite{chikuse2012statistics}, then mapped to $V_{k}(\mathbb{R}^N)$ using an $\alpha \in V_{n}(\mathbb{R}^N)$ again from the uniform distribution. In this setting, $\alpha$ embeds the unit circle $S^1\subset \mathbb{R}^2$ as a great circle in the unit sphere $S^2\subset \mathbb{R}^3$. 

If desired, we may perturb each point $x$ slightly by some distance $\epsilon>0$ in the following way. For each $x$ we randomly sample a unit vector $u\in \epsilon\cdot S^{N-1}\subset \mathbb{R}^N$, add this to $x$, and project this perturbed vector back on to $V_k(\mathbb{R}^N)$. The resulting points lie in a neighborhood of the image of $\alpha$. See Figure \ref{fig:noisy-sample} for an example. We set $\epsilon=0.8$ in the following experiment.

\begin{figure}[ht]
    \centering
    \begin{subfigure}[b]{0.38\linewidth}
        \includegraphics[width=\linewidth]{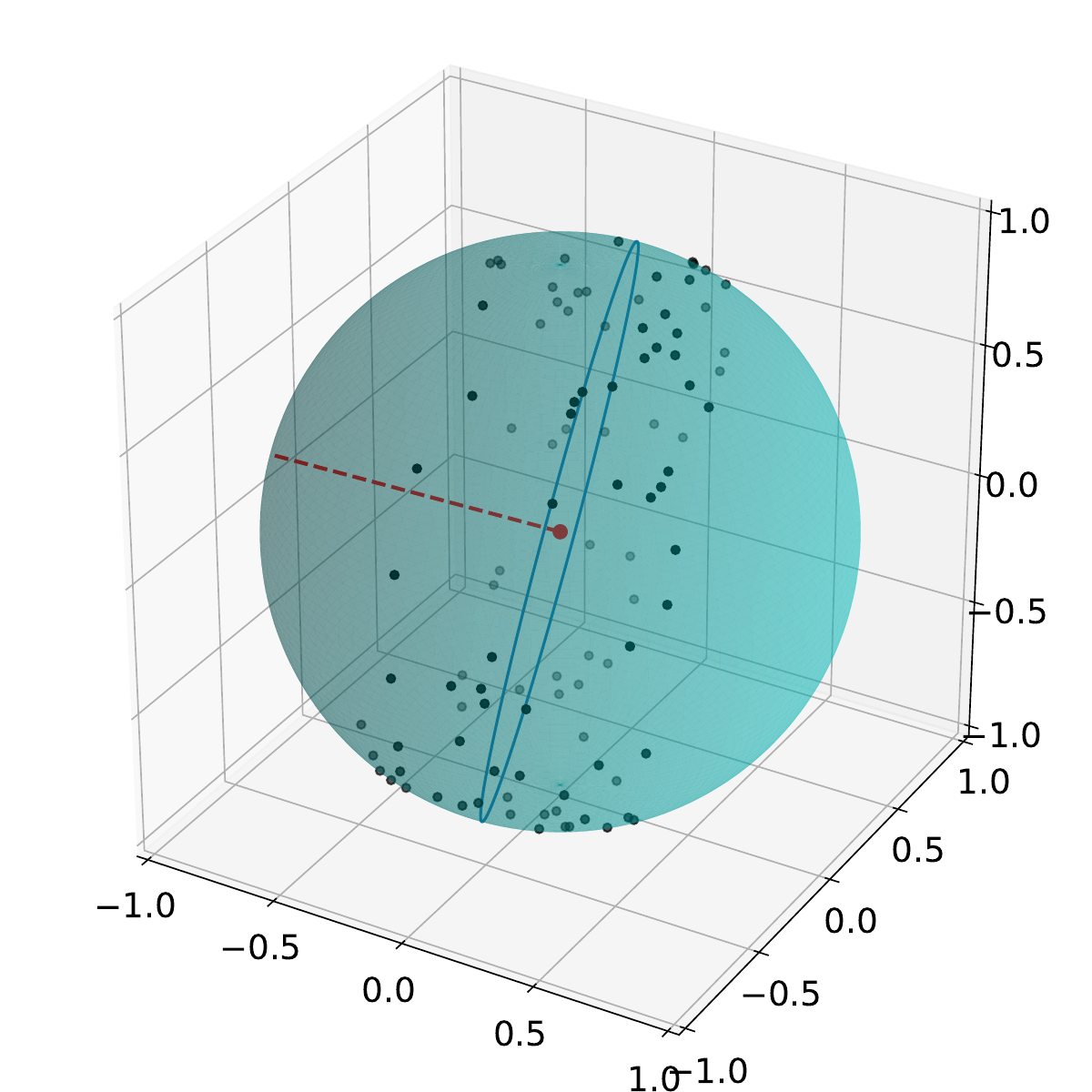}
        \caption{}
        \label{fig:noisy-sample}
    \end{subfigure}
    \hfill
    \begin{subfigure}[b]{0.59\linewidth}
        \includegraphics[width=\linewidth]{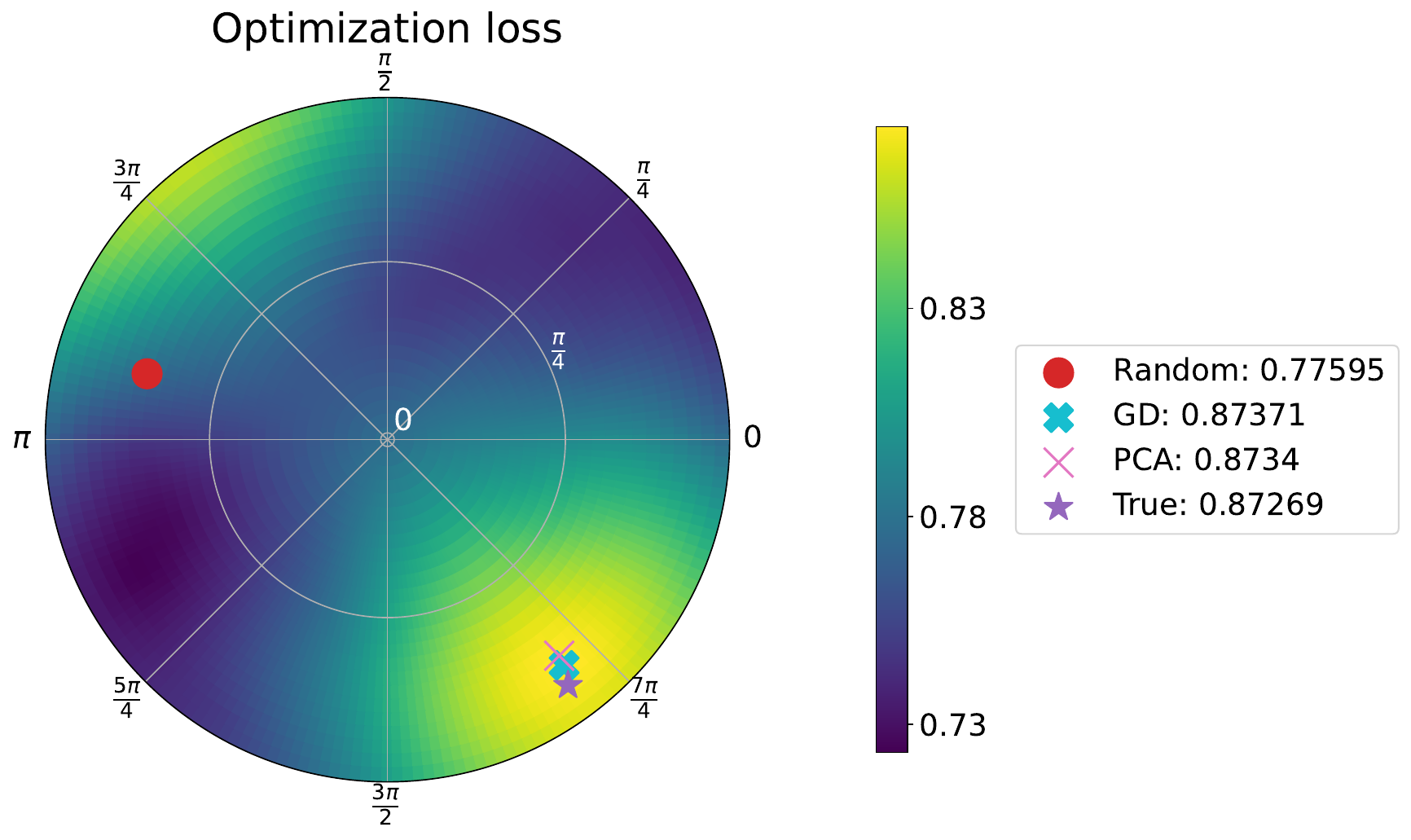}
        \caption{}
        \label{fig:low-dim-sub}
    \end{subfigure}
    \caption{(a) 100 data samples $\mathcal{Y} \subset V_1(\mathbb{R}^3)$ generated with noise level $\epsilon=0.5$. Without noise, all samples lie on the great circle, whose orientation and embedding in the sphere are determined by $\alpha$ as shown in Figure \ref{fig:pipeline_alpha}. (b) Optimization loss landscape \eqref{eq:nuclear_norm_opt} over the projections of $\alpha \in V_{2}(\mathbb{R}^3)$ to the Grassmannian $Gr(1,3) =  \mathbb{R}\textbf{P}^2$. Input data are 100 noisy samples $\mathcal{Y}$ with $\epsilon=0.8$. As described below, the loss function \eqref{eq:nuclear_norm_opt} descends to $Gr(1,3)$ which we plot as the upper hemisphere of $S^2$. The azimuthal angle is $\theta \in [0, 2 \pi]$ and the inclination angle is $\phi \in [0, \frac{\pi}{2}]$ in this polar coordinate plot. Higher function values correspond to lower projection error. ``True" is the $\alpha$ used to generate the noisy samples. 
    }
    \label{fig:low_dim}
\end{figure}

Figure \ref{fig:low-dim-sub} plots the optimization loss landscape \eqref{eq:nuclear_norm_opt} as a function of $\alpha \in  V_{n}(\mathbb{R}^N)$, or more specifically, as a function of the line normal to the column space of $\alpha$. These lines are the span of the red dashed vectors in Figures \ref{fig:pipeline_alpha} and \ref{fig:noisy-sample}. They uniquely define the great circle used to generate the data, and therefore also define $\alpha$ up to an element of $O(2)$. In practice, we calculated these normal vectors simply by taking the cross product of the columns of $\alpha$, and multiplying it by $-1$ if needed to obtain its representative in our model of $\mathbb{R}\textbf{P}^2$. The normal vectors are transformed from Cartesian coordinates $(x, y, z)$ to spherical coordinates using the formula azimuthal angle $\theta = \tan^{-1}(y/x)$ and polar or inclination angle $\phi = \cos^{-1}(z/\sqrt{x^2 + y^2 + z^2})$.

The yellow region in Figure \ref{fig:low-dim-sub} shows the maximum of the loss function here. Note that this region overlaps with the boundary, and continues on the antipodal side of the disc as expected. In this highly noisy setting, $\alpha_{PCA}$ or $\alpha_{GD}$ achieve better performance (i.e. lower projection error) than the actual $\alpha$ (True) that was used in data generation. Lastly, there is a difference between $\alpha_{PCA}$ and $\alpha_{GD}$, showing that the gradient descent algorithm works as it should and is meaningful when the data does not entirely come from a low-dimensional space (e.g. $\epsilon > 0 $).

\subsection{Comparison with PGA} \label{sec:pga-comparison}
In this section, we compare our algorithm to principal geodesic analysis (PGA) as described in \cite{fletcher2004principal}. The package Geomstats \cite{geomstats} was used to apply PGA to data generated synthetically using the process of the previous section. To evaluate the performance of our method, we compare the ratios of appropriately defined variance when projecting using PGA as opposed to PSC.

We define the intrinsic Frechet mean and the Frechet variance as

\begin{equation*}
\mu_{FM}(\mathcal{Y}) = \arg\min_{\mu \in V_k(\RR^N)} \sum_{y\in \mathcal{Y}} D(\mu, y)^2, \quad \sigma^2(\mathcal{Y}) = \frac{1}{|\mathcal{Y}|}\sum_{y\in \mathcal{Y}} D(y, \mu_{FM})^2.
\end{equation*}
It is possible that a mean defined in this way is not unique, especially in the case where the points $S$ lie on a lower-dimensional submanifold with some symmetry.

With variance defined in this way, we consider then the ratio of the variance of the projected points $\pi_\alpha(\mathcal{Y})\subset V_k(\mathbb(\mathbb{R}^n)$ to the variance of the original data points $\mathcal{Y}\subset V_k(\mathbb{R}^N)$. 
Here we set $s = 200, N = 11, n = 6, k = 1$, $\epsilon \in \{0.01, 0.05, 0.1, 0.5\}$, and then apply the procedure described above to generate synthetic data. We then apply both our algorithm and PGA to reduce the dimension, and compute the variance of the projected data.

\begin{figure}[ht]
    \centering
    \includegraphics[width=0.75\linewidth]{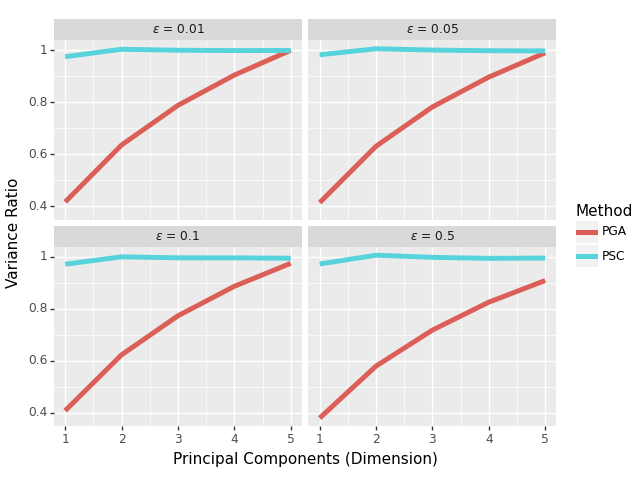}
    \caption{Variance ratio for PSC compared to principal geodesic analysis. Here the $x$-axis is the dimension of the submanifold which is the target of the projection. That is, in dimension 1, PGA projects to a geodesic 1-manifold, while PSC projects to a circle.}
    \label{fig:pga_vs_psf}
\end{figure}

The variance ratio is the quotient of the two variances mentioned above. Each ratio for a fixed dimension of submanifold (either Stiefel in the case of PSC or geodesic in the case of PGA) is shown on the vertical axis of Figure \ref{fig:pga_vs_psf}. Each experiment was performed 100 times, and the results shown in the figure are averaged over 100 iterations.
From this standpoint of expressed variance, our algorithm outperforms PGA for smaller numbers of principal components, capturing a larger amount of the variance contained in the data. Moreover, PSC's advantage over PGA increases as $\epsilon$ increases, demonstrating a robustness to higher levels of noise. 

\subsection{Application to a stimulus space model} \label{sec:stimulus-space-model}
In this section, we discuss an application of our methodology to data arising from neuroscience. Many neural phenomena can be effectively analyzed using a \emph{stimulus space model} \cite{ben-yishaiTheoryOrientationTuning1995,  churchlandNeuralPopulationDynamics2012, deneveReadingPopulationCodes1999, manteContextdependentComputationRecurrent2013, seungSimpleModelsReading1993}. This model supposes the existence of an underlying metric stimulus space $(S, d_S)$ and assigns neurons $\sett{1, 2, \ldots, k}$ in a population to points $\sett{n_1, n_2, \ldots, n_k}$ in $S$. Then, a stimulus is modeled as a time-series $\sett{s_1, s_2, \ldots, s_N}$ of points in $S$. In general, a neuron's response to a stimulus decreases with distance. Thus, the response $r_i(s_k)$ of neuron $i$ to stimulus $s_k$ can be described by
\begin{equation}
    r_i(s_k) = f_i(d_S(n_i, s_k))
\end{equation}
where $f_i$ is some (usually unknown) monotonically decreasing function.

A central problem in modern computational neuroscience is to extract information about the stimulus space or the stimulus itself from the activity of the encoding neurons alone. The fact that the response functions $f_i$ are unknown, highly nonlinear, and often vary with time means that standard linear methods are often of limited use. In many observed systems, the underlying stimulus space has an inherent and nontrivial topology. For instance, head direction \cite{chaudhuriIntrinsicAttractorManifold2019, kimRingAttractorDynamics2017, turner-evansAngularVelocityIntegration2017a} and orientation in visual cortex \cite{ben-yishaiTheoryOrientationTuning1995} can be modeled by circular stimulus spaces. In the case when the stimulus space is fully sampled, persistence techniques can be used to decode topological information about the stimulus \cite{rybakkenDecodingNeuralData2019}. See also \cite{Kang-Xu-Morozov-State_space}, \cite[Section 6]{scoccola_toroidal}, and \cite[Section 6]{schonsheck2023spherical}.

We simulate the case of a circular stimulus space that has been only partially sampled. For instance, while head direction can often be well-described with circular coordinates, limited range of motion (e.g., looking roughly 90 degrees to the left and right) means that the associated circular stimulus space may only be sampled from, for instance, $0$ to $\pi$. In this case, methods as in \cite{rybakkenDecodingNeuralData2019} will be ineffective as they require detecting a high-persistence feature of the data, such as a ``full'' circle, using persistent cohomology.

In further detail, we consider a population $\sett{1, 2, \ldots, 100}$ of 100 neurons identified with uniformly distributed points $\sett{n_1, n_2, \ldots, n_{100}}$ on a half-circle, from $0$ to $\pi$. Each neuron $n_i$ is assigned a response function $  f_i = \max(1 - m_i x, 0)$

where the slope $m_i$ is chosen uniformly randomly between 25 and 50. 
We simulated neural activity in response to a stimulus modeled by a random walk $\{t_1, t_2, \allowbreak \ldots, t_{13,000}\}$ on the half-circle comprising 13,000 steps. Activity of the neural population at step $t_i$ can be recorded as vector with $N=100$ entries, one for each neuron. Thus, the collective activity over the whole time series can be encoded as a point cloud in $\RR^{100}$ with $|\mathcal{Y}| = 13,000$ points.

\begin{figure}[ht]
    \centering
    \includegraphics[width=0.63\linewidth]{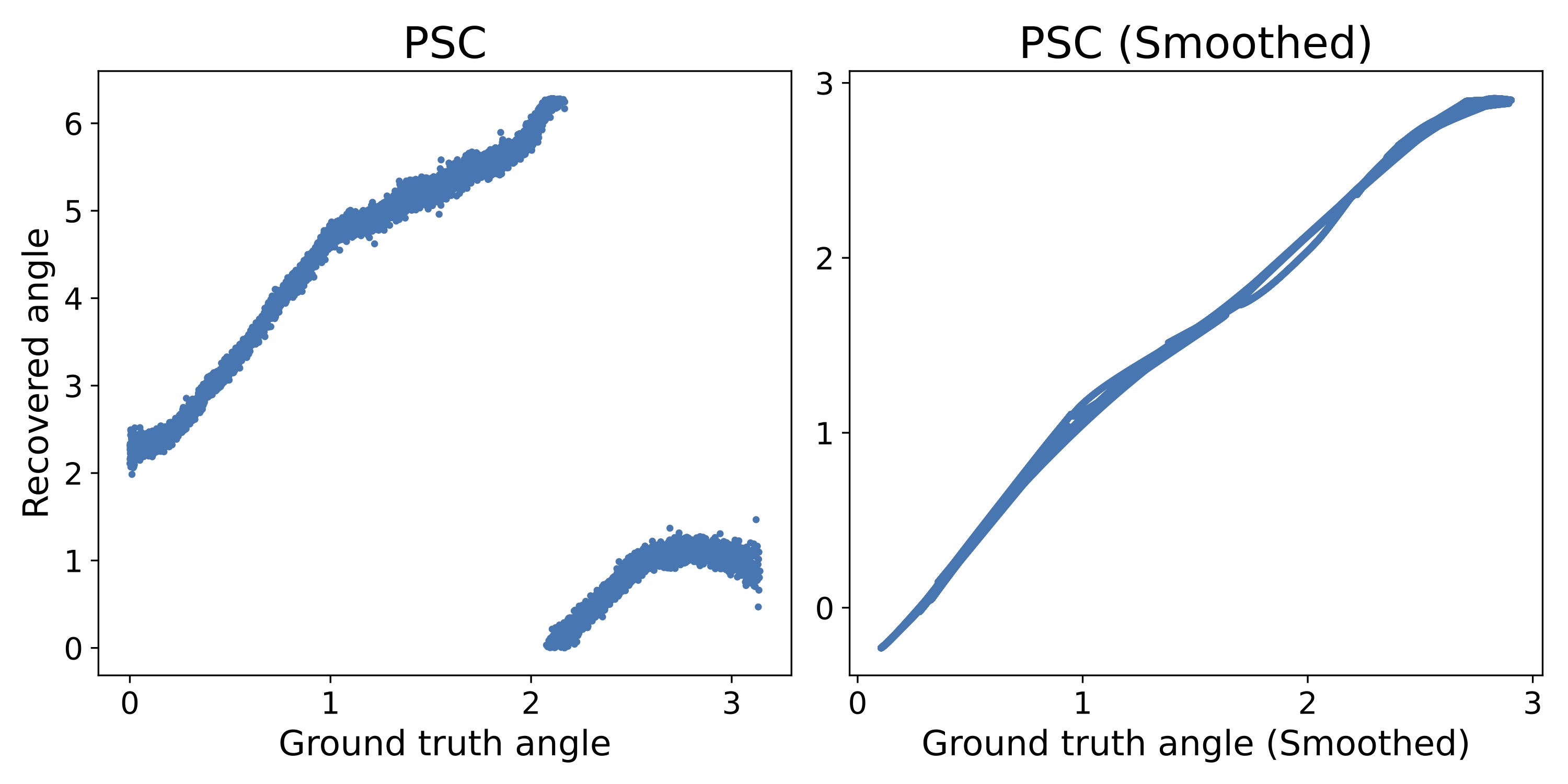}%
    \includegraphics[width=0.35\linewidth]{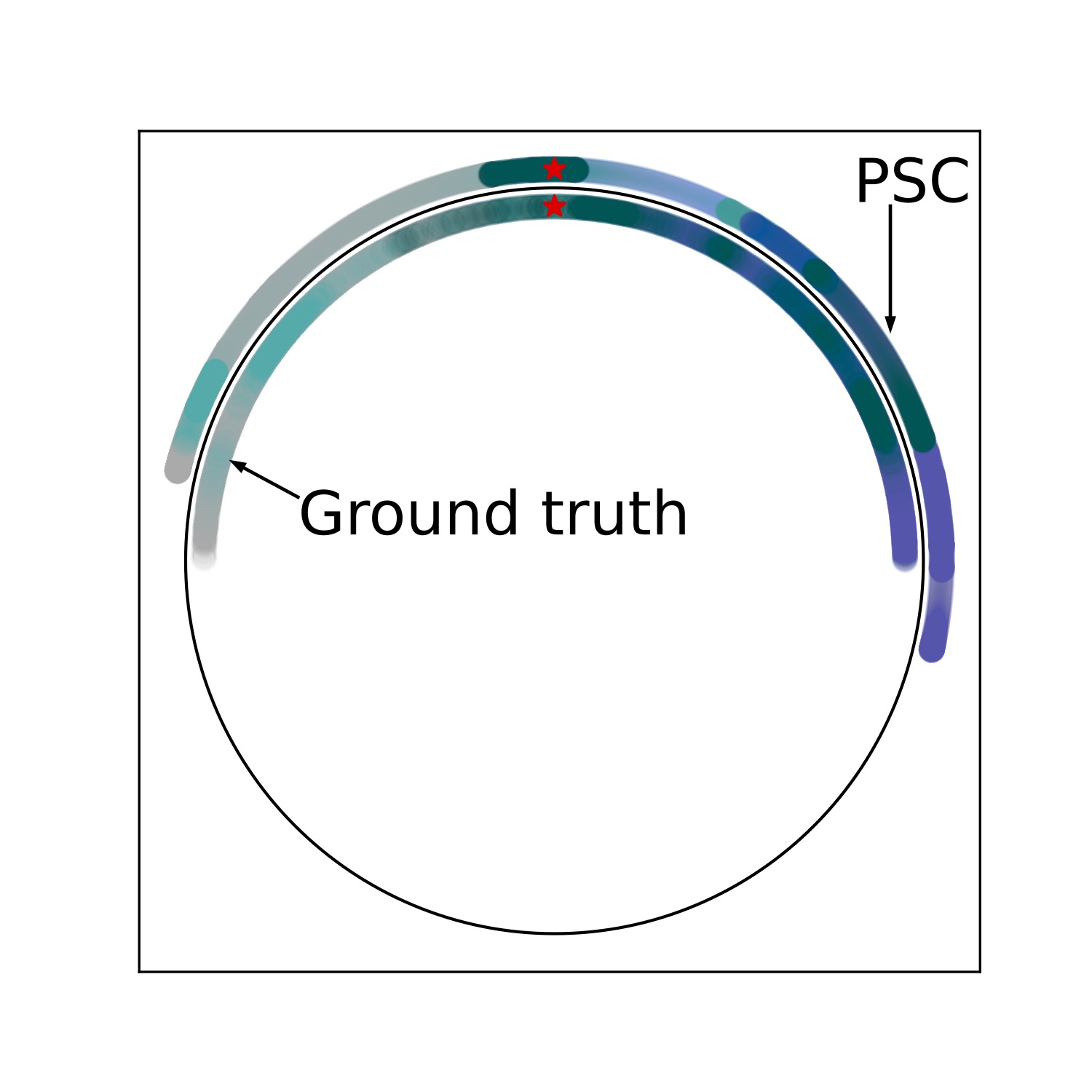}
    \caption{(Left) Angular coordinates of a geometric walk on a half-circle against  coordinates recovered via PSC from neural response. (Middle) Post-processed and smoothed recovered coordinates against smoothed ground truth angles. (Right)  Angles plotted on a circle. Smoothed ground truth angles are plotted in the inner circle, while the post-processed and smoothed PSC angles are in the outer ring. PSC angles are shifted to align with ground truth angle at $t=0$ (red stars). Color intensity correlates with time steps. }
    \label{fig_neuro_ex_coord_recovery_with_pipeline}
\end{figure}

First, we centered the data by subtracting the neural population's average response from each entry. Then, we normalized the data to obtain a point cloud on $S^{99}$. Hypothesizing that the neural stimulus (i.e., random walk) could be well-approximated by a time series on a half-circle, we ran Algorithm \ref{algorithm_stiefel} with $n = 2$. In Figure \ref{fig_neuro_ex_coord_recovery_with_pipeline}, we compare the recovered data on $S^1$ to the ground truth coordinates of the random walk as in \cite[Section 3]{desilvaPersistentCohomologyCircular2011}.

The takeaway of Figure \ref{fig_neuro_ex_coord_recovery_with_pipeline} is that our method accurately recovers the neural stimulus up to rotation and scaling, each of which can easily be addressed in post-processing with knowledge of the spatio-temporal relationship of a few points of the ground truth input. Indeed, after 1) uniformly rotating the recovered coordinates so that the recovered coordinate of the initial time step matched that of the ground truth, and 2) scaling the recovered coordinates on the circle, we recover the stimulus (i.e., random walk on the half-circle) almost exactly. Figure \ref{fig_neuro_ex_recovered_time_series} shows the recovered walk with no post-processing (top right) and with post-processing (bottom right) as well as the ground truth walk with and without Gaussian smoothing (top and bottom left, respectively). Note that the y-axis in the upper right-hand plot is ``circular,'' i.e., $2\pi$-periodic, as it represents an angle on a circle and accounts for the apparent discontinuity in the plot. This periodicity also means that points near the top of the plot are, in fact, mapped quite close to points near the bottom, which explains the spurious point at approximately (12500, 0.25).

\begin{figure}[ht]
\begin{subfigure}[t]{0.5\linewidth}
\centering
\fbox{\includegraphics[width=0.95\linewidth]{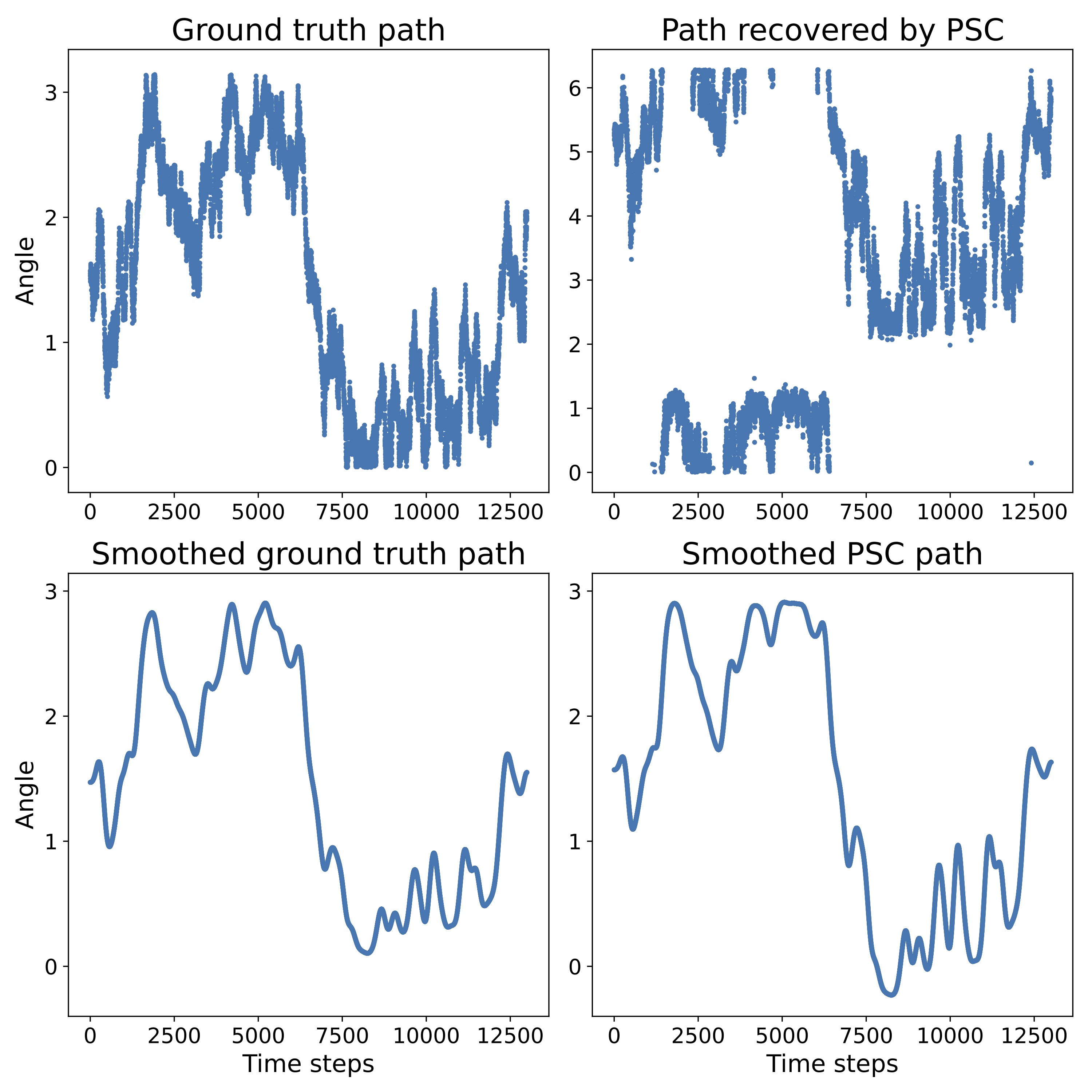}}
\caption{Ground truth vs. PSC}
\end{subfigure}%
\begin{subfigure}[t]{0.5\linewidth}
\centering
\fbox{\includegraphics[width=0.95\linewidth]{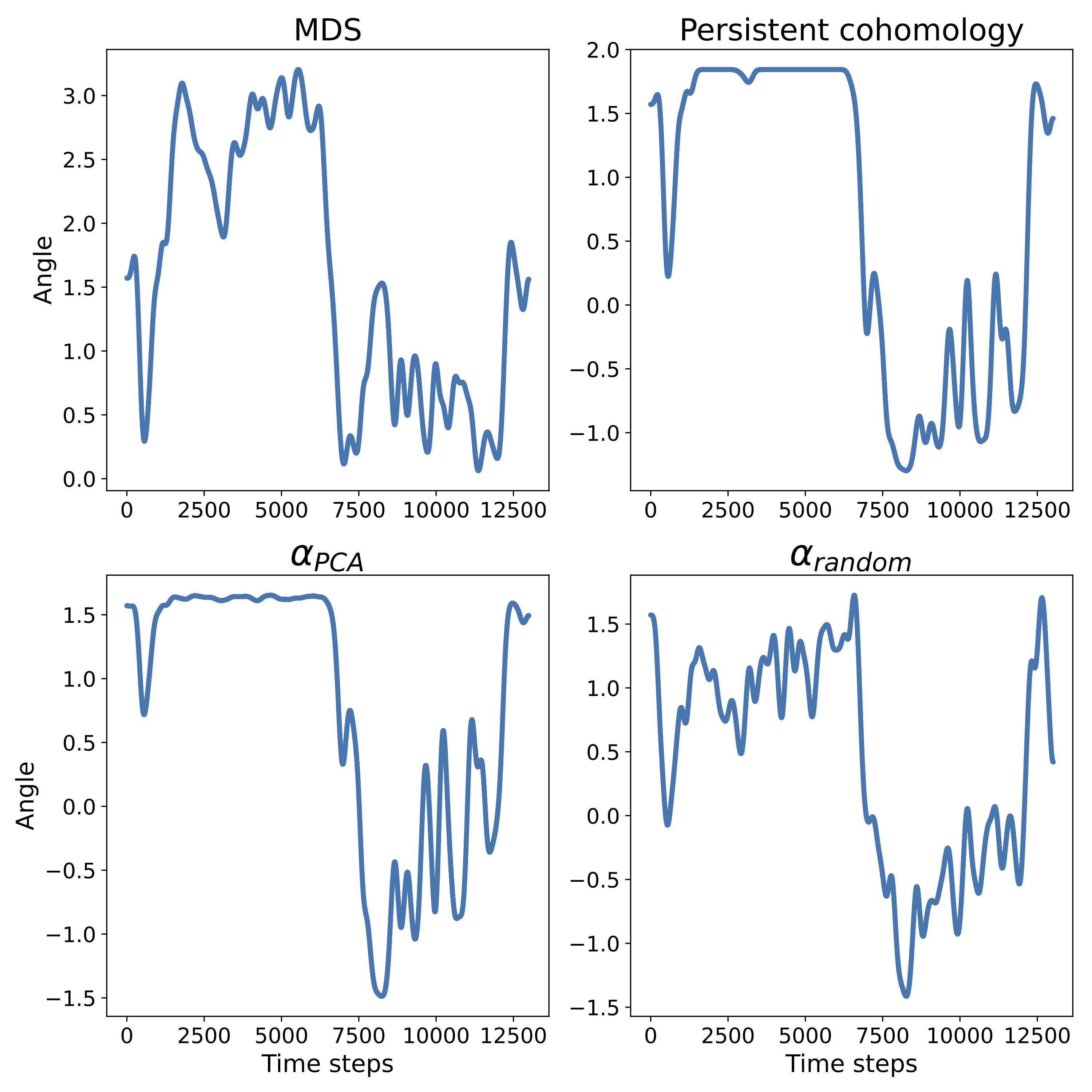}}
\caption{Other methods}
\end{subfigure}
    \caption{(a) Ground truth recovered time series (top left), recovered time series with no post-processing (top right), smoothed ground truth time series (bottom left), and smoothed, post-processed recovered time series (bottom right). MSE between post-processed and smoothed paths is 0.027 (rounded up to 3 decimal points). (b)  Recovered paths after post-processing and smoothing from MDS (top left), persistent cohomology (top right), with $\alpha_{PCA}$ (bottom left), and with a random $\alpha$ (bottom right). MSEs are $\alpha_{PCA} = 0.655, \alpha_{random} = 1.212$, MDS = $0.199$, and persistent cohomology = $0.770$.}
\label{fig_neuro_ex_recovered_time_series}
\end{figure}

By contrast, existing methods such as multidimensional scaling or persistent cohomology fail to accurately recover the input stimulus from the recorded neural response. Figure \ref{fig_neuro_ex_recovered_time_series} shows the post-processed and smoothed paths from using multidimensional scaling (top left, limited to 150 iterations to avoid excessive computation time) and persistent cohomology (top right). For the persistence approach, we computed the persistent cohomology of the 100-dimensional point cloud and used the circular coordinates algorithm of \cite{pereaSparseCircularCoordinates2020} to extract a map from the data to $S^1$ from the longest-lived (though still very short) persistent cohomology class. Lastly, this example also highlights the importance of choosing the ``right'' $\alpha$ as described in Algorithm \ref{algorithm_stiefel}. Figure \ref{fig_neuro_ex_recovered_time_series} also shows coordinate recovery using a random $\alpha$ (bottom left) and $\alpha_{PCA}$ (bottom right). Using a random $\alpha$ clearly yields poor recovery and while $\alpha_{PCA}$ accurately recovers some points, there is notable improvement using $\alpha_{GD}$. 

\subsection{Application to brain connectivity matrices} \label{sec:brain-connectivity-matrices}
Next, we show how our algorithm can be applied to real-world data. In many neuroscience applications, brains are divided into $N$ regions before their activities are measured by, for example, fMRI.  Positive semidefinite (PSD) matrices in $\mathbb{R}^{N \times N}$ naturally arise in this context as graph adjacency matrices, graph Laplacian matrices, network connectivity matrices, and partial correlation matrices \cite{mantoux2021understanding, muldoon2016stimulation, vskoch2022human}. 
As eigenvectors contain important information about these matrices \cite{atasoy2016human,chen2020learning}, it is intuitive to represent the set of top $k$ eigenvectors as an orthonormal $k$-frame and thus as a point in the Stiefel manifold  $V_k(\mathbb{R}^N)$. They can then be used for various downstream tasks such as clustering and brain-computer-interface control, for which dimensionality reduction is useful.

In this section, we demonstrate our algorithm on pairwise functional connectivity matrices between $N=83$ brain regions, using data from \cite{muldoon2016stimulation}. This is derived from three scans from eight subjects, resulting in a dataset of 24 PSD matrices in $\mathbb{R}^{83 \times 83}$. From each matrix, we took $k=1$ eigenvector corresponding to the largest eigenvalue, and for each of the resulting 24 data points on $V_1(\mathbb{R}^{83})$, we projected them to $V_1(\mathbb{R}^{3})$ using our proposed algorithm. The number $n=3$ is chosen mainly for visualization, and $k=1$ is chosen based on ease of visualization and \cite{mantoux2021understanding}. However, $k > 1$ is also commonly used in neuroscience depending on the scientific problem; for example, $k=60$ in \cite{chen2020learning}. 

\begin{figure}[ht]
    \centering
    \includegraphics[width=0.3\linewidth]{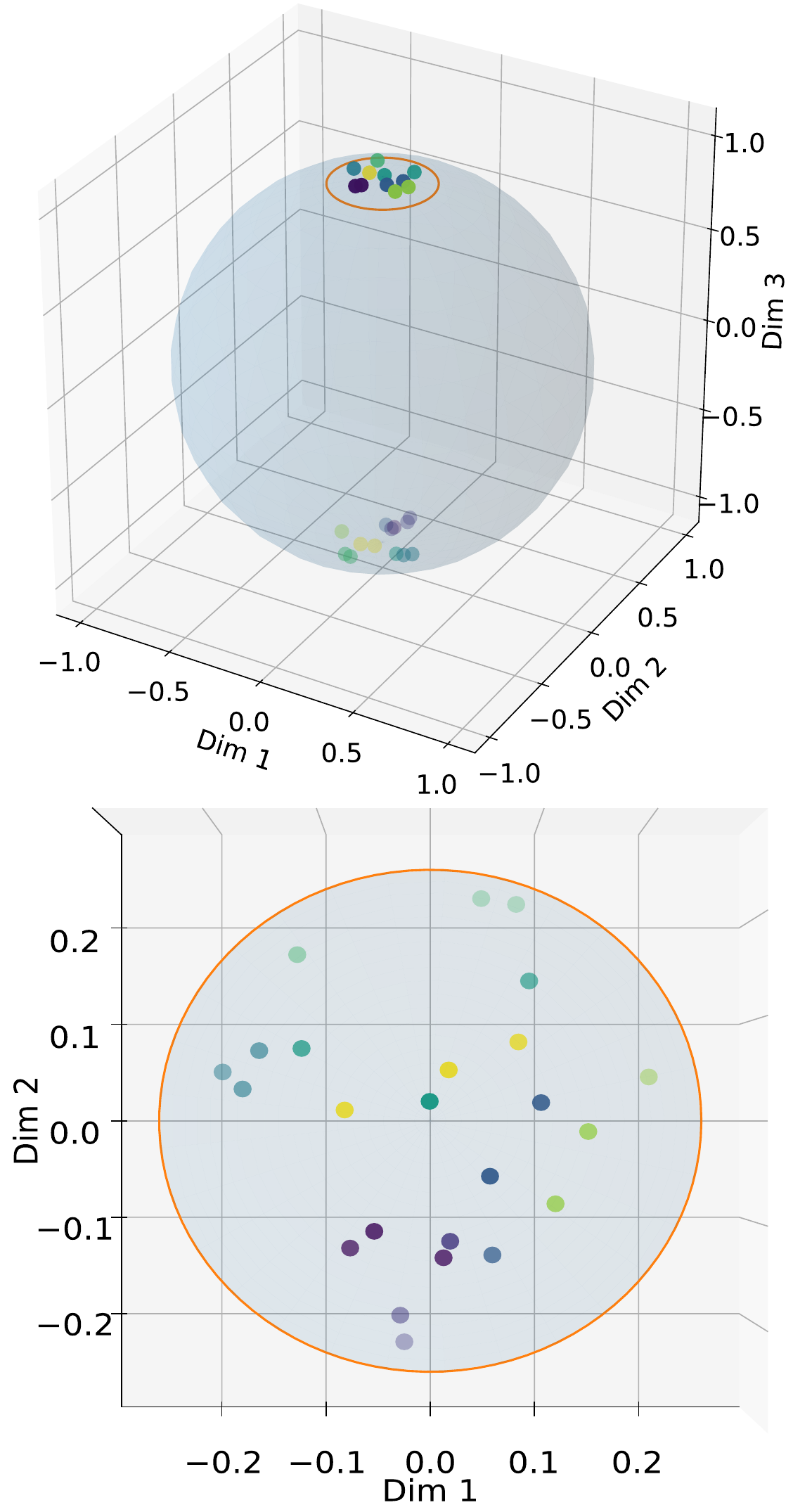}
    \includegraphics[width=0.66\linewidth]{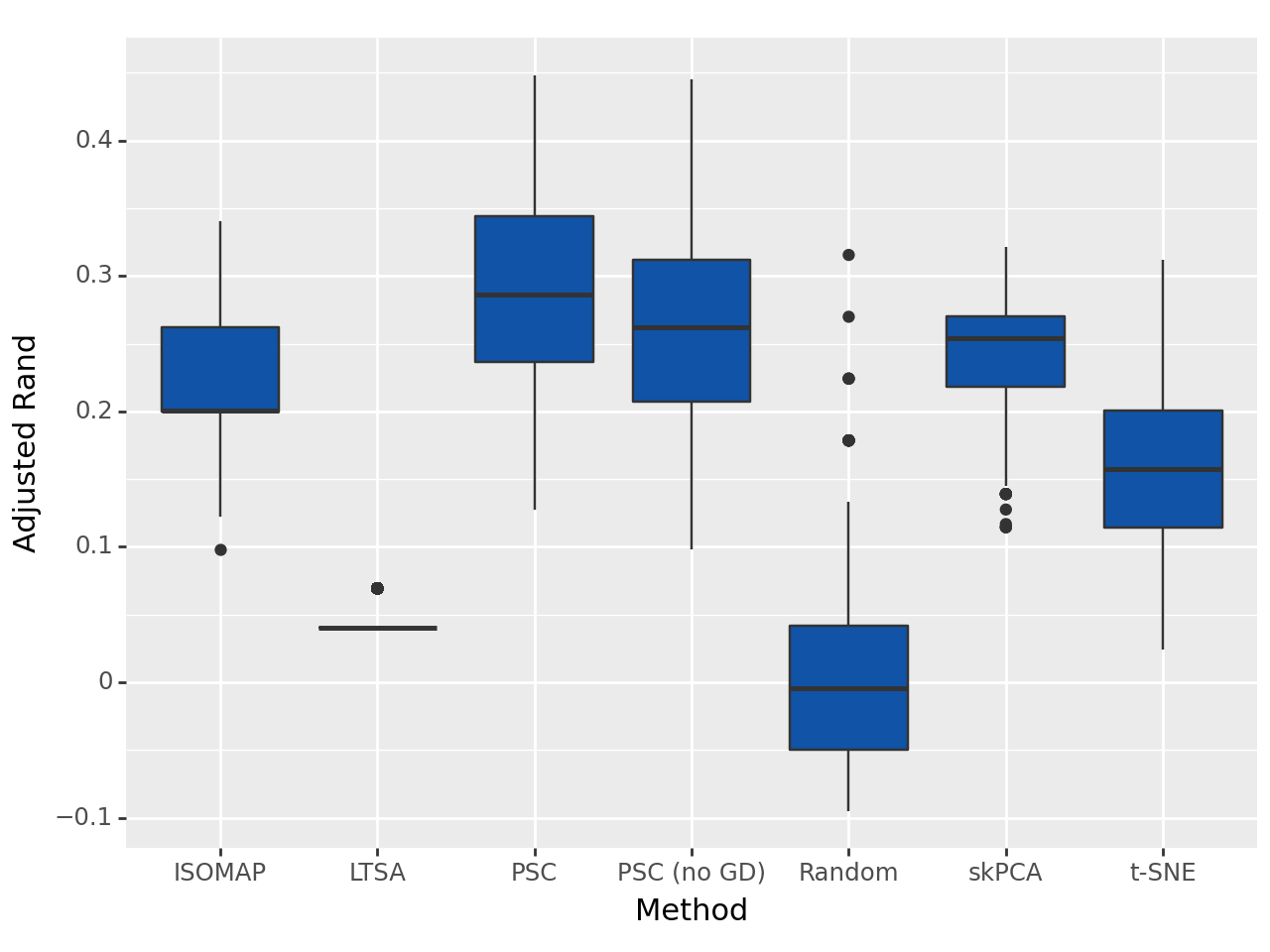}
    \caption{(Top left) 24 brain connectivity matrices projected from $\mathbb{R}^{83\times 83}$ to $V_1(\mathbb{R}^{83})$ to $V_1(\mathbb{R}^{3})$, colored by their corresponding eight human subjects. (Bottom left) Zoomed-in and bird's-eye view of the data after all points in the lower hemisphere are reflected across the $z=0$ plane to the upper hemisphere. (Right) A boxplot from the results of a test comparing adjusted Rand index of Riemannian k-means clustering on $V_1(\mathbb{R}^3)$ to other dimensionality reduction methods and random label assignment. PSC with gradient descent outperforms other methods.}
    \label{fig:brain_projected}
\end{figure}

On the left side of Figure \ref{fig:brain_projected}, we plot the 24 data points projected onto $V_1(\mathbb{R}^{3})$. They are colored according to the eight human subjects. The right plot in Figure \ref{fig:brain_projected} shows the zoomed-in and bird's-eye view of the data after all points in the lower hemisphere are reflected across the $z=0$ plane to the upper hemisphere. Samples from the same human subjects seem to sit closer to each other, which is remarkable given the amount of information that was lost when dimensionality was reduced from $3,403=(83^2-83)/2$ unique numbers in $\mathbb{R}^{83\times 83}$ to only $2$ unique numbers in $V_1(\mathbb{R}^{3})$. 

To measure PSC's performance on this data set, we applied k-means clustering to the dimensionally reduced data lying on $S^2$, with the goal of recovering the groupings coming from the eight patients appearing in the study. These clusters were then compared to the ground truth using the adjusted Rand index. This measure is bounded below by -0.5 (for clusterings which are worse than the average random assignment) and bounded above by 1 (for clusterings identical to the ground truth).\newline\indent
We compare clustering after applying PSC to other leading dimensionality reduction methods, as well as random label assignment, as follows. Each application of k-means clustering results in a slightly different adjusted Rand index, so we applied the clustering algorithm 1000 times to obtain a Monte Carlo approximation for the distribution of scores. A boxplot of these scores for each dimensionality reduction method appears in Figure \ref{fig:brain_projected}.  The adjusted Rand score for PSC clearly outperforms random label assignment, preserving relevant information even after drastic reductions in both the number of principal components and the dimension of the Stiefel manifold. Moreover, PSC with gradient descent outperforms the other methods. These comparisons still hold for other measures of clustering performance, e.g. Fowlkes-Mallows, homogeneity, completeness, etc. 

Finally, we remark that the procedure described above can also be understood as an implementation of Algorithm \ref{algorithm_grassmannian}, a dimensionality reduction on Grassmannian manifolds rather than Stiefel manifolds, in the following way. Consider the commutative diagram below, where the vertical maps are the natural quotient maps, $\alpha$ is as in Algorithm \ref{algorithm_stiefel}, and $\bar{\alpha}$ is naturally induced on the associated quotient spaces.

\begin{align}\label{eqn_brain_data_comm_diagram}
    \xymatrix{
    V_1(\RR^3) \ar[d] \ar[r]^-\alpha & V_1(\RR^{83})\ar[d]\\
    Gr(1, \RR^3) \ar[r]^-{\bar{\alpha}} & Gr(1, \RR^{83})\\
    }
\end{align}

For each point $y$ in our data set of 24 PSD matrices, we consider its largest eigenvalue $\lambda(y)$ and associated eigenspace $E_{\lambda(y)}$, which is an element of $Gr(1, \RR^{83})$. Selecting a particular unit-length eigenvector from each eigenspace is well-defined only up to multiplication by $-1$, and constitutes a ``lift'' of $E_{\lambda(y)}$ to an element $\tilde{y}$ of $V_1(\RR^{83})$. Then, we perform Algorithm \ref{algorithm_stiefel} to obtain data in $V_1(\RR^3)$ (left hand of Figure \ref{fig:brain_projected}) and finally project to $Gr(1, \RR^3)$ (right hand of Figure \ref{fig:brain_projected}).

\subsection{Application to video}
\label{sec:video-clustering}

\begin{figure}[ht]
    \centering
    \includegraphics[width=0.4\linewidth]{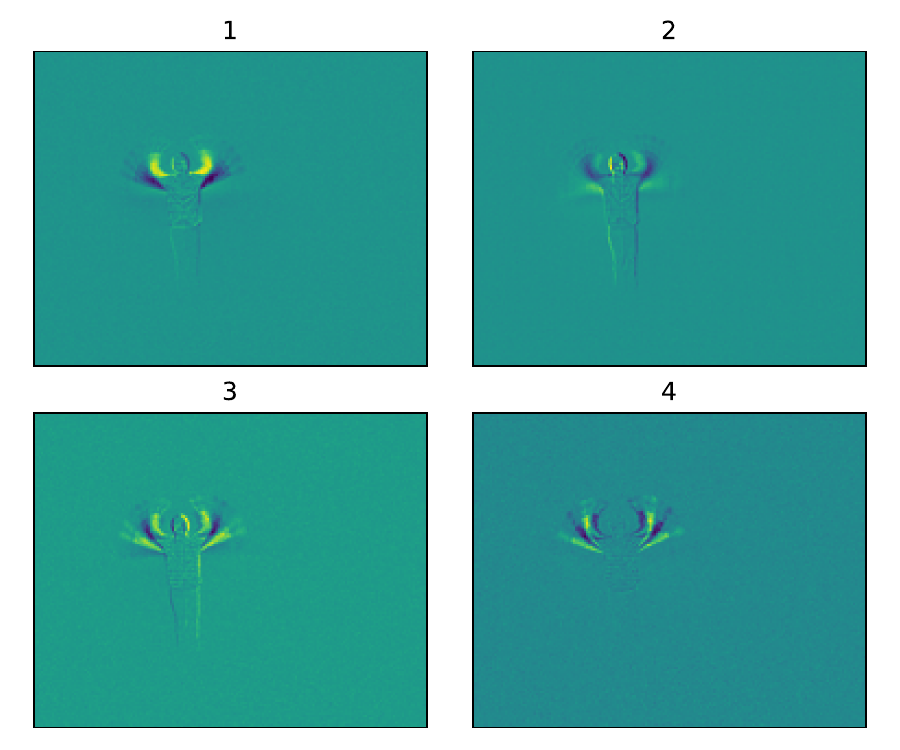}%
    \includegraphics[width=0.4\linewidth]{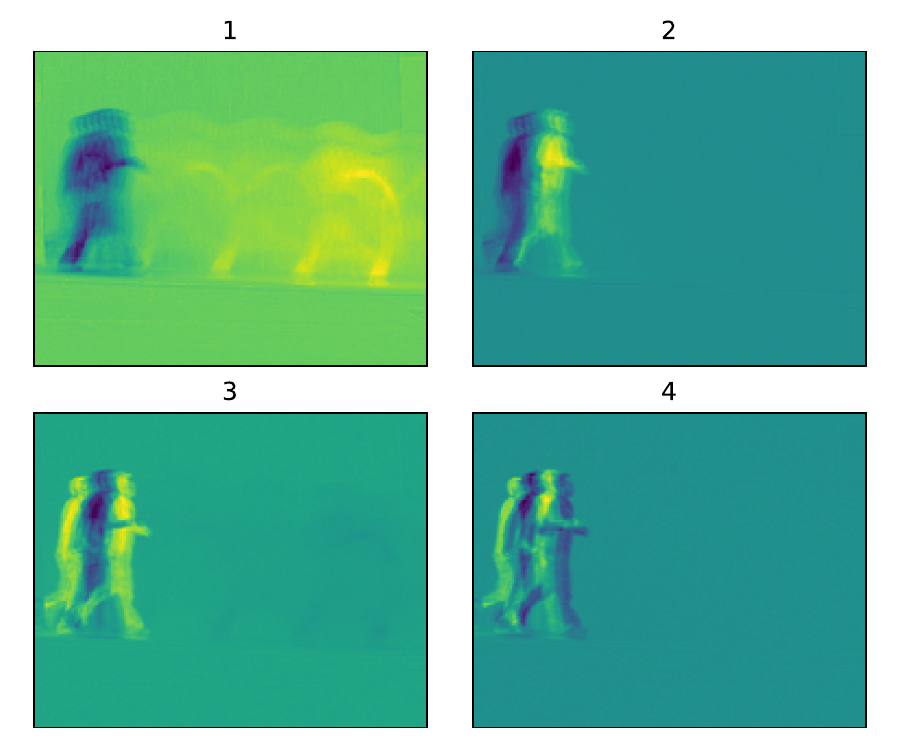}
    \caption{Video clips represented as data points on the Stiefel manifold $V_{4}(\mathbb{R}^{25920})$. Left: Person waving both arms. Right: Person running to the right.}
    \label{fig:videos}
\end{figure}

In this example, we explore how PSC may be applied to video data with $k>1$. Works such as \cite{turaga2011statistical,hayat2014automatic} have studied how video frames can be represented as data points living on Stiefel and Grassmanian manifolds, which can then be used for facial expression classification or clustering. The action database in \cite{gorelick2007actions} has low resolution videos of 144 $\times$ 180 pixels at 50 fps, including one of someone waving both arms, and one of another person running from left to right. Following the processing steps in \cite{gorelick2007actions}, we used a sliding window size of eight frames every four frames to divide up the two videos into 15 and 13 shorter clips. Then, we turned them from RGB to grayscale and subtracted the mean pixel values to remove the stationary background. This yields $|\mathcal{Y}|=28$ video cubes of $\mathbb{R}^{144 \times 180 \times 8}$, which can be flattened in pixel dimensions to size $\mathbb{R}^{25920 \times 8}$. Apply SVD and take the $k=4$ leading left singular vectors as in \cite{hayat2014automatic} to create a dataset $\mathcal{Y}$ living on $V_{4}(\mathbb{R}^{25920})$. Remarkably, this process captures the movements in the video; see Figure \ref{fig:videos}.

To demonstrate a potential downstream task, we applied spectral clustering (as in \cite{hayat2014automatic}), which correctly classified 22 out of 28 clips. Then, we applied PSC on $\mathcal{Y}$ to reduce dimensionality from $N=25920$ to $n=20$, and re-applied spectral clustering. Impressively, the classification results were the same. This shows that while video data live in very high-dimensional ambient space, algorithms such as PSC can successfully embed the discriminative information in a much lower  ($\approx 0.077\%$) dimensional space.

\subsubsection{Choosing \texorpdfstring{$n$}{n} in real-world data}\label{appendix_choosing_n}

In the case of real-world data where the best $n$ might not be known a priori, we recommend using the eigenvalues or singular values that result from deriving $\alpha_{PCA}$ (Definition \ref{def:pca}). There may be jumps or an ``elbow'' indicating a natural cutoff value, similar to how practitioners choose the number of principal components in standard PCA (e.g. Minka's MLE \cite{minkamle2000}). See Figure \ref{fig:singular-values}.

\begin{figure}[th]
    \centering
    \includegraphics[width=0.3\linewidth]{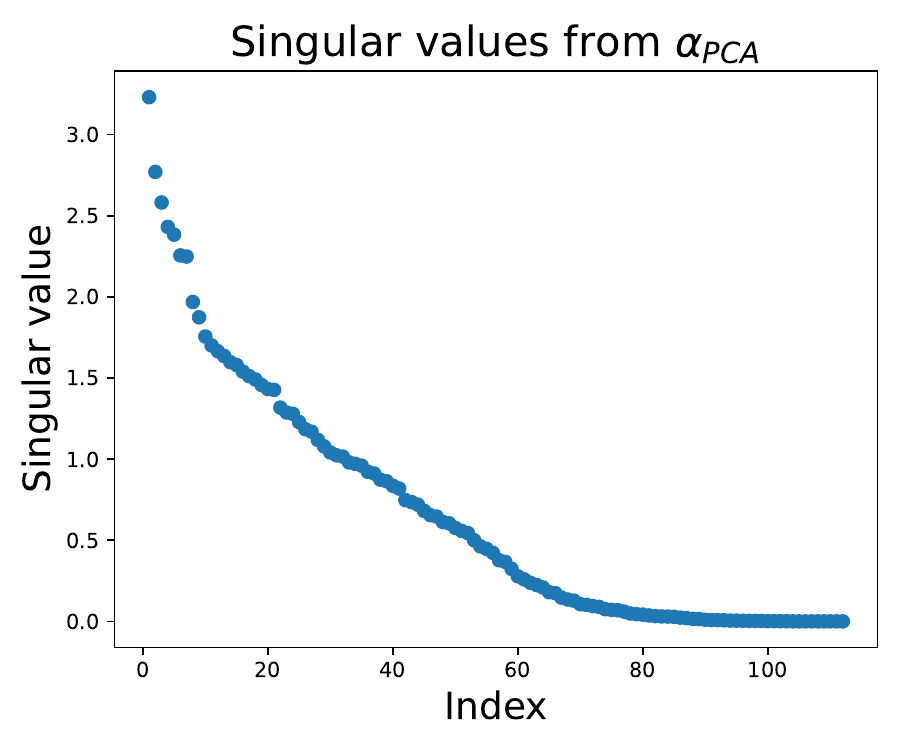}%
    \includegraphics[width=0.3\linewidth]{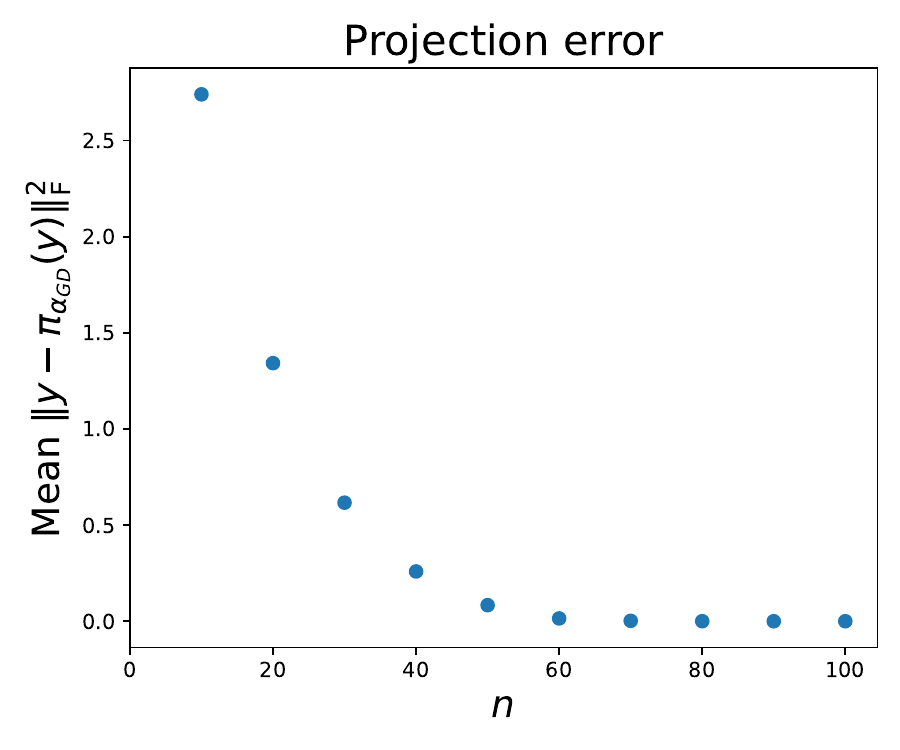}
    \caption{PSC dependence on $n$ in video data. Larger $n$ leads to less projection error, but there is a trade-off in that smaller $n$ can sometimes lead to more robustness or speed in downstream tasks due to stronger denoising or compression behavior.}
    \label{fig:singular-values}
\end{figure}

\subsection{Gradient descent is necessary when data is nonlinearly generated}\label{sec:gdnonlinear}

In most of the previous examples, the difference between $\alpha_{PCA}$ and $\alpha_{GD}$ is relatively minor. A notable exception is in the case of the highly nonlinear embedding of a (half) circle discussed in the stimulus space example of Section \ref{sec:stimulus-space-model}, where $\alpha_{GD}$ significantly outperforms $\alpha_{PCA}$. In the present section, we discuss a family of synthetic data sets arising from vector bundles, similar in spirit to the stimulus space example, in which $\alpha_{GD}$ produces statistically significant improvements in mean squared error over $\alpha_{PCA}$. These examples also illustrate the need for an \textit{equivariant} dimensionality reduction pipeline as data $X$ is discontinuously embedded in $V_k(\mathbb{R}^n)$, and one only obtains a continuous embedding after projection to the relevant Grassmannian manifold.

Here we outline how Stiefel manifold-valued data naturally arises from vector bundles. Examples of real-world data well-modeled by this modality include the study of orientability of attractors in certain dynamical systems \cite[Section 7.2]{Scoccola_Perea_Approx_and_Discrete}, data sets of lines in Euclidean space (e.g., weights learned by various neural networks) \cite[Section 7.3]{Scoccola_Perea_Approx_and_Discrete}, and synchronization in cryogenic electron microscopy \cite[Section 7.4]{Scoccola_Perea_Approx_and_Discrete}. For a thorough treatment of vector bundles, we refer the reader to \cite{Hatcher_VB}. For further details on the method described below, see \cite{perea2018multiscale,pereaSparseCircularCoordinates2020} as well as \cite{Perea_Polanco,  Scoccola_Perea_Approx_and_Discrete, scoccola_Perea_FibeRed}. 

Throughout, let $X \subset M$ be a data set of interest, where $M$ is a paracompact topological space, e.g., a closed manifold. Begin by noting that the inclusions $\mathbb{R}^m \subset \mathbb{R}^{m+1} \subset \cdots$ induce inclusions $Gr(k, \mathbb{R}^m) \subset Gr(k, \mathbb{R}^{m+1}) \subset \cdots$ and let $Gr(k, \mathbb{R}^\infty)$ be the infinite union $\cup_{m}Gr(k, \mathbb{R}^m)$. It is a well-known fact \cite[Theorem 1.16]{Hatcher_VB} that the set of isomorphism classes Vect$^k(M)$ of rank-$k$ vector bundles over $M$ is in bijection with the set of homotopy classes of maps from $M$ to $Gr(k, \mathbb{R}^\infty)$. That is, $\text{Vect}^k(M) \cong [ M, Gr(k, \mathbb{R}^\infty)]$ and so giving the data of an isomorphism class of rank-$k$ vector bundles over $M$ is equivalent to selecting a homotopy class of maps $[g]$ from $M$ to $Gr(k, \infty)$. Thus, given a data set $X \subset M$ and a vector bundle over $M$, one obtains a data set in $Gr(k, \infty)$ by considering the image $g(X)$ of the data under a representative $g$, known as the classifying map of the bundle, of the class $[g]$ specified by this bijection.

Given a rank-$k$ vector bundle over $M$, Perea \cite{perea2018multiscale} details how to construct an explicit (i.e., data-friendly) representation of the corresponding classifying map. In brief, we begin with 1) a trivializing open cover $\mathcal{U} = \{ U_j \}_{j=1}^J$ of $M$ and 2) a partition of unity $\{ \phi_j \}_{j=1}^J$ subordinate to $\mathcal{U}$, and consider the nerve $N(\mathcal{U})$ of the open cover, an abstract simplicial complex with a vertex $v_j$ for each $U_j \in \mathcal{U}$, an edge $v_jv_\ell$ if $U_j \cap U_\ell \neq \emptyset$, a 2-simplex $v_jv_\ell v_m$ if $U_j \cap U_\ell \cap U_m \neq \emptyset$, and so on. Then, we use a representative $\eta$ of a suitable class $[\eta] \in H^k(N(\mathcal{U}); R)$, in our cases corresponding to characteristic classes in $H^k(M;R) \ \cong \ H^k(N(\mathcal{U}); R)$, to construct transition functions $\Omega_{j \ell} \colon U_j \cap U_\ell \to GL(k, \mathbb{R})$ and define $f_\ell \colon U_\ell \to V_k(\mathbb{R}^{J\cdot k})$ by $f_\ell(b) = \left[  \sqrt{\phi_j(b)}\Omega_{j\ell}(b) \right]_{j=1}^J$.

Lastly, this collection of local lifts sending $b \in U_\ell \subset M$ to $f_\ell(b)$ constitutes a one-to-many map $f \colon M \to V_k(\mathbb{R}^{J\cdot k})$. Because $b$ may lie in distinct $U_j$, this map is only well-defined up to right multiplication by elements from $O(k)$ but upon passing to the quotient of $V_k(\mathbb{R}^{J\cdot k})$ by $O(k)$, namely $Gr(k, \mathbb{R}^{J\cdot k})$, one obtains the classifying map for the given bundle. In other words, we have constructed an embedding $f$ of the data such that the following diagram commutes
\begin{align}
\xymatrix{
    M \ar@{.>}[r]^-f \ar[dr]_-g & V_k(\mathbb{R}^{J\cdot k})  \ar[d]\\
    & Gr(k, \mathbb{R}^{J\cdot k}) \ar[r]^\subset & Gr(k, \mathbb{R}^{\infty})
}
\end{align}
where the vertical map is the natural projection given by the quotient of the $O(k)$ action and $g$ is the classifying map of the bundle. We now consider three examples of this data modality.

\subsubsection{The M{\"o}bius bundle}
Let $M = \mathbb{R}/\mathbb{Z} = S^1$ be the circle with circumference 1, $X \subset S^1$ a discrete data set sampled from $M$, and consider the M{\"o}bius bundle over $M$, a rank-1 vector bundle. Let $\mathcal{U} = \sett{U_j}$ be the trivializing open cover of $\mathbb{R}/\mathbb{Z}$ given by open sets of the form $U_j = (\frac{j-1}{25}, \frac{j+1}{25}) \mod 1$ for $1 \leq j \leq 25$ and define a partition of unity $\sett{\phi_j}_{j=1}^{25}$ subordinate to $\mathcal{U}$. In this example, transition functions $\Omega_{j \ell} \colon U_j \cap U_\ell \to O(1) = \sett{\pm 1} \subset GL(1, \mathbb{R})$ are given as $\Omega_{j\ell} = 1$ if $\ell = j+1\mod 25$, $\Omega_{j\ell} = -1$ if $\ell=j-1\mod 25$, and 0 otherwise.

For this series of experiments, we sampled $20$ different sets of $1000$ points drawn from a random uniform distribution in $M = \mathbb{R}/\mathbb{Z}$ to produce a point cloud $X$ comprising $1000$ points in $V_k(\mathbb{R}^{J\cdot k})$ = $V_1(\mathbb{R}^{25})$. Hypothesizing that $X$ should live near a nonlinearly embedded copy of $S^1$ in $V_1(\mathbb{R}^{25})$, we then performed our dimensionality reduction pipeline with $k=1$, $N=25$, and $n=2$. The left-hand panel of Figure \ref{fig_mobius_combined} summarizes the results of 20 trials using 1) a randomly chosen $\alpha$, denoted $\alpha_{rand}$, 2) $\alpha_{PCA}$, and 3) $\alpha_{GD}$, and statistical measurements of the distributions.\\

\indent These data demonstrate a clear improvement in mean squared error between $\alpha_{PCA}$ and $\alpha_{GD}$. This improvement of about 12\% in MSE (average MSE of 1.454 for $\alpha_{PCA}$ versus 1.276 for $\alpha_{GD}$) can have significant consequences in terms of data recovery, as we now explore. Our pipeline can be summarized by the following commutative diagram.

\begin{align}\label{eqn_pipeline_for_vector_bundle_reduction}
    \xymatrix{
    X \subset \mathbb{R}/\mathbb{Z} \ar@{.>}^-f[r] \ar^-g[dr] & V_1(\mathbb{R}^{25}) \ar^-{\mod O(1)}[d]& V_1(\mathbb{R}^2) \ \cong \ \mathbb{R}/\mathbb{Z} \ar_-\alpha[l] \ar^-{\mod O(1)}[d]\\
    & Gr(1, \mathbb{R}^{25}) & Gr(1, \mathbb{R}^2) \ \cong \ \mathbb{R}P^1 \ \cong \ \mathbb{R}/\mathbb{Z} \ar_-\alpha[l]
    }
\end{align}

In this example, the target(s) of the PSC pipeline, namely $V_1(\mathbb{R}^2)$ and $Gr(1, \mathbb{R}^2)$, are isomorphic to the circle $S^1$. Since our data was sampled from $S^1$, we can compare each data point's original ``ground truth'' angular coordinate to its recovered angular coordinate in $V_1(\mathbb{R}^2)$ and $Gr(1, \mathbb{R}^2)$. Note that 1) because $f$ is a one-to-many lift of the continuous image of $S^1$ in $Gr(1, \mathbb{R}^{25})$ we only expect a 1-1 recovery of the original data upon passing to the relevant Grassmannian manifold, and 2) we are able to perform dimensionality reduction to $Gr(1, \mathbb{R}^2)$ thanks to the equivariance of the PSC pipeline. The right-hand panel of Figure \ref{fig_mobius_combined} shows this coordinate recovery for one particular trial, and should be interpreted similarly to Figure \ref{fig_neuro_ex_coord_recovery_with_pipeline} where a line of slope 1 or -1 corresponds to ``perfect'' recovery of the data up to orientation (clockwise or anticlockwise) and a uniform translation. 

\begin{figure}
    \centering
    \includegraphics[width=0.8\linewidth]{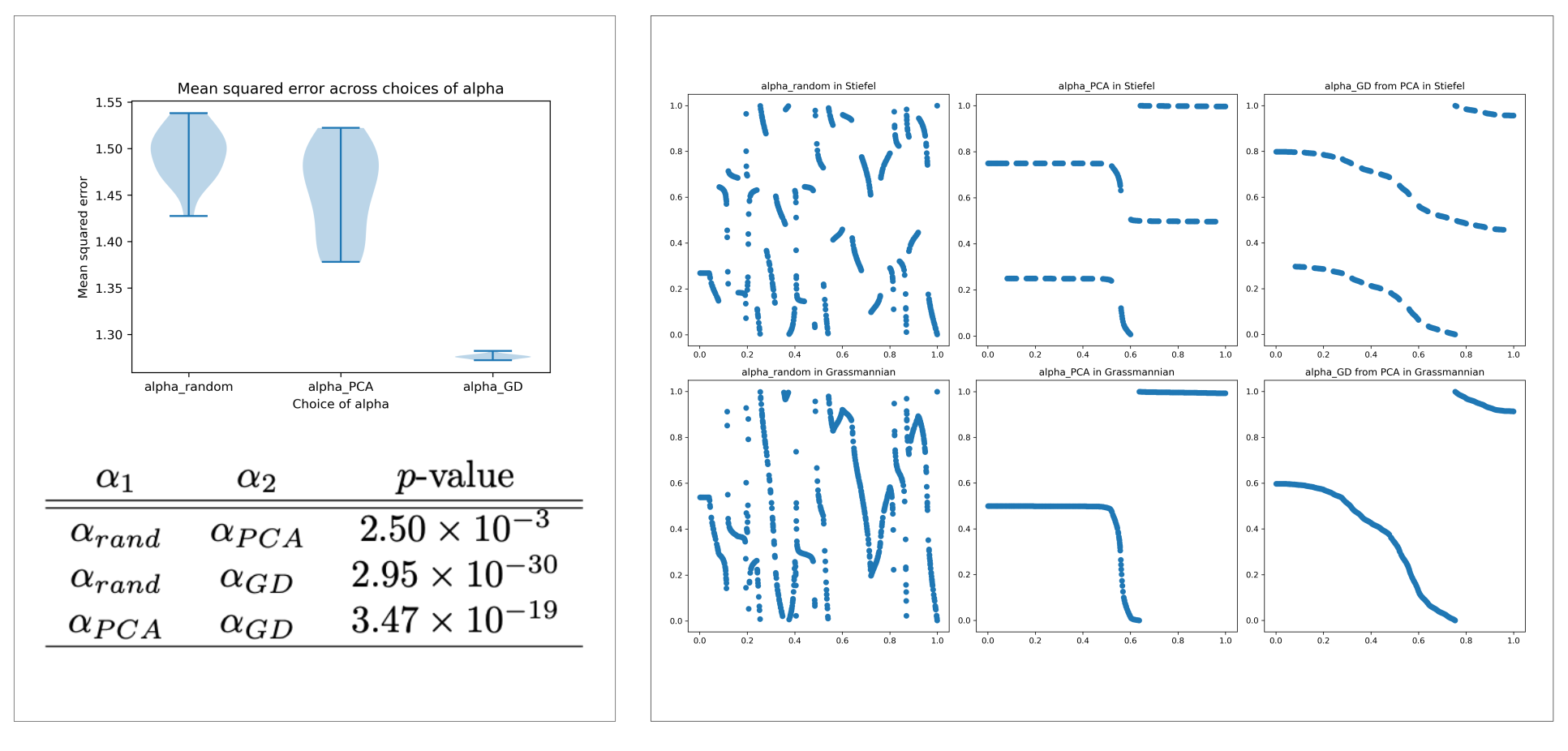}
    \caption{(Left-hand panel, top) Violin plots of MSE of $\pi_\alpha$ for various choices of $\alpha$ across 20 trials. (Right-hand panel, bottom) Table of $p$-values of independent T-tests for all possible pairs of $\alpha$, under the null hypothesis that the MSE scores are drawn from distributions with equal expected value. With a significance threshold of 0.05, we reject the null hypothesis in all cases.(Right-hand panel, top row) Plots of ground truth angular coordinate (x-axis) in $S^1$ against recovered angular coordinate (y-axis) in $V_1(\mathbb{R}^2)$ via the PSC pipeline of a typical data set of 1000 points sampled uniformly from $S^1$, for varying choices of $\alpha$. (Right-hand panel, bottom row) Same as top row, but with y-axis replaced by recovered angular coordinate in $Gr(1, \mathbb{R}^2)$.}
    \label{fig_mobius_combined}
\end{figure}

As is evident from the right-hand panel of Figure \ref{fig_mobius_combined}, coordinate recovery in $Gr(1, \mathbb{R}^2)$ is significantly better (i.e., closer to a line of slope $\pm 1$) for $\alpha_{GD}$ than all other choices of $\alpha$, highlighting that the improvements in MSE on the order of just 10\% obtained by $\alpha_{GD}$ are correlated with more faithful downstream representations of the original data. 

\subsubsection{Whitney sums and the torus}
The above construction for $S^1$ admits a generalization to an infinite family of examples. Let $\tau\to S^1$ be the M\"obius band bundle of the previous section, and let $p_i:S^1\times S^1\to S^1$ be the projection to the $i$th factor. We may form the Whitney sum of $p_1^*(\tau)\oplus p_2^*(\tau)$, a rank-2 bundle. For $\mathcal{U}$ a trivializing open cover for $\tau\to S^1$, the Cartesian product $\mathcal{U}\times \mathcal{U}$ constitutes a trivializing open cover for $p_1^*(\tau)\oplus p_2^*(\tau)$.

It is also useful that such sums (see \cite[Example 10.7]{leesmoothmanifolds}) have transition functions which are diagonal matrices. Let $\Omega^1_{jl}, \Omega^2_{km}\in O(1)$ denote distinct transition functions for $\tau$ on $S^1$, defined on $U_j\cap U_l$ and $U_k\cap U_m$ respectively. Then $\widetilde{\Omega}_{jklm}\in O(2)$ is a transition function on the Whitney sum on the intersection $(U_j\cap U_k)\times (U_l \cap U_m)$. The value on the first entry of the diagonal of $\widetilde{\Omega}_{jklm}\in O(2)$ at some $(x_1, x_2)\in S^1\times S^1$ is $\Omega^1_{jl}(x_1)$, and similarly for the second coordinate. 

Lastly, recall that if $\{\phi_k\}_{k=1}^J$ is a partition of unity on a space $X$, then $\{\phi_k\phi_l\}_{k,l = 1}^J$, consisting of all pairwise products of the $\phi_k$, is a partition of unity on $X\times X$. Thus, the data specified in the previous section for $k=1$ is sufficient to determine a rank 2 vector bundle on the torus $T^2=S^1\times S^1$. We may then attempt dimensionality reduction by mapping to $V_2(\mathbb{R}^n)$ for $n\geq 2$. For the uniform distribution, MSE is significantly decreased when fitting $V_2(\mathbb{R}^3) \cong O(3)$ to $T^2$ rather than $O(2)$, while $O(2)$ is more appropriate for an embedded $(p,q)$ curve as matching dimensions would suggest.

In this series of experiments, we took $J = 100$ open sets in our trivializing cover $\mathcal{U}$ of $T^2$, induced by an open cover of $S^1$ with 10 open sets. We then sampled 1000 points on $T^2$ from four different distributions: 1) the uniform distribution on $T^2$, 2) the uniform distribution on a (1,1)-curve, and 3) the uniform distribution on a (1,15)-curve. Here a $(p,q)$-curve wraps $p$ times about the meridian of the torus and $q$ times about the longitude. 

With these generation methods in place, we perform the PSC pipeline. In the case of the uniform distribution on $T^2$, we project to $V_2(\mathbb{R}^3) \ \cong \ O(3)$, while for $(p,q)$-curves we project to $V_1(\mathbb{R}^2)$ only. In each case, we compare mean squared error for gradient descent against mean squared error for PCA. Figure \ref{fig_violin_and_p_vals_t2} shows the aggregate results for a series of 20 trials of each choice of dimension and data generation method. Each example exhibits significant improvements in MSE obtained by $\alpha_{GD}$. The same analysis (omitted here but using code provided in \texttt{t2rank2embedding.ipynb}) in the case of the trivial bundle $\underline{\mathbb{R}}\oplus\underline{\mathbb{R}}\to T^2$ still exhibits a difference in MSE when comparing PCA to gradient descent. 

\begin{figure}[ht]
    \begin{tabularx}{\linewidth}{@{}c X @{}}
    \includegraphics[width=0.5 \linewidth,valign=c]{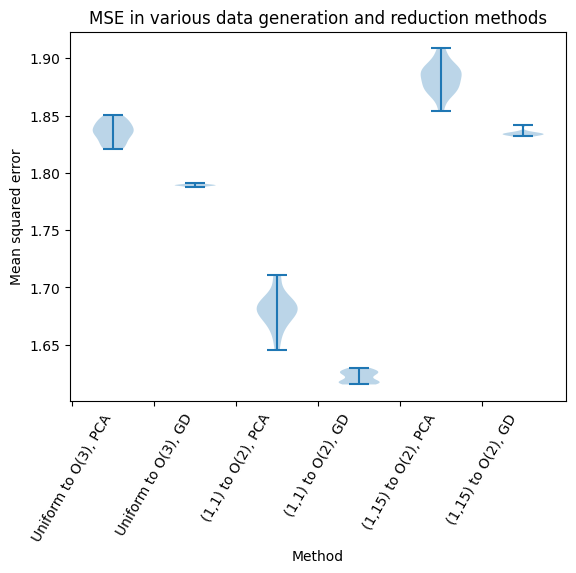}
    & 
\begin{tabular}{@{}ccc@{}}
\toprule
Sample & $n$ in $V_2(\mathbb{R}^n)$ & $p$-value\\
\midrule
Uniform $T^2$ & 3 & $7.81\times 10^{-24}$ \\
$(1,1)$-curve & 2 & $9.24\times 10^{-19}$ \\
$(1,15)$-curve & 2 & $7.54\times 10^{-19}$\\
\bottomrule
\end{tabular}
    \end{tabularx}
\caption{(Left) Violin plots of MSE of $\pi_\alpha$ for various situations across 20 trials. (Right) Table of $p$-values of independent T-tests comparing total MSE using PCA vs total MSE using gradient descent, under the null hypothesis that the MSE scores within each sampling method for each dimensional reduction method are drawn from distributions with equal expected value. At a significance threshold of 0.05, we reject the null hypothesis in all cases. }
\label{fig_violin_and_p_vals_t2}
\end{figure}

\subsubsection{The tangent bundle on \texorpdfstring{$S^2$}{S2}}
Lastly, we consider the (rank-2) tangent bundle $TS^2$ on $S^2$. We begin by constructing an explicit representative of the Euler class of the tangent bundle $TS^2$, which corresponds to $2 \in \mathbb{Z}$ under the identification $H^2(S^2; \mathbb{Z}) \ \cong \ \mathbb{Z}$. Provided we can construct a trivializing open cover $\mathcal{U}$ whose nerve $N(\mathcal{U})$ has no 3-simplices, $\eta$ can be taken to have the value $2$ on an arbitrary 2-simplex of $N(\mathcal{U})$ and the value $0$ on all others. To construct such a $\mathcal{U}$, we sample points $\sett{v_j}_{j=1}^J$ on $S^2$ using the Fibonacci lattice and then construct a triangulation of $S^2$ using the $v_j$'s as vertices. Next, we let $U_j$ be the open star of $v_j$, i.e., the interiors of all simplices of the triangulation that contain $v_j$ as a face, along with $v_j$ itself and let $\mathcal{U} = \sett{U_j}_{j=1}^J$. By construction, the nerve of $\mathcal{U}$ is precisely the triangulation of $S^2$ we began with, so has no 3-simplices.

Next, we replace $\eta \in Z^2(N(\mathcal{U}); \mathbb{Z})$ with its harmonic representative $\theta \in Z^2(N(\mathcal{U}); \mathbb{R})$ in order to produce smoother transition functions $\Omega_{j\ell} \in O(2) \subset GL(2, \mathbb{R})$. For details on this procedure, see \cite[Section 6]{perea2018multiscale}. In brief, let $\iota_\ast \colon Z^2(N(\mathcal{U}); \mathbb{Z}) \to Z^2(N(\mathcal{U}); \mathbb{R})$ be the map induced by the inclusion $\iota \colon \mathbb{Z} \to \mathbb{R}$. Let $\delta \colon C^1(N(\mathcal{U}); \mathbb{R}) \to C^2(N(\mathcal{U}); \mathbb{R})$ be the boundary map of simplicial cochains, $\delta^+$ its Moore-Penrose pseudoinverse, and define $\nu = \delta^+(\eta)$. Then $\theta$ is given by $\theta = \iota_\ast(\eta) - \delta(\nu)$ and transition functions are given by the following, for $b \in U_r \cap U_s$.

\begin{align}
    \Omega_{j\ell}(b) = \begin{bmatrix}
        \cos\left(2\pi(\nu_{j\ell} + \sum_m\phi_m(b)\theta_{mj\ell}) \right) & -\sin\left(2\pi(\nu_{j\ell} + \sum_m\phi_m(b)\theta_{mj\ell}) \right)\\
        \sin\left(2\pi(\nu_{j\ell} + \sum_m\phi_m(b)\theta_{mj\ell}) \right) & \cos\left(2\pi(\nu_{j\ell} + \sum_m\phi_m(b)\theta_{mj\ell}) \right)
    \end{bmatrix}
\end{align}

For this series of experiments, we took $J = 100$ open sets in our trivializing cover $\mathcal{U}$ and sampled 1500 points on $S^2$ from three different distributions: 1) the uniform distribution on $S^2$, 2) Gaussian distributions around two orthogonal great circles on $S^2$, and 3) Gaussian distributions around three different centers on $S^2$ to obtain, in each case, point clouds in $V_2(\mathbb{R}^{100\cdot 2}) = V_2(\mathbb{R}^{200})$. Given that $S^2$ is a 2-dimensional manifold and $V_2(\mathbb{R}^n)$ is of dimension $2n-3$, we chose the least such $n$ for which $2n-3 \geq 2$ for the PSC pipeline, i.e., $n=3$. Figure \ref{fig_violin_and_p_values_sphere_tanget_bundle} shows the aggregate results analogous to Figure \ref{fig_mobius_combined} for a series of 10 trials of each distribution. In all cases, $\alpha_{GD}$ achieves a significantly lower average MSE than $\alpha_{PCA}$.

\begin{figure}
    \centering
    \includegraphics[width = 0.75\textwidth]{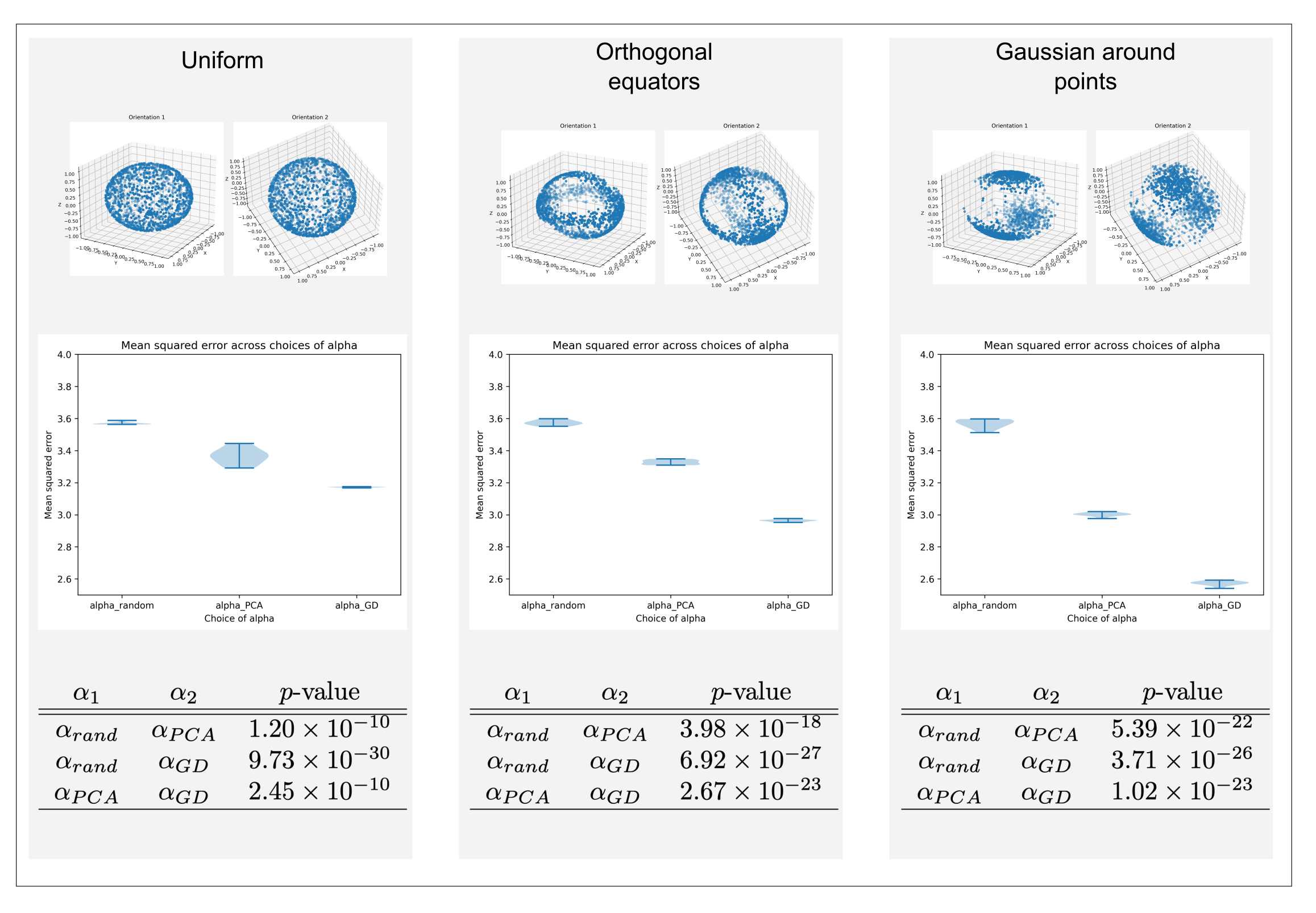}
    \caption{(Top row) Examples of the three distributions in $S^2$ from which data was sampled for this series of experiments. (Middle row) Violin plots of MSE of $\pi_\alpha$ for various choices of $\alpha$ across 10 trials, for each distribution. (Bottom row) Tables of $p$-values of independent T-test for all possible pairs of $\alpha$, under the null hypothesis that the MSE scores are drawn from distributions with equal expected value. With a significance threshold of 0.05, we reject the null hypothesis in all cases.}
    \label{fig_violin_and_p_values_sphere_tanget_bundle}
\end{figure}

\section*{Acknowledgments}

This material is based upon AMS Mathematics Research Communities (MRC) work supported by the National Science Foundation under Grant Number DMS-1916439. 
H. Lee's work was partially completed while the author was at Department of Mathematics, University of California Los Angeles, and was partially supported by NSF DMS-1952339. 
J. A. Perea's work was partially supported by the National Science Foundation through grants  CCF-2006661, and CAREER award DMS-2415445. 
N. Schonsheck's work is supported by the Air Force Office of Scientific Research under award FA9550-21-1-0266. This author also wishes to thank Chad Giusti for his continued support.

\bibliographystyle{siamplain}
\bibliography{main}

\begin{thebibliography}{10}

\bibitem{atasoy2016human}
{\sc S.~Atasoy, I.~Donnelly, and J.~Pearson}, {\em Human brain networks
  function in connectome-specific harmonic waves}, Nature Communications, 7
  (2016), p.~10340.

\bibitem{ben-yishaiTheoryOrientationTuning1995}
{\sc R.~{Ben-Yishai}, R.~L. {Bar-Or}, and H.~Sompolinsky}, {\em Theory of
  orientation tuning in visual cortex.}, Proc. Natl. Acad. Sci. USA, 92 (1995),
  pp.~3844--3848.

\bibitem{boumal2023intromanifolds}
{\sc N.~Boumal}, {\em An introduction to optimization on smooth manifolds},
  Cambridge University Press, 2023.

\bibitem{chami2021horoPCA}
{\sc I.~Chami, A.~Gu, D.~P. Nguyen, and C.~R{\'e}}, {\em Horopca: Hyperbolic
  dimensionality reduction via horospherical projections}, in International
  Conference on Machine Learning, PMLR, 2021, pp.~1419--1429.

\bibitem{chaudhuriIntrinsicAttractorManifold2019}
{\sc R.~Chaudhuri, B.~Ger{\c c}ek, B.~Pandey, A.~Peyrache, and I.~Fiete}, {\em
  The intrinsic attractor manifold and population dynamics of a canonical
  cognitive circuit across waking and sleep}, Nature Neuroscience, 22 (2019),
  pp.~1512--1520.

\bibitem{chen2020learning}
{\sc J.~Chen, G.~Han, H.~Cai, D.~Yang, P.~J. Laurienti, M.~Styner, and G.~Wu},
  {\em Learning common harmonic waves on stiefel manifold--a new mathematical
  approach for brain network analyses}, IEEE transactions on medical imaging,
  40 (2020), pp.~419--430.

\bibitem{chikuse2012statistics}
{\sc Y.~Chikuse}, {\em Statistics on Special Manifolds}, Lecture Notes in
  Statistics, Springer New York, 2012.

\bibitem{ransac_choi_kim_yu}
{\sc S.~Choi, T.~Kim, and W.~Yu}, {\em Performance evaluation of ransac
  family}, in Proceedings of the British Machine Vision Conference (BMVC),
  vol.~24, 2009, pp.~1--12.

\bibitem{churchlandNeuralPopulationDynamics2012}
{\sc M.~M. Churchland, J.~P. Cunningham, M.~T. Kaufman, J.~D. Foster,
  P.~Nuyujukian, S.~I. Ryu, and K.~V. Shenoy}, {\em Neural population dynamics
  during reaching}, Nature, 487 (2012), pp.~51--56.

\bibitem{desilvaPersistentCohomologyCircular2011}
{\sc V.~{de Silva}, D.~Morozov, and M.~{Vejdemo-Johansson}}, {\em Persistent
  {{Cohomology}} and {{Circular Coordinates}}}, Discrete Comput. Geom., 45
  (2011), pp.~737--759.

\bibitem{deneveReadingPopulationCodes1999}
{\sc S.~Deneve, P.~E. Latham, and A.~Pouget}, {\em Reading population codes: A
  neural implementation of ideal observers}, Nature Neuroscience, 2 (1999),
  pp.~740--745.

\bibitem{dryden2019shapespaces}
{\sc I.~L. Dryden, K.-R. Kim, C.~A. Laughton, and H.~Le}, {\em Principal nested
  shape space analysis of molecular dynamics data}, Ann. Appl. Stat., 13
  (2019), pp.~pp. 2213--2234.

\bibitem{Fan_Hoffman_matrices}
{\sc K.~Fan and A.~J. Hoffman}, {\em Some metric inequalities in the space of
  matrices}, Proc. Amer. Math. Soc., 6 (1955), pp.~111--116.

\bibitem{fan2022hyperbolic}
{\sc X.~Fan, C.~Yang, and B.~C. Vemuri}, {\em Nested hyperbolic spaces for
  dimensionality reduction and hyperbolic nn design}, in 2022 IEEE/CVF
  Conference on Computer Vision and Pattern Recognition (CVPR), Los Alamitos,
  CA, USA, jun 2022, IEEE Computer Society, pp.~356--365.

\bibitem{ransac_fischler_bolles}
{\sc M.~A. Fischler and R.~C. Bolles}, {\em Random sample consensus: A paradigm
  for model fitting with applications to image analysis and automated
  cartography}, Commun. ACM, 24 (1981), p.~381–395.

\bibitem{fletcher2004principal}
{\sc P.~T. Fletcher, C.~Lu, S.~M. Pizer, and S.~Joshi}, {\em Principal geodesic
  analysis for the study of nonlinear statistics of shape}, IEEE transactions
  on medical imaging, 23 (2004), pp.~995--1005.

\bibitem{gorelick2007actions}
{\sc L.~Gorelick, M.~Blank, E.~Shechtman, M.~Irani, and R.~Basri}, {\em Actions
  as space-time shapes}, IEEE transactions on pattern analysis and machine
  intelligence, 29 (2007), pp.~2247--2253.

\bibitem{harris1992algebraic}
{\sc J.~Harris}, {\em Algebraic Geometry: A First Course}, Graduate Texts in
  Mathematics, Springer, 1992.

\bibitem{Hatcher_VB}
{\sc A.~Hatcher}, {\em Vector Bundles and K-Theory}, Cornell University, 2017.
\newblock Available at \url{https://pi.math.cornell.edu/~hatcher/VBKT/VB.pdf}.

\bibitem{hayat2014automatic}
{\sc M.~Hayat and M.~Bennamoun}, {\em An automatic framework for textured 3d
  video-based facial expression recognition}, IEEE Transactions on Affective
  Computing, 5 (2014), pp.~301--313.

\bibitem{Higham_functions_of_matrices}
{\sc N.~J. Higham}, {\em Functions of matrices}, Society for Industrial and
  Applied Mathematics (SIAM), Philadelphia, PA, 2008.
\newblock Theory and computation.

\bibitem{isl}
{\sc G.~James, D.~Witten, T.~Hastie, and R.~Tibshirani}, {\em Introduction to
  Statistical Learning with Applications in R}, New York: Springer, 2013.

\bibitem{sungkyudryden2012nestedspheres}
{\sc S.~Jung, I.~L. Dryden, and J.~S. Marron}, {\em {Analysis of principal
  nested spheres}}, Biometrika, 99 (2012), pp.~551--568.

\bibitem{jupp_mardia_directional_statistics}
{\sc P.~Jupp and K.~Mardia}, {\em Directional Statistics}, Wiley, 2000.

\bibitem{Kahan11}
{\sc W.~Kahan}, {\em The nearest orthogonal or unitary matrix}, August 2011.
\newblock \url{https://people.eecs.berkeley.edu/~wkahan/Math128/NearestQ.pdf}.

\bibitem{Kang-Xu-Morozov-State_space}
{\sc L.~Kang, B.~Xu, and D.~Morozov}, {\em Evaluating state space discovery by
  persistent cohomology in the spatial representation system}, Frontiers in
  Computational Neuroscience, 15 (2021).

\bibitem{kimRingAttractorDynamics2017}
{\sc S.~S. Kim, H.~Rouault, S.~Druckmann, and V.~Jayaraman}, {\em Ring
  attractor dynamics in the {{Drosophila}} central brain}, Science, 356 (2017),
  pp.~849--853.

\bibitem{leesmoothmanifolds}
{\sc J.~Lee}, {\em Introduction to Smooth Manifolds}, Graduate Texts in
  Mathematics, Springer New York, 2012.

\bibitem{manteContextdependentComputationRecurrent2013}
{\sc V.~Mante, D.~Sussillo, K.~V. Shenoy, and W.~T. Newsome}, {\em
  Context-dependent computation by recurrent dynamics in prefrontal cortex},
  Nature, 503 (2013), pp.~78--84.

\bibitem{mantoux2021understanding}
{\sc C.~Mantoux, B.~Couvy-Duchesne, F.~Cacciamani, S.~Epelbaum, S.~Durrleman,
  and S.~Allassonni{\`e}re}, {\em Understanding the variability in graph data
  sets through statistical modeling on the stiefel manifold}, Entropy, 23
  (2021), p.~490.

\bibitem{minkamle2000}
{\sc T.~Minka}, {\em Automatic choice of dimensionality for pca}, in Advances
  in Neural Information Processing Systems, T.~Leen, T.~Dietterich, and
  V.~Tresp, eds., vol.~13, MIT Press, 2000.

\bibitem{geomstats}
{\sc N.~Miolane, N.~Guigui, A.~L. Brigant, J.~Mathe, B.~Hou, Y.~Thanwerdas,
  S.~Heyder, O.~Peltre, N.~Koep, H.~Zaatiti, H.~Hajri, Y.~Cabanes, T.~Gerald,
  P.~Chauchat, C.~Shewmake, D.~Brooks, B.~Kainz, C.~Donnat, S.~Holmes, and
  X.~Pennec}, {\em Geomstats: A python package for riemannian geometry in
  machine learning}, J. Mach. Learn. Res., 21 (2020), pp.~1--9.

\bibitem{muldoon2016stimulation}
{\sc S.~F. Muldoon, F.~Pasqualetti, S.~Gu, M.~Cieslak, S.~T. Grafton, J.~M.
  Vettel, and D.~S. Bassett}, {\em Stimulation-based control of dynamic brain
  networks}, PLoS computational biology, 12 (2016), p.~e1005076.

\bibitem{perea2018multiscale}
{\sc J.~A. Perea}, {\em Multiscale projective coordinates via persistent
  cohomology of sparse filtrations}, Discrete Comput. Geom., 59 (2018),
  pp.~175--225.

\bibitem{pereaSparseCircularCoordinates2020}
{\sc J.~A. Perea}, {\em Sparse {{Circular Coordinates}} via {{Principal}}
  $\mathbb {Z}$-{{Bundles}}}, in Topological {{Data Analysis}}, N.~A. Baas,
  G.~E. Carlsson, G.~Quick, M.~Szymik, and M.~Thaule, eds., Abel {{Symposia}},
  {Cham}, 2020, {Springer International Publishing}, pp.~435--458.

\bibitem{ransac_pca}
{\sc D.~Pimentel-Alarcon and R.~Nowak}, {\em {Random Consensus Robust PCA}}, in
  Proceedings of the 20th International Conference on Artificial Intelligence
  and Statistics, A.~Singh and J.~Zhu, eds., vol.~54 of Proceedings of Machine
  Learning Research, PMLR, 20--22 Apr 2017, pp.~344--352.

\bibitem{Perea_Polanco}
{\sc L.~Polanco and J.~A. Perea}, {\em {Coordinatizing Data With Lens Spaces
  and Persistent Cohomology}}, in Proceedings of the 31st Canadian Conference
  on Computational Geometry (CCCG 2019), Z.~Friggstad and J.-L.~D. Carufel,
  eds., Edmonton, Alberta, Canada, 2019, pp.~49--59.

\bibitem{ramirez2022subspace}
{\sc D.~Ram{\'\i}rez, I.~Santamar{\'\i}a, and L.~Scharf}, {\em Subspace
  averaging and its applications}, in Coherence: In Signal Processing and
  Machine Learning, Springer, 2022, pp.~259--296.

\bibitem{rybakkenDecodingNeuralData2019}
{\sc E.~Rybakken, N.~Baas, and B.~Dunn}, {\em Decoding of {{Neural Data Using
  Cohomological Feature Extraction}}}, Neural Computation, 31 (2019),
  pp.~68--93.

\bibitem{schneideruschmajew}
{\sc R.~Schneider and A.~Uschmajew}, {\em Convergence results for projected
  line-search methods on varieties of low-rank matrices via Łojasiewicz
  inequality}, SIAM J. Optim., 25 (2015), pp.~622--646.

\bibitem{schonsheck2023spherical}
{\sc N.~C. Schonsheck and S.~C. Schonsheck}, {\em Spherical coordinates from
  persistent cohomology}, J. Appl. Comput. Topol.,  (2023).

\bibitem{scoccola_toroidal}
{\sc L.~Scoccola, H.~Gakhar, J.~Bush, N.~Schonsheck, T.~Rask, L.~Zhou, and
  J.~A. Perea}, {\em {Toroidal Coordinates: Decorrelating Circular Coordinates
  with Lattice Reduction}}, in 39th International Symposium on Computational
  Geometry (SoCG 2023), E.~W. Chambers and J.~Gudmundsson, eds., vol.~258 of
  Leibniz International Proceedings in Informatics (LIPIcs), Dagstuhl, Germany,
  2023, Schloss Dagstuhl -- Leibniz-Zentrum f{\"u}r Informatik,
  pp.~57:1--57:20.

\bibitem{Scoccola_Perea_Approx_and_Discrete}
{\sc L.~Scoccola and J.~A. Perea}, {\em Approximate and discrete euclidean
  vector bundles}, Forum of Mathematics, Sigma, 11 (2023), p.~e20.

\bibitem{scoccola_Perea_FibeRed}
{\sc L.~Scoccola and J.~A. Perea}, {\em {FibeRed: Fiberwise Dimensionality
  Reduction of Topologically Complex Data with Vector Bundles}}, in 39th
  International Symposium on Computational Geometry (SoCG 2023), E.~W. Chambers
  and J.~Gudmundsson, eds., vol.~258 of Leibniz International Proceedings in
  Informatics (LIPIcs), Dagstuhl, Germany, 2023, Schloss Dagstuhl --
  Leibniz-Zentrum f{\"u}r Informatik, pp.~56:1--56:18.

\bibitem{seungSimpleModelsReading1993}
{\sc H.~S. Seung and H.~Sompolinsky}, {\em Simple models for reading neuronal
  population codes.}, Proc. Natl. Acad. Sci. USA, 90 (1993), pp.~10749--10753.

\bibitem{vskoch2022human}
{\sc A.~{\v{S}}koch, B.~Reh{\'a}k~Bu{\v{c}}kov{\'a}, J.~Mare{\v{s}},
  J.~Tint{\v{e}}ra, P.~Sanda, L.~Jajcay, J.~Hor{\'a}{\v{c}}ek,
  F.~{\v{S}}paniel, and J.~Hlinka}, {\em Human brain structural connectivity
  matrices--ready for modelling}, Scientific Data, 9 (2022), p.~486.

\bibitem{tian2021clustering}
{\sc J.~Tian, J.~Zhao, and C.~Zheng}, {\em Clustering of cancer data based on
  stiefel manifold for multiple views}, BMC bioinformatics, 22 (2021),
  pp.~1--15.

\bibitem{JMLR:v17:16-177}
{\sc J.~Townsend, N.~Koep, and S.~Weichwald}, {\em Pymanopt: A python toolbox
  for optimization on manifolds using automatic differentiation}, J. Mach.
  Learn. Res., 17 (2016), p.~1–5.

\bibitem{turaga2011statistical}
{\sc P.~Turaga, A.~Veeraraghavan, A.~Srivastava, and R.~Chellappa}, {\em
  Statistical computations on grassmann and stiefel manifolds for image and
  video-based recognition}, IEEE Transactions on Pattern Analysis and Machine
  Intelligence, 33 (2011), pp.~2273--2286.

\bibitem{turner-evansAngularVelocityIntegration2017a}
{\sc D.~{Turner-Evans}, S.~Wegener, H.~Rouault, R.~Franconville, T.~Wolff,
  J.~D. Seelig, S.~Druckmann, and V.~Jayaraman}, {\em Angular velocity
  integration in a fly heading circuit}, eLife, 6 (2017), p.~e23496.

\bibitem{melba:2022:002:yang}
{\sc C.-H. Yang and B.~C. Vemuri}, {\em Nested grassmannians for dimensionality
  reduction with applications}, Machine Learning for Biomedical Imaging, 1
  (2022), pp.~1--21.

\end{thebibliography}

\appendix

\section{RANSAC}
This is an optional add-on to the initialization described in PSC meant to help in handling outliers. That is, for real-world cases in the presence of noise and where the optimal value of $n$ is unknown, there may be some data points that lie far from the image $\alpha(V_k(\mathbb{R}^n))$ of a low-dimensional Stiefel manifold, even if the majority of the data points do live near $\im(\alpha)$. Since PCA is not robust to outliers, they may limit the efficacy of the initialization $\alpha_{PCA}$. To address this, the algorithm below implements a random sample consensus (RANSAC)-type initialization adapted to the PSC pipeline. While we have not encountered cases where this additional step is needed, \cite{ransac_fischler_bolles} and \cite{ransac_choi_kim_yu} demonstrate the potential utility of such an approach and we have included it for the sake of completeness and user-friendliness. (See also \cite{ransac_pca} for an implementation of RANSAC to construct a robust PCA.) We suppose as given a data set $\mathcal{Y} \subset V_k(\mathbb{R}^N)$ and fix a user-chosen target dimension $n \ll N$ for the dimensionality reduction.

\medskip

\begin{breakablealgorithm}\label{alg_ransac_initialization}
    \caption{RANSAC initialization for the PSC pipeline}
    \begin{algorithmic}[1]
        \STATE Fix a user-defined parameter $p$ and outlier threshold $\tau$, e.g. $p = 99\%$ and $\tau=3$.
        \STATE Uniformly randomly sample $p$ of the initial data points $\mathcal{Y}$ to obtain $\mathcal{Y}'$.
        \STATE Construct $\alpha_{PCA}$ on the set $\mathcal{Y}'$ using Steps 1-4 of PSC.
        \STATE Compute the distance $\|y - \pi_{\alpha_{PCA}}(y)\|_\mathrm{F}$ for each point $y \in \mathcal{Y}'$. Compute mean and standard deviation among those distances. Let $\mathcal{B}$ be the set of points $y$ for which the distance is at least $\tau$ standard deviations away from the mean.
        \STATE Repeat Steps 2-4 with $\mathcal{Y} = \mathcal{Y} - \mathcal{B}$ until Step 4 terminates with $\mathcal{B} = \emptyset$.
        \STATE Use the resulting map $\alpha_{PCA}$ as the initialization for PSC.
    \end{algorithmic}
\end{breakablealgorithm}
\medskip

Depending on user choices, it is possible that Algorithm \ref{alg_ransac_initialization} will remove all data points from consideration in that $\mathcal{Y} - \mathcal{B}$ could be empty. Such a case would suggest that the data does not lie near $\alpha(V_k(\mathbb{R}^n))$, and so $n$ should be larger, or PSC is not an appropriate method with which to analyze the data.

\section{Additional figures for stimulus space model experiment} \label{sec:appendix-stimulus-space-model}

We tried to provide a fair comparison between different methods by applying similar post-processing steps. We first fixed discontinuities based on visual inspection of Figure \ref{fig:comparisons_others_path}. For MDS, $2\pi$ was subtracted from all points above 5.4. For persistent cohomology and $\alpha_{PCA}$, $2\pi$ was added to all points below 1. Afterwards, the same rotating, scaling, and Gaussian smoothing with $\sigma=100$ was applied, yielding Figure \ref{fig:comparison-others} below.

\begin{figure}[ht]
\centering
\includegraphics[width=0.5\linewidth]{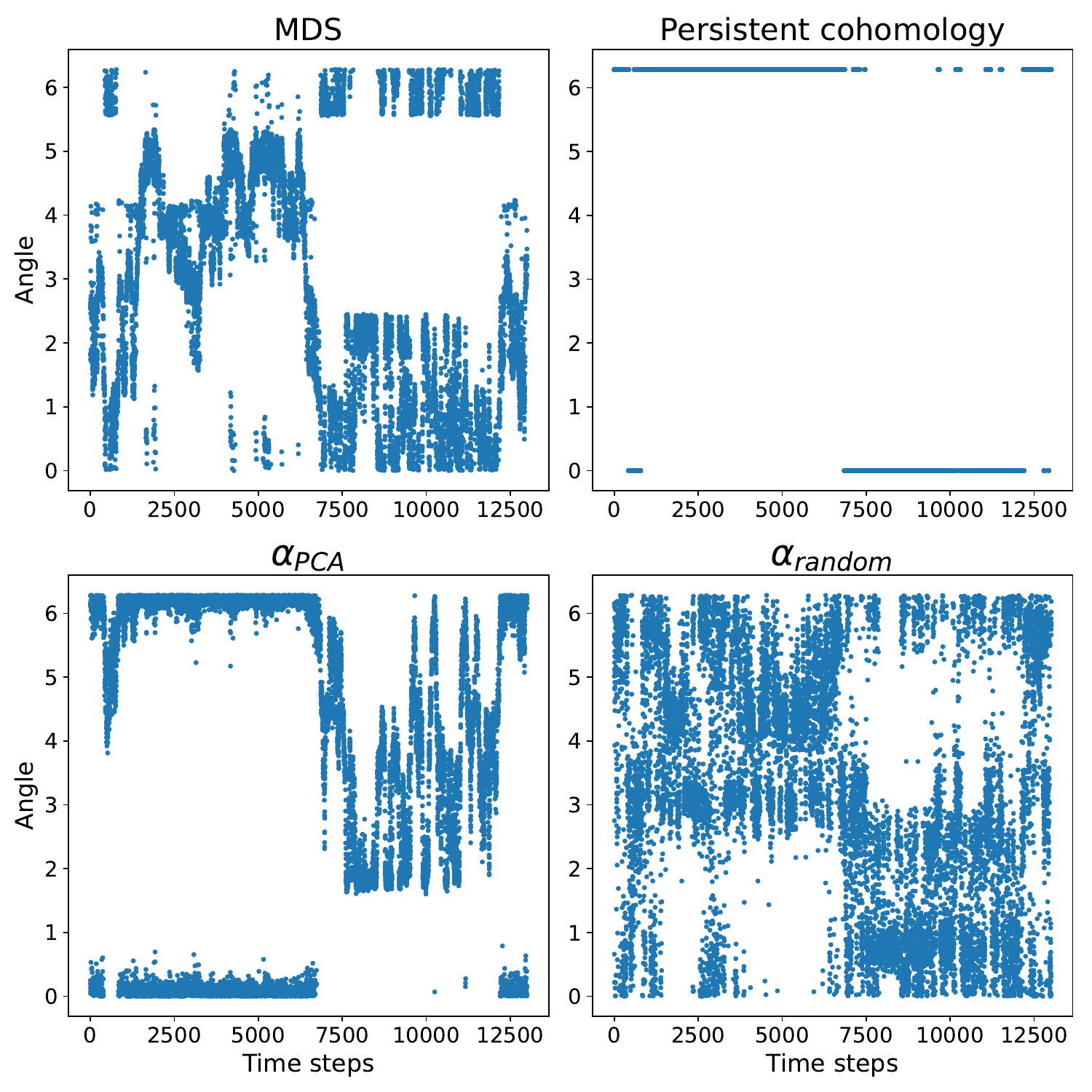}
\caption{Stimulus space model experiment results from multi-dimensional scaling, Persistent cohomology, $\alpha_{PCA}$ and $\alpha_{random}$ without any post-processing or smoothing.}
\label{fig:comparisons_others_path}
\end{figure}

\begin{figure}
    \centering
    \includegraphics[width=0.45\linewidth]{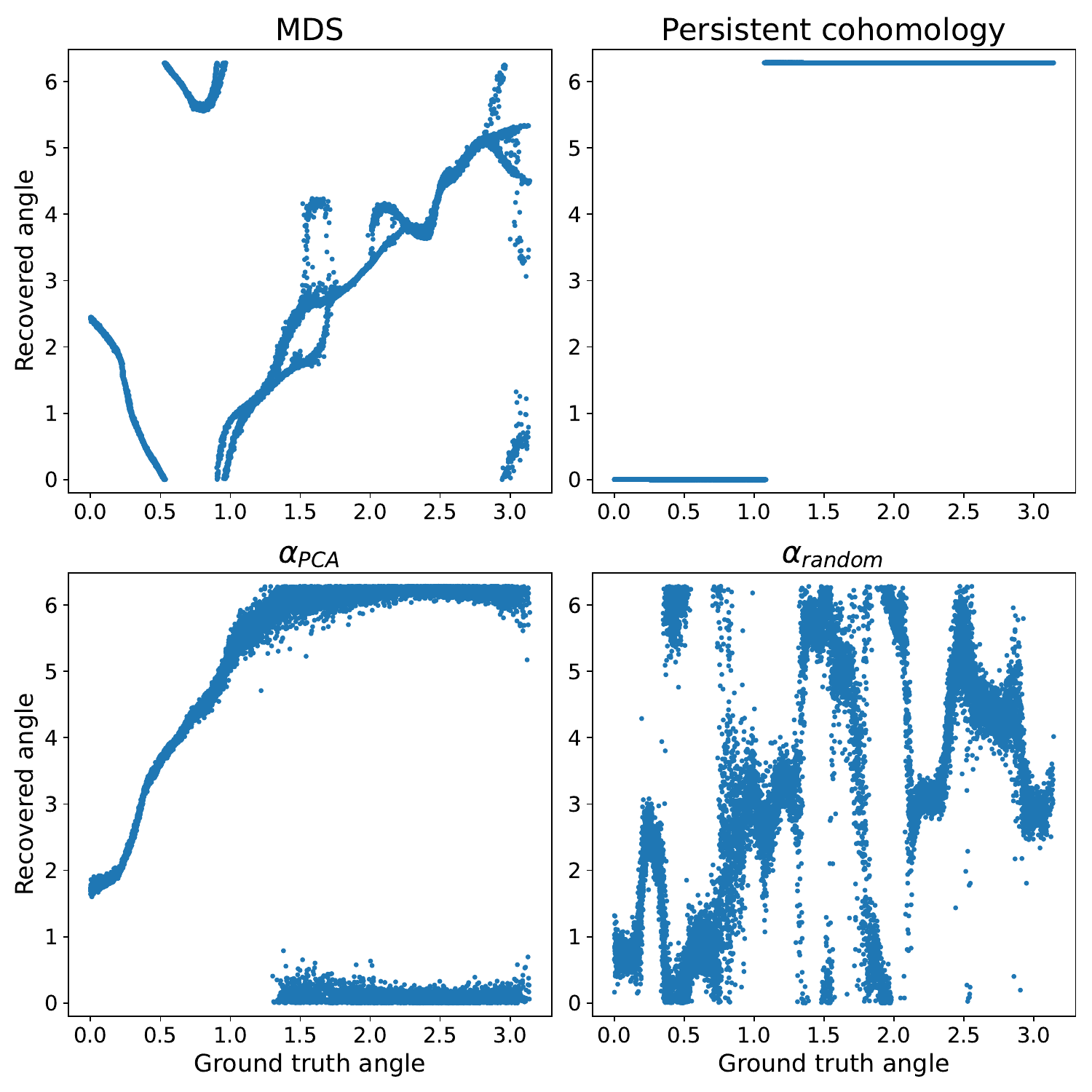}\hfill  
    \includegraphics[width=0.45\linewidth]{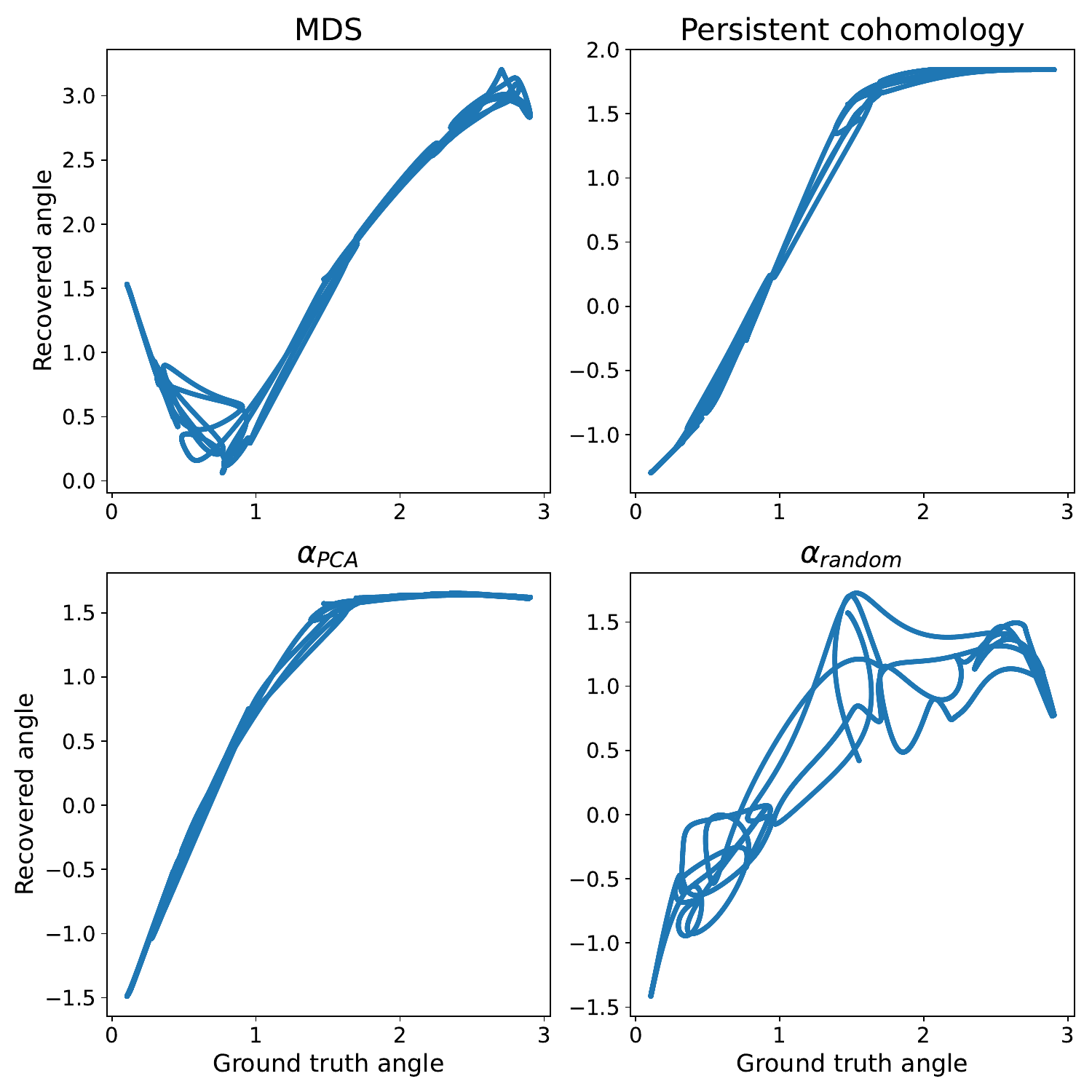}
    \caption{Left: Angular coordinates of a geometric walk on a half-circle against recovered coordinates. Middle: Post-processed and smoothed recovered coordinates against smoothed ground truth angles.}
    \label{fig:comparison-others}
\end{figure}

\section{Efficient implementation}
We made considerable effort to speed up the implementation of our method.

\subsection{Checking rank condition}

The following provides a description of the domain $T_\alpha$ that is amenable to direct computation.
\begin{proposition}\label{prop_alt_description_of_neighborhood}
For a fixed $\alpha \in V_n(\mathbb{R}^N)$ and $y \in V_k(\mathbb{R}^N)$, let $A = \alpha^\top y$. Then $\rank(\alpha^\top y) = k$ if and only if $\mathrm{det}(A^\top A) \neq 0$.    
\end{proposition}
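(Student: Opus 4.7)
The plan is to reduce the rank condition on the rectangular matrix $A = \alpha^\top y \in \RR^{n \times k}$ (with $k \leq n$) to an invertibility condition on the square matrix $A^\top A \in \RR^{k \times k}$, for which the determinant test is immediate. Since $A$ has $k$ columns, the statement $\rank(A) = k$ is equivalent to $A$ having trivial kernel when viewed as a linear map $\RR^k \to \RR^n$. Similarly, $\det(A^\top A) \neq 0$ is equivalent to $A^\top A$ being invertible, which (since $A^\top A$ is square of size $k$) is equivalent to $A^\top A$ having trivial kernel. So the entire proposition reduces to the identity $\ker(A) = \ker(A^\top A)$.

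First I would prove the forward inclusion: if $A x = 0$ then $A^\top A x = A^\top(Ax) = 0$. For the reverse inclusion, I would use the standard inner-product trick: if $A^\top A x = 0$, then
\begin{equation}
0 = x^\top A^\top A x = \langle Ax, Ax\rangle = \|Ax\|_2^2,
\end{equation}
which forces $Ax = 0$. Hence $\ker(A) = \ker(A^\top A)$.

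Finally, I would combine this with rank-nullity. If $\rank(A) = k$, then $\ker(A) = 0$, hence $\ker(A^\top A) = 0$, so the $k \times k$ matrix $A^\top A$ is invertible and $\det(A^\top A) \neq 0$. Conversely, if $\det(A^\top A) \neq 0$, then $\ker(A^\top A) = 0$, hence $\ker(A) = 0$, so rank-nullity gives $\rank(A) = k$.

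There is no real obstacle here: the proposition is a textbook fact about the Gram matrix of a linear map, and the only subtlety worth flagging is the use of the inequality $k \leq n$ implicit in the setup (so that $\rank(A) \leq k$ and the condition $\rank(A) = k$ is meaningful as ``full column rank''). The argument is purely linear-algebraic and does not require any geometric properties of Stiefel manifolds beyond the dimensions of $\alpha$ and $y$.
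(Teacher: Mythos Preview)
Your argument is correct. It differs from the paper's proof in the route taken: the paper argues via singular values and the Spectral Theorem, observing that $\rank(A)$ equals the number of nonzero singular values of $A$, which equals the number of nonzero eigenvalues of the symmetric matrix $A^\top A$, and then uses that $\det(A^\top A) = \prod_{i=1}^k \lambda_i$. Your approach instead establishes the kernel identity $\ker(A) = \ker(A^\top A)$ directly via the inner-product trick and invokes rank--nullity. Your version is slightly more elementary in that it avoids any appeal to the Spectral Theorem or SVD; the paper's version has the minor advantage of tying the result back to singular values, which is thematically consistent with how $\pi_\alpha$ is constructed elsewhere in the paper. Both are standard and equally valid.
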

\begin{proof}
    We know that $\rank(A)$ is equal to the number of nonzero singular values of $A$, which in turn is equal to the number of nonzero eigenvalues of $A^\top A$. Since $A^\top A$ is symmetric, it follows from the Spectral Theorem that it has $k$ (not necessarily distinct) eigenvalues $\lambda_1, \lambda_2, \ldots \lambda_k$. Hence, $\rank(A) = k$ if and only if none of the $\lambda_i$ are zero. The Proposition follows from the fact that the determinant of $A^\top A$ is equal to $\prod_{i=1}^k\lambda_i$.
\end{proof}

Given a data set $\mathcal{Y} \subset V_k(\mathbb{R}^N)$ and fixed $\alpha \in V_n(\mathbb{R}^n)$, to determine if any of the points $y \in \mathcal{Y}$ lie outside of the domain, one can precompute the matrix product $\alpha \alpha^\top$ and then iteratively compute the determinant of $y^\top (\alpha \alpha^\top) y$. In practice, we find that such a computation is faster than an explicit rank computation in some cases; see Figure \ref{fig:runtime}. We varied the number of points as well as the ambient space dimension $N$ between [10, 50, 100, 500, 1000]. $n$ was fixed at $0.2N$ and $k=0.1N$. We checked for rank condition using both rank of $\alpha^\top y$ and determinant of $y^\top\alpha\alpha^\top y$, and also computed $\alpha_{GD}$ using the same data (Figure \ref{fig:runtime_psc}). This set of experiments were done on a standard MacBook Pro 2023 with 16GB memory. 

\begin{figure}[htp]
    \centering
    \includegraphics[width=0.3\linewidth]{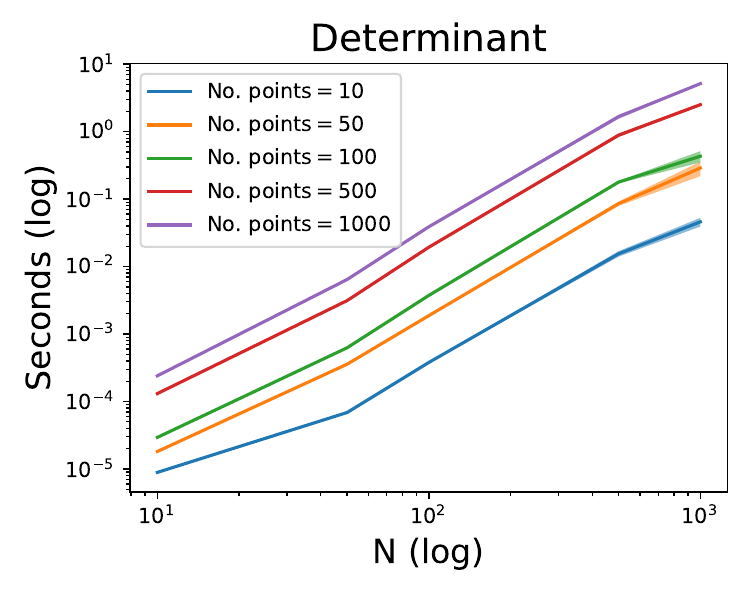}%
    \includegraphics[width=0.3\linewidth]{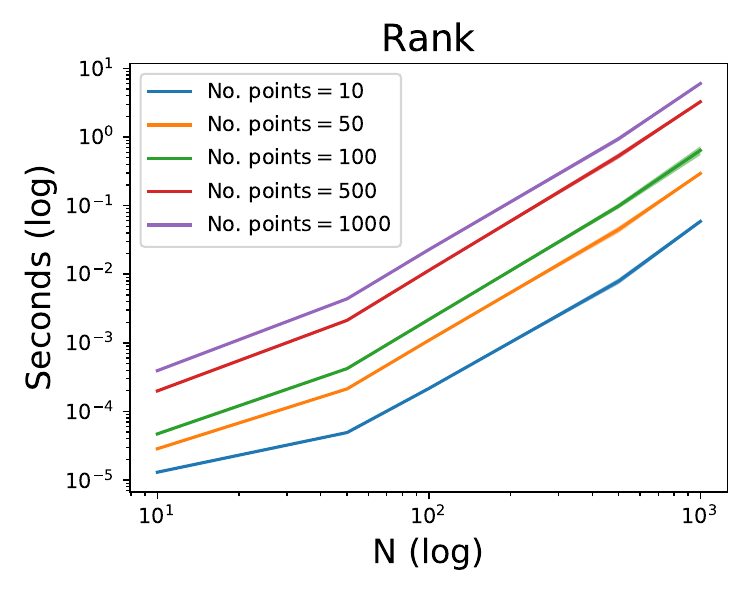}
    
    \includegraphics[width=0.3\linewidth]{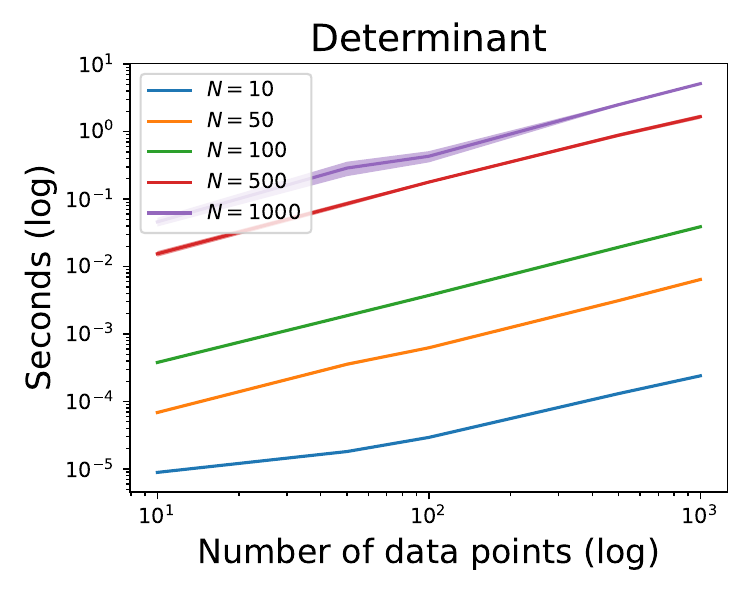}%
    \includegraphics[width=0.3\linewidth]{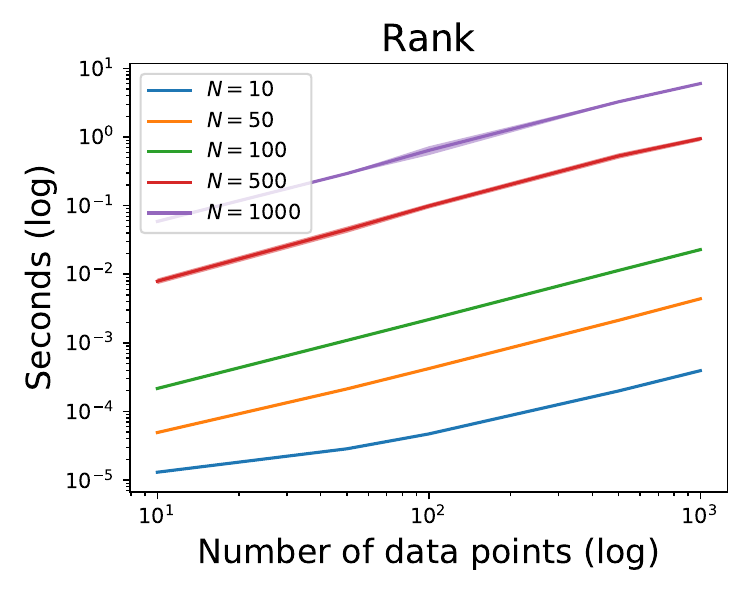}
    
    \caption{Runtime of checking rank condition. Mean and standard deviation calculated over 5 runs.}
    \label{fig:runtime}
\end{figure}

\subsection{\texorpdfstring{$\alpha_{PCA}$}{alpha PCA}}
A faster and more memory-efficient way to derive $\alpha_{PCA}$ is to concatenate the $y$'s horizontally to get a matrix of size $N\times k|\mathcal{Y}|$. We then apply singular value decomposition, select $n$ of the left singular vectors that correspond to the $n$ highest singular values, and set them to be columns of $\alpha_{PCA}$.

\subsection{Utilizing Python numpy}

\begin{figure}
    \centering
   \includegraphics[width=0.3\linewidth]{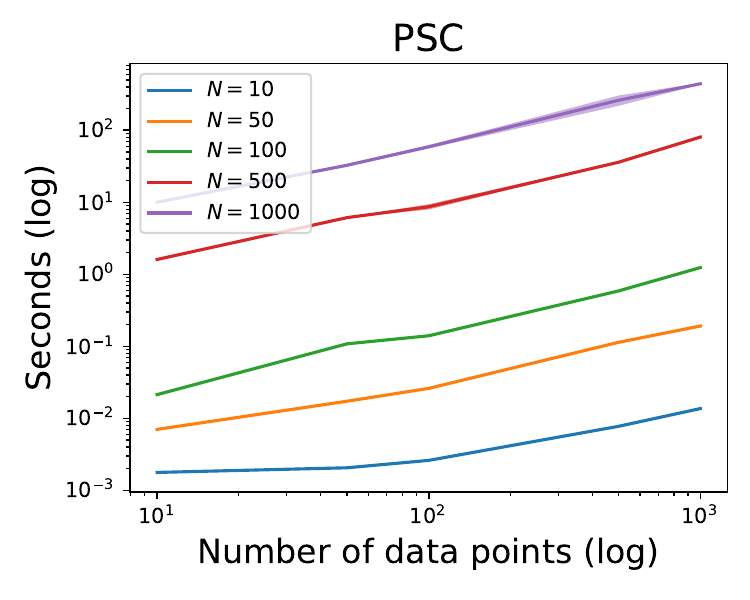}
      \includegraphics[width=0.3\linewidth]{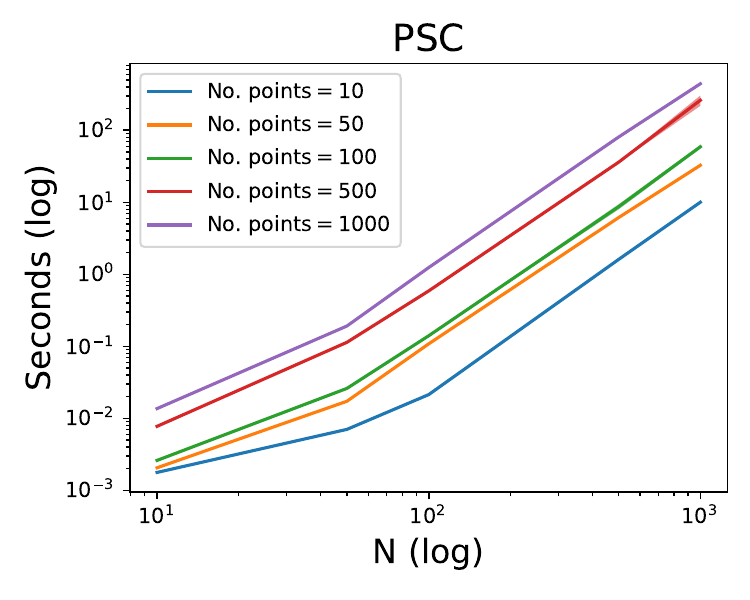}
      \caption{Runtime of calculating $\alpha_{GD}$. Mean and standard deviation calculated over 3 runs.}
    \label{fig:runtime_psc}
\end{figure}
We are able to avoid many for-loops using vectorized operations in numpy. Since most speed-ups happen behind the scenes of numpy and pymanopt, it is difficult to provide a detailed theoretical analysis. However, for reference, we share in Table \ref{tab:runtime} how long each calculation took using the stimulus space model data, where $|\mathcal{Y}| = 13,000, N=100, n=2, k=1$. See notebook \texttt{test\_faster\_PSC.ipynb} for more details, including hardware information.

\begin{table}[h]
    \centering
    \footnotesize
    \begin{tabular}{ccccc}
    \toprule
         & $\alpha_{PCA}$ & $\alpha_{GD}$ &$\pi_{\alpha}$ & $\hat{y}_{\alpha}$\\\midrule
       Naive  & 1.85 s ± 476 ms &345.68 seconds&276 ms ± 23 ms&313 ms ± 81.3 ms\\
       Efficient  & 1.17 s ± 324 ms &8.93 seconds&43.7 ms ± 748 µs&27.8 ms ± 898 µs\\\midrule
      Speedup &36\%&97\%&84\%&91\%\\\bottomrule
    \end{tabular}
    \caption{How long each calculation takes using the stimulus space model data, where $|\mathcal{Y}| = 13,000, N=100, n=2, k=1$. Averaged over 7 runs.}
    \label{tab:runtime}
\end{table}

\end{document}